\newtheorem{theorem}{Theorem}[section]
\newtheorem{lemma}[theorem]{Lemma}
\newtheorem{corol}[theorem]{Corollary}
\newtheorem*{prop*}{}
\theoremstyle{definition}
\newtheorem{definition}[theorem]{Definition}
\newtheorem{remark}[theorem]{Remark}
\newcommand*\tho{\text{\thorn}}
\newcommand{\vphi}{\varphi} 
\newcommand{\nn}{{\nonumber}}
\newcommand{\kap}{{\kappa}}
\newcommand{\del}{{\delta}}
\DeclareMathOperator{\Tr}{Tr}
\newcommand*\DE{\ensuremath{\mathscr{D}}}
\newcommand*\de{\ensuremath{\textnormal{d}}}
\begin{document}

%Title of paper
\title{Gauge fields with vanishing scalar invariants}

% repeat the \author .. \affiliation  etc. as needed
% \email, \thanks, \homepage, \altaffiliation all apply to the current
% author. Explanatory text should go in the []'s, actual e-mail
% address or url should go in the {}'s for \email and \homepage.
% Please use the appropriate macro foreach each type of information

% \affiliation command applies to all authors since the last
% \affiliation command. The \affiliation command should follow the
% other information
% \affiliation can be followed by \email, \homepage, \thanks as well.
\author{Martin Kuchynka}
\email[]{kuchynkm@gmail.com}
%\homepage[]{Your web page}
%\thanks{}
%\altaffiliation{}
\affiliation{Institute of Mathematics of the Czech Academy of Sciences \v Zitn\' a 25, 115 67 Prague 1, Czech Republic\\
 Institute of Theoretical Physics, Faculty of Mathematics and Physics, Charles University in Prague, \\V Hole\v{s}ovi\v{c}k\'ach 2, 180 00 Prague 8, Czech Republic}

\begin{abstract}
We consider extension of some established techniques of study of tensor fields on Lorentzian manifolds of arbitrary dimension to non-Abelian gauge covariant fields.  These are then applied to study of gauge fields with vanishing scalar curvature invariants and universal Yang-Mills fields, for which all possible higher-order corrections to the Yang-Mills equation identically vanish. 

\begin{description}
\item[DOI]
10.1088/1361-6382/ab4360
\end{description}
\end{abstract}

%\keywords{Suggested keywords}%Use showkeys class option if keyword
                              %display desired
\maketitle

%\tableofcontents

%----------------------------------------------------------------------------------------------------------------------------------
\section{Introduction\label{intro}}
%----------------------------------------------------------------------------------------------------------------------------------
During the 20th century, many sophisticated methods, techniques and formalisms were developed for study of tensor fields on four-dimensional Lorentzian manifolds. 
Some of them were later successfully extended to higher dimensions, amongst which being also algebraic classification of tensors based on null alignment \cite{alignment,review} and theory of balanced tensors \cite{bal1,VSIinHD,typeIIINuniversal,universalMaxwell}. Although very useful in countless applications (see e.g. the review \cite{review} and references therein), both mentioned techniques were thus far applied only to purely tensorial fields. Here, we consider their extension to more general non-Abelian gauge covariant fields. Their possible application in non-Abelian gauge theories is then demonstrated on two problems. 

The first one is a characterization of gauge fields $A_\mu$ with $VSI$ (vanishing scalar invariant) curvature (or field strength) $F_{\mu \nu}$, i.e. such that any scalar polynomial gauge invariant constructed from $F_{\mu \nu}$ and its gauge covariant derivatives of arbitrary order vanishes (cf. definition \ref{CSIVSIdef}). For brevity, we will sometimes refer to such $A_{\mu}$ as a $VSI$ gauge field throughout the paper.  $VSI$ and more general $CSI$\footnote{The corresponding scalar invariants do not necessarily vanish, but are constant.} tensor fields (mainly the Riemann curvature tensor and very recently also $p$-form fields) have already been extensively studied in the literature \cite{bal1,VSIinHD,HDVSI,CSIsoaces,hervikalignment,VSIelmag}. 
Both $VSI$ and $CSI$ fields proved to be useful in numerous applications to theories of gravity \cite{VSIsugra,CSIsugra,typeIIINuniversal,typeIIuniversal,CSIuniversal,Gorilla} and electrodynamics \cite{quantumelmag,universalMaxwell,EMuniversal}  with perhaps the most prominent role playing in the context of universal solutions to higher-order theories.   
It is thus natural to expect $VSI$ gauge fields to play a similar role in the context of higher-order non-Abelian gauge theories - in fact, some aspects of this role of $VSI$ gauge fields are addressed in the second problem studied in the paper. 

Concerning the characterization of $VSI$ gauge fields, we first obtain sufficient conditions for a $VSI$ gauge field corresponding to a general finite-dimensional gauge group (theorem \ref{FVSI}). Then, restricting ourselves to the case of a compact and semi-simple gauge group, we prove that these conditions are also necessary, arriving at the main result of section \ref{VSIcurvature} (theorem \ref{VSIchar}): 
\begin{prop*}[\textbf{Characterization of $VSI$ gauge fields}] Let the gauge group $G$ be compact and semi-simple. Then, a non-vanishing field strength $F_{\mu \nu}$ is $VSI$ if and only if 
$F_{\mu \nu}$ is null and aligned in a degenerate Kundt spacetime.
\end{prop*}
\noindent
This result is very reminiscent of characterization of $VSI$ spacetimes obtained in \cite{VSIinHD} and fully analogous to characterization of $VSI$ $p$-forms in \cite{VSIelmag}. A more detailed discussion can be found in section \ref{VSIcurvature}. Explicit form of gauge fields with aligned null curvature in degenerate Kundt spacetimes (defined e.g. in \cite{VSIelmag}) is then discussed in section \ref{explicitVSI}.

The rest of the paper is then devoted to the second problem, which is a study of \textit{universal} Yang-Mills (YM) fields in arbitrary dimension. 
By a universal YM field, we understand a field strength $F_{\mu \nu}$ of a gauge field $A_\mu$ such that all Lie algebra-valued 1-form polynomials at least quadratic in $F_{\mu \nu}$ or containing its gauge covariant derivatives vanish identically, cf. definition \ref{universaldef}.

Such fields are thus exact solutions not only to the field equation of the Yang-Mills theory, but also of any theory (described by Lagrangian \eqref{theory} at the beginning of section \ref{universalYMfields}) consisting of the Yang-Mills Lagrangian and appropriately defined gauge invariant higher-order corrections. Such corrections appear, most notably, in low-energy effective actions of string theories \cite{witten,tseytlinF1,tseytlinF2,coletti,YMstringcorrections}, where their explicit form may not be even known\footnote{In certain cases, the structure of $\alpha^\prime$ corrections is known at least to some extent. In the case of heterotic superstring theory, the structure of all these corrections is well understood (see e.g. \cite{SSW}, where this knowledge was successfully employed to prove $\alpha^\prime$-exactness of supersymmetric string waves).  On the other hand, in the case of an open string, it is only known that the part of the tree-level effective non-Abelian gauge field Lagrangian depending only on $F_{\mu \nu}$ and not on its gauge covariant derivatives is given by a symmetrized trace of the Born-Infeld Lagrangian \cite{BITseytlin}, while full form of the other parts of the effective Lagrangian containing also derivatives of $F_{\mu \nu}$ or going beyond tree level, remains unknown.}, but also in the context of various generalizations of YM theories (such as Lovelock-type extension of the YM theory, \cite{YMlovelock2} and references therein). 
It is thus appealing to have a method that will, within solutions of given low-energy effective field equations, identify the ones solving all such higher-order theories automatically irrespective of their particular form. 

In the case of Abelian gauge fields, perhaps one of the first results on universality was the observation made already in 30's and 40's  \cite{schrodinger1,schrodinger2} that any null $2$-form $F_{\mu \nu}$ in a flat spacetime solving source-free Maxwell equations in four dimensions is automatically a solution to any source-free non-linear electrodynamics with Lagrangian 
\begin{equation*}
\mathcal{L}(F_{\mu \nu} F^{\mu \nu}, F_{\mu \nu} \star F^{\mu \nu}). 
\end{equation*}
Further significant progress in studying universality of Maxwell fields and their $p$-form generalizations has been made rather recently \cite{quantumelmag,universalMaxwell}, incorporating the methods of null alignment and balanced tensors.
Employing an appropriate extension of these methods for gauge covariant fields, we succeed in obtaining sufficient (along with some necessary) conditions on universality for \textit{null} YM fields (defined e.g. in \cite{trautman1983,tafel} and corresponding to type N fields in classification provided in section \ref{prelim}) analogous to those of \cite{universalMaxwell}. The main result of section \ref{universalYMfields} can be then stated as (theorem\ref{universalthm})
\begin{prop*}[\textbf{Sufficient conditions for universality}] 
An aligned null Yang-Mills field in a degenerate Kundt spacetime of Weyl and traceless Ricci type III is universal and hence solves any theory \eqref{theory}. 
\end{prop*}
\noindent
The theorem thus extends Güven's earlier observation\footnote{Some early comments on universality of YM plane waves in flat space were given by Deser already in \cite{Deser}.} that YM plane waves in certain type N $pp$-wave backgrounds are immune to any polynomial higher-order corrections to YM equations, which was made already in \cite{guven} as a part of proving string $\alpha^\prime$-exactness of plane wave solutions to the ten-dimensional $\mathcal{N}=1$  supergravity. 
Since we do not deal with corrections other than the ones to the YM equations here, it enables us to extend the universality result to a general null YM field in a much broader class of background spacetimes including all $VSI$ spacetimes, all Weyl and traceless Ricci type III $pp$-waves, all Weyl and traceless Ricci type III $CSI$ Kundt spaces and even some non-CSI spacetimes. Explicit form of all these solutions in adapted coordinates is given in subsection \ref{examples}. 

The paper is organized as follows. In section \ref{prelim}, we first briefly review some tensorial notions such as boost weight, algebraic types or balanced tensors and related results. Then, gauge theoretical setting and notation is established. Lastly, extension of the corresponding tensorial notions and results to non-Abelian gauge covariant fields takes place. 
In section \ref{VSIcurvature}, $VSI$ property of gauge fields is studied and compared with already known characterization of $VSI$ metrics \cite{VSIinHD} and $VSI$ $p$-forms \cite{VSIelmag}.  
Section \ref{universalYMfields} is devoted to universal YM fields. Firstly, higher-order corrections to Yang-Mill Lagrangian under consideration are appropriately defined and reduction of the resulting field equations for $CSI$ gauge fields is discussed. 
The rest of the section then consists of study of null YM fields for which all previously defined higher-order corrections identically vanish. 
Explicit examples of universal solutions in adapted coordinates are also provided and followed by discussion of special cases already known in the literature. 
Although many of the results obtained in this section are a mere extension of some of the results already known for Maxwell fields to their non-Abelian counterparts, we believe their proofs provide a nice demonstration of the techniques presented in section \ref{prelim}. 
 In the appendix, further results on $k$-balanced gauge covariant fields can be found as well as GHP form of the YM equation and the Bianchi identity for $F_{\mu \nu}$ together with some of their consequences.

%----------------------------------------------------------------------------------------------------------------------------------
\section{Preliminaries}\label{prelim}
%----------------------------------------------------------------------------------------------------------------------------------

For the purposes of the rest of the paper, we will first recall some basics of algebraic classification of tensors based on null alignment developed in \cite{alignment} and reviewed in \cite{review}. Subsequently, theoretical setting and notation for gauge and gauge covariant fields will be established. 
The rest of the chapter is then devoted to extension of some crucial tensorial notions and related results to gauge covariant fields. 

On a domain of a $d$-dimensional spacetime, $d\geq 4$, with Lorentzian metric $g$ of signature $(-,+,\dots,+)$, we shall consider a \textit{null frame} $\{ e_{(0)} = \ell,  e_{(1)} = n, e_{(i)} = m_{(i)} \}$ consisting of a pair of null vector fields $\ell,n$ and $d-2$ spacelike vector fields $\{ m_{(i)} \}$ with $i = 2,\dots,d-1$. 
These are then subject to the Lorentz invariant orthogonality relations 
\begin{align}\label{ortogonality}
\ell_\mu  n^\mu = 1, \quad  m_\mu^{(i)}  m_{(j)}^\mu = \delta_j^i,
\end{align}
with the rest of the contractions vanishing.
Directional derivatives with respect to the individual null frame vectors are then denoted as $D \equiv \ell^\mu \nabla_\mu$, $\Delta \equiv n^\mu \nabla_\mu$, $\delta_i \equiv m_{(i)}^\mu \nabla_\mu$.

%----------------------------------------------------------------------------------------------------------------------------------
\subsection{Brief review of tensorial algebraic classification}\label{tensclass}
Let us briefly summarize a few basic notions of algebraic classification based on null alignment. 
Cf. also definitions 2.1, 2.2 and 2.6 of \cite{review}. 
 
\subsubsection*{Boost weight}
We say that a quantity $q$ has a boost weight $b$ (in the chosen null frame) if it transforms as $q \mapsto \lambda^b q$ under Lorentz boosts 
$\{ \ell, n, m_{(i)} \} \mapsto \{ \lambda \ell, \lambda^{-1} n, m_{(i)} \}$.

In particular, one can assign boost weight $b(T_{(a) \dots (b)})$ to null frame components $T_{ a \dots b } \equiv T_{\mu \dots \nu} e_{(a)}^\mu \dots e_{(b)}^\nu$ of any tensor $T$, being equal simply to a difference between the total number of 0's and 1's in the null frame indices $a, \dots, b$.   

\subsubsection*{Boost order}
To a tensor $T$, we assign a boost order $\textnormal{bo}(T)$ w.r.t. the null frame $\{ e_{(a)} \}$ as the highest boost weight $b(T_{a \dots b})$ over all its non-vanishing frame components $T_{ a \dots b }$.

In fact, it can be shown (Proposition III.2 of \cite{alignment}) that the values of $\textnormal{bo}(T)$ depends only on the choice of $e_{(0)}$ with the rest of the null frame being arbitrary.   
And, if there exists a vector $\ell \equiv e_{(0)}$ with respect to which $\textnormal{bo}(T) < b_{max}$, where $b_{max}$ is a maximal possible value of boost order of $T$ (depending on rank of $T$ and its symmetries), then $\ell$ is referred to as \textit{aligned null direction} (AND).

\subsubsection*{Boost order decomposition}
At this point, one can perform the boost order decomposition of any tensor $T$
\begin{equation}
T= \sum_b T_{(b)},
\end{equation}
defining the boost weight $b$ part $T_{(b)}$ of $T$ as the sum of all $T_{a \dots b} e^{(a)} \dots e^{(b)}$ with $b(T_{a \dots  b}) = b$. 
The index $b$ then ranges from $-b_{max}$ to $b_{max}$, where the value of $b_{max}$ depends on rank and symmetries of $T$. 

\subsubsection*{Algebraic types}
Given a non-vanishing tensor $T$, its algebraic type can be determined in the following manner:
\begin{itemize}
\item type G: $T_{(b_{max})} \neq 0$ in all null frames; 
\item type I: there is a null frame s.t. $T_{(b_{max})} = 0$; 
\item type II: there is a null frame s.t. $T_{(b)} = 0$ for all $b>0$; 
\item type D: there is a null frame s.t. $T_{(b)} = 0$ for all $b \neq 0$; 
\item type III: there is a null frame s.t. $T_{(b)} = 0$ for all $b\geq 0$; 
\item type N: there is a null frame s.t. $T_{(b)} = 0$ for all $b> -b_{max}$;   
\end{itemize}

In this paper, we will employ the presented algebraic types to classify spacetimes via algebraic types of its Weyl and Ricci tensors. 
In particular, we say that the spacetime is of Weyl type X and traceless Ricci type Y if its Weyl tensor is of type X and traceless part of its Ricci tensor is of type Y.

%----------------------------------------------------------------------------------------------------------------------------------
\subsection{Gauge fields: conventions and useful relations}
Let $G$ denote a real non-Abelian Lie group of a finite dimension and $\mathfrak{g}$ the associated Lie algebra spanned by generators $\{ t_{a} \}$, $a = 1, \dots, \dim \mathfrak{g}$.\footnote{Note the abuse of notation: we use Latin indices for labeling the generators of $G$ as well as for labeling the null frame vectors (for which the indices are running from 0 to  $d-1$).}
The generators of $G$ satisfy the commutation relations
\begin{equation}
[t_a, t_b] = i f\indices{^c_{ab}} t_c,
\end{equation}
where $ f\indices{^c_{ab}}$ are real structure constants of $\mathfrak{g}$.

%----------------------------------------------------------------------------------------------------------------------------------
\subsubsection*{Gauge field and field strength}
Introduction of a $\mathfrak{g}$-valued gauge field $A_{\mu} = A^a_{\mu} t_a$ gives rise to the gauge covariant derivative $\DE_\mu$ associated with $A_\mu$ 
\begin{equation}
\DE_\mu \equiv \nabla_\mu - i[A_\mu,\cdot],
\end{equation}
and the field strength (or curvature) $F_{\mu \nu} \equiv \nabla_{\mu} A_\nu - \nabla_\nu A_\mu -i [A_\mu,A_\nu]$.
The field strength of $A_\mu$ is then subject to the Bianchi identity
\begin{equation}\label{bianchi}
\DE_\mu F_{\nu \rho} + \DE_{\nu} F_{\rho \mu} + \DE_\rho F_{\mu \nu} = 0.
\end{equation}
Under a gauge transformation $U \in G$, the gauge field and its field strength undergo the following change
\begin{align}
A_\mu &\mapsto UA_\mu U^\dagger +i U \partial_\mu U^\dagger, \label{Atrans} \\ 
F_{\mu \nu} &\mapsto U F_{\mu \nu} U^\dagger.\label{Ftrans}
\end{align}

%----------------------------------------------------------------------------------------------------------------------------------
\subsubsection*{Gauge covariant fields}
We will say that a field $T_{\mu \dots \nu}$ is \textit{gauge covariant} if it transforms under any $U\in G$ as 
\begin{equation}\label{g-invariant}
T_{\mu \dots \nu} \mapsto U T_{\mu \dots \nu } U^\dagger.
\end{equation}
Gauge covariant derivative of a gauge covariant quantity $T_{\mu \dots \nu}$ is also gauge covariant, i.e. 
\begin{equation}\label{Dtransform}
\DE_\rho T_{\mu \dots \nu} \mapsto U \DE_\rho T_{\mu \dots \nu} U^\dagger.
\end{equation}
Another useful relation is the commutator relation for gauge covariant derivatives 
\begin{align}\label{gaugecommutator}
\begin{split}
[\DE_\sigma, \DE_\rho] T_{\mu_1 \dots \mu_k} =\ & \sum_i R\indices{^{\beta}_{\mu_i \rho \sigma}} T_{\mu_1 \dots \beta \dots \mu_k}\\ 
&- i[F_{\sigma \rho}, T_{\mu_1 \dots \mu_k}], 
\end{split}
\end{align} 
for any $\mathfrak{g}$-valued tensor $T_{\mu \dots \nu}$. 
If $F$ is a YM field, i.e. a solution to the source-free YM equation $\DE^\mu F_{\mu \nu} = 0$, the Bianchi identity \eqref{bianchi} together with \eqref{gaugecommutator} give us the following relation that will be useful later
\begin{align}\label{YMbox}
\begin{split}
\DE^\alpha \DE_\alpha F_{\mu \nu} =\ &2 R_{\alpha \mu \nu \beta} F^{\alpha \beta} + 2R_{\alpha [\nu}F\indices{_{\mu]}^\alpha}\\ 
&+ 2i [F_{\mu \alpha},F\indices{^\alpha_\nu}].
\end{split}
\end{align}

%----------------------------------------------------------------------------------------------------------------------------------
\subsection{Algebraic classification of gauge covariant fields}
%----------------------------------------------------------------------------------------------------------------------------------

There is a number of approaches to classification of YM fields or more general gauge fields on four-dimensional Lorentzian manifolds proposed in the literature, see e.g. \cite{gaugeclass} and references therein. 
These are based either on classification of polynomial Lorentz/gauge invariants or on the eigenvalue problem for $F_{\mu \nu}$. 
Our approach in this section is based on null alignment and hence is not restricted only to gauge fields and four dimensions. 
Of course, when applied to $F_{\mu \nu}$, there is an overlap of our classification with some of those already proposed for gauge fields in the literature. 
A detailed comparison is out of the scope of the paper, but some aspects of the overlap will be apparent at the end of this section and in section \ref{VSICSIfields}. 

Let $T_{\mu \dots \nu}=T_{\mu \dots \nu}^a t_a$ be a gauge covariant quantity. Given a null frame, one can assign boost order $\textnormal{bo}(T)$ to $T$ as 
\begin{equation}
\textnormal{bo}(T_{\mu \dots \nu}) \equiv \max_{a} \{ \textnormal{bo}(T_{\mu \dots \nu}^a) \},
\end{equation}
where $\textnormal{bo}(T^a)$ is boost order of a tensor $T^a$. 
If $T_{(b)}^a$ denotes the boost weight $b$ part of a tensor $T^a$, then $T_{(b)} \equiv T_{(b)}^a t_{a}$ corresponds to
the (gauge covariant) boost weight $b$ part $T_{(b)}$ of $T$. 

This enables one to classify any gauge covariant field $T$ exactly as in 
section \ref{tensclass}.
It is easy to see that such definition is equivalent to the following, more practical one. 
\begin{definition}
We will say that a gauge covariant tensor $T_{\mu \dots \nu}$ is of algebraic type X if all its Lie algebra components $T_{\mu \dots \nu}^a$ 
are aligned and of type X.   
\end{definition}

\begin{remark}
The notion of algebraic type of gauge covariant quantities is gauge invariant. Indeed, given a frame and $b$, 
$T_{(b)}$ vanishes if and only if $T_{(b)}^\prime$ of a gauge transformed field $T^\prime = U T U^\dagger$ vanishes as well. 
\end{remark}
Hence, algebraic type of a gauge covariant quantity is well-defined in this sense. 
Note that e.g. algebraic type of $A_\mu$ is not defined, since it transforms as \eqref{Atrans}, and hence, in general, the alignment type would not be preserved.

In the paper, we are focused mainly on \textit{null} field strengths, i.e. those of type N. Note that this definition of null field strength is consistent with the one of \cite{trautman1983,tafel}. 
In four dimensions, null solutions of the YM equation have already been studied extensively, see e.g. 
\cite{null1,tafel,null2} and references therein.

%----------------------------------------------------------------------------------------------------------------------------------
\subsection{Balanced gauge covariant fields}

One of the key ingredients in studying invariants, universality and proving various related results for tensor fields on Lorentzian manifolds is the notion of balanced tensors \cite{bal1,VSIinHD,typeIIINuniversal,universalMaxwell} which, in appropriate background 
spacetimes (the so called \textit{degenerate Kundt spacetimes}, see e.g. \cite{hervikKundt}), guarantees that all covariant derivatives of the field are well-behaved. 
Therefore, let us first provide appropriate extension of the notion of $k$-balancedness to gauge covariant fields. 
\begin{definition}
Let $k \in \mathbb{N}_0$. In a frame parallely propagated along a null geodetic affinely parameterized vector field $\ell$, 
a gauge covariant quantity $T$ is said to be $k$-balanced if only its boost weight $b<-k$ parts $T_{(b)}$ are possibly non-vanishing and satisfy $D_A^{-b-k} T_{(b)} = 0$, where $D_A \equiv \ell^\mu \DE_{\mu}$.  If $T$ is $0$-balanced, we will simply say it is balanced.
\end{definition}

\begin{remark}
From \eqref{g-invariant} and \eqref{Dtransform}, it can be seen that the notion of $k$-balancedness is again gauge invariant.
\end{remark}

In the appendix \ref{balancedappendix}, we have proven a number of results on balanced gauge covariant fields. 
Here, we present only the following three statements that will be of useful in the following sections. 
\begin{lemma}\label{balder}
Let $g$ be a degenerate Kundt spacetime. If a field strength $F_{\mu \nu}$ and its derivative $\DE_\rho F_{\mu \nu}$ are both aligned and of type II, 
then gauge covariant derivative of $k$-balanced gauge covariant quantity $T_{\mu \dots \nu}$ is again $k$-balanced. 
\end{lemma}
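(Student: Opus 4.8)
The plan is to reduce the statement to a known tensorial fact by working component-wise in the Lie algebra. Fix a frame parallelly propagated along the affinely parametrised null geodesic $\ell$, expand $T_{\mu\dots\nu}=T^a_{\mu\dots\nu}t_a$, and recall that $k$-balancedness of $T$ means precisely that each $T^a$ has only boost weight $b<-k$ parts and these satisfy $D_A^{-b-k}T_{(b)}=0$, where $D_A=\ell^\mu\DE_\mu$ mixes the Lie-algebra components through the $-i[A_\ell,\cdot]$ term. So the first step is to understand how $\DE_\rho$ acts on the boost-weight decomposition. I would split $\DE_\rho$ into $D_A$, $\Delta_A\equiv n^\mu\DE_\mu$, and $\delta^A_i\equiv m_{(i)}^\mu\DE_\mu$, and track the boost weight each one carries: in a degenerate Kundt spacetime the connection coefficients are such that $D_A$ does not lower boost weight, $\delta^A_i$ lowers it by at most one, and $\Delta_A$ by at most two. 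Hence $\DE_\rho T$ still has only negative-enough boost weight parts (the worst case being $\Delta_A$ acting on the $b=-k-1$ part, which lands at $b=-k-3<-k$), which establishes the first half of $k$-balancedness for $\DE_\rho T$.

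The second step is the differential condition $D_A^{-b-k}(\DE_\rho T)_{(b)}=0$. Here I would commute $D_A$ past the other directional derivatives using the gauge commutator identity \eqref{gaugecommutator}: each commutator $[D_A,\delta^A_i]$ or $[D_A,\Delta_A]$ produces (i) Riemann curvature terms contracted with $T$ and (ii) a term $-i[F_{D\cdot},T]$. The Riemann terms are controlled because a degenerate Kundt spacetime has the appropriate balancedness/boost-weight properties of its curvature tensor (this is exactly what makes the purely tensorial version of the lemma work, cf. \cite{VSIinHD}), so repeated $D_A$ applied to them annihilates them. The new ingredient is the $[F,T]$ terms: this is where the hypotheses on $F_{\mu\nu}$ and $\DE_\rho F_{\mu\nu}$ enter. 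Since $F$ is aligned and of type II, $F$ has only boost weight $\le 0$ parts; since $\DE_\rho F$ is also aligned and type II, $D_A F$ likewise has only boost weight $\le 0$ parts, and in fact one can iterate to see that $D_A^j F$ stays non-positive in boost weight for all $j$. Then $[F_{D\cdot},T]$ has boost weight $<-k$ (sum of a $\le 0$ piece and a $<-k$ piece), and when one hits the identity $D_A^{-b-k}(\DE_\rho T)_{(b)}$ and distributes the $D_A$'s via Leibniz over the bracket $[\,D_A^{i}F\,,\,D_A^{j}T\,]$, every term contains either a sufficiently high power of $D_A$ on $T$ (killing it by $k$-balancedness of $T$) or a factor $D_A^{i}F$ whose boost weight, added to the remaining $D_A^{j}T$ with $j$ too small, still forces the total number of $D_A$'s on the $T$-factor to exceed $-b'-k$ for its boost weight $b'$. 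Bookkeeping the boost weights carefully shows every resulting term vanishes.

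The main obstacle I anticipate is precisely this bookkeeping in step two: organising the iterated commutators so that the mixing of Lie-algebra indices by $A_\ell$, the curvature terms, and the $[F,T]$ terms are simultaneously under control, and verifying that the type II + alignment hypothesis on both $F$ and $\DE F$ is exactly strong enough (and no stronger) to make the $D_A^jF$ factors harmless. A clean way to handle it is to prove the auxiliary fact that $D_A^jF$ is, for every $j$, a gauge covariant quantity all of whose boost weight parts are non-positive — which follows by induction from the two hypotheses plus the Kundt structure — and then argue entirely at the level of boost weights and powers of $D_A$, never needing the explicit connection coefficients. The purely tensorial backbone (how $\DE_\rho$ shifts boost weight in degenerate Kundt, and vanishing of iterated $D$ on curvature) can be quoted from \cite{VSIinHD}; only the gauge-bracket terms require genuinely new work.
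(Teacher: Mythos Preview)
Your strategy is viable in principle but takes a harder road than the paper, and the boost-weight bookkeeping in your first step is garbled. The paper's proof (in Appendix~\ref{balancedappendix}) rests on one simplification you do not use: since $k$-balancedness is gauge invariant, one may fix the \emph{light-cone gauge} $\ell^\mu A_\mu=0$. In that gauge $D_A$ collapses to the ordinary directional derivative $D$, so $k$-balancedness of $T$ becomes literally the tensorial statement that each Lie-algebra component $T^a$ is a $k$-balanced tensor. Then $\DE_\rho T=\nabla_\rho T-i[A_\rho,T]$ splits cleanly: the $\nabla_\rho T^a$ piece is $k$-balanced by the known tensorial lemma for degenerate Kundt spacetimes, and the only extra terms are $A_1^a f^c_{ab}\eta^b$ and $A_i^a f^c_{ab}\eta^b$ (no $A_0$ term). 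These are $k$-balanced scalars precisely when $DA_i=0$ and $D^2A_1=0$, which in light-cone gauge read $F_{0i}=0$ and $DF_{01}=0$; the first is the type~II assumption on $F$, and the second is (via Lemma~\ref{typeIIDF}) exactly the type~II assumption on $\DE F$. No commutator gymnastics with $[D_A,\Delta_A]$ or iterated $D_A^jF$ are needed.

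By contrast, your argument tries to stay gauge-covariant throughout, which forces you to track $D_A,\delta^A_i,\Delta_A$ separately and to control iterated gauge commutators. That can be made to work, but your concrete claims are off: the dangerous case for the boost-order condition is $D_A$ (projection on $\rho=0$), which \emph{raises} boost weight by one, not $\Delta_A$; your sentence ``$\Delta_A$ lowers it by at most two \dots\ lands at $b=-k-3$'' has the direction and the count wrong, and the conclusion $<-k$ is reached for the wrong reason. Your auxiliary fact about $D_A^jF$ having non-positive boost order is correct and is morally the same content as the paper's conditions $F_{0i}=0$, $DF_{01}=0$, but once you allow yourself the light-cone gauge those conditions appear immediately and the whole commutator bookkeeping evaporates.
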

By induction, one has that gauge covariant derivatives of $T$ of arbitrary order are all aligned with $\ell$ and of boost order $b<k$. 
Note that, in the case of $F_{\mu \nu}$ being null, the Bianchi identity \eqref{bianchi} implies $D_A F_{\mu \nu} = 0$. Thus, for an aligned null $F_{\mu \nu}$ in a degenerate Kundt spacetime, conditions of \ref{balder} are automatically satisfied and we conclude
\begin{lemma}\label{balF}
Let $g$ be a degenerate Kundt spacetime. Then, aligned field strength $F$ is balanced if and only if it is null. 
\end{lemma}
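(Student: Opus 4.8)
The plan is to read off both directions of the equivalence directly from the definitions of \emph{balanced} and of \emph{null} (type N), the only substantive input being the elementary fact that for an antisymmetric rank-two tensor the boost weight of any null-frame component lies in $\{-1,0,1\}$, so $b_{max}=1$ (this uses $d\geq 4$, which guarantees a spacelike screen index is available). Throughout, $\ell$ denotes the null congruence of the degenerate Kundt spacetime $g$; being Kundt, $\ell$ is a geodesic, non-expanding, non-shearing and non-twisting null field, which we take affinely parametrized and along which we fix a parallelly propagated null frame. It is with respect to this $\ell$ and this frame that both alignment of $F$ and balancedness are to be tested.

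For the ``only if'' implication, assume $F$ is balanced. By definition only the boost weight $b<0$ part $F_{(b)}$ can be non-zero; since $b\in\{-1,0,1\}$ this forces $F=F_{(-1)}$, i.e.\ every Lie-algebra component $F^{a}$ satisfies $F^{a}=F^{a}_{(-1)}$, which is exactly the statement that each $F^{a}$ is aligned with $\ell$ and of type N, that is, that $F$ is null. (Here ``aligned'' is in fact automatic, since $\textnormal{bo}(F)\leq -1<b_{max}$ already makes $\ell$ an aligned null direction of $F$.)

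For the ``if'' implication, assume $F$ is null and aligned with $\ell$, so that $F=F_{(-1)}$ in the chosen frame. The first requirement of balancedness — that only the $b<0$ part survive — then holds, and it remains to verify the differential condition $D_{A}^{-b-k}F_{(b)}=0$ with $k=0$, $b=-1$, namely $D_{A}F_{\mu\nu}=0$ with $D_{A}=\ell^{\mu}\DE_{\mu}$. This is the assertion recorded just before the lemma and is a consequence of the Bianchi identity \eqref{bianchi}: contracting $\DE_{[\mu}F_{\nu\rho]}=0$ with $\ell^{\mu}n^{\nu}m^{\rho}_{(i)}$ produces $D_{A}F_{1i}$ together with Ricci-rotation terms in which $\DE$ hits the frame vectors; those along $\ell$ drop out because the frame is parallelly propagated, the remaining screen-space contribution vanishes because $\ell$ is non-expanding, non-shearing and non-twisting, and the terms contracting $F$ into $\ell$ vanish because $\iota_{\ell}F=0$ for a null $F$ aligned with $\ell$ — leaving $D_{A}F_{1i}=0$, i.e.\ $D_{A}F_{\mu\nu}=0$. (Equivalently, one may simply invoke the GHP form of the Bianchi identity established in Appendix \ref{balancedappendix}.) Hence $F$ is balanced. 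The single genuinely computational point, and the one I expect to be the main obstacle, is this last extraction of $D_{A}F=0$ from the Bianchi identity and the bookkeeping showing that the Kundt conditions on $\ell$ together with $\iota_{\ell}F=0$ remove every unwanted spin-coefficient contribution; everything else is formal, given $b_{max}=1$.
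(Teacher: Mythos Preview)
Your proof is correct and follows essentially the same approach as the paper: the ``only if'' direction is immediate from $b_{\max}=1$ for a 2-form, while for the ``if'' direction the paper simply asserts (in the text preceding the lemma and again in the appendix) that the Bianchi identity \eqref{bianchi} yields $D_A F_{\mu\nu}=0$ for a null $F$ aligned with a Kundt $\ell$, which is exactly what you unpack. Your expansion of that step---tracing how $\iota_\ell F=0$ and the Kundt optical conditions $\rho_{ij}=0$ kill the residual terms---is a faithful elaboration of the one-line argument in the paper (the minor quibble that the GHP equations live in the second appendix rather than in Appendix~\ref{balancedappendix} is cosmetic).
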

In view of lemma \ref{balder}, one immediately arrives at   
\begin{corol}\label{balFder}
Let $F_{\mu \nu}$ be an aligned null field strength in a degenerate Kundt spacetime. Then, its gauge covariant derivatives of arbitrary order are aligned with $F_{\mu \nu}$ and of type III. 
\end{corol}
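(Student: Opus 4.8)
The plan is to deduce the corollary directly from Lemmas~\ref{balF} and~\ref{balder}, so the argument is mostly bookkeeping once those are in hand. First I would observe that, by Lemma~\ref{balF}, an aligned null $F_{\mu\nu}$ in a degenerate Kundt spacetime is balanced, i.e. $0$-balanced. Then I would check that the standing hypothesis of Lemma~\ref{balder} — that $F_{\mu\nu}$ and $\DE_\rho F_{\mu\nu}$ are both aligned and of type~II — is met, and finally invoke the inductive consequence of Lemma~\ref{balder} with $k=0$ and $T_{\mu\dots\nu}=F_{\mu\nu}$.

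For the hypothesis check: $F_{\mu\nu}$ is null, hence of type~N, hence a fortiori aligned (its aligned null direction being the Kundt direction $\ell$) and of type~II. For its first derivative, recall that a nonvanishing null field strength has $\textnormal{bo}(F)=-1$, and that the Bianchi identity~\eqref{bianchi} gives $D_A F_{\mu\nu}=0$; working in a frame parallely propagated along $\ell$ in a degenerate Kundt spacetime, these two facts kill all frame components of $\DE_\rho F_{\mu\nu}$ of boost weight $\ge 0$, so $\textnormal{bo}(\DE_\rho F_{\mu\nu})<0$ and $\DE_\rho F_{\mu\nu}$ is aligned with $\ell$ and of type~III, in particular of type~II. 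This short frame computation — the same one underlying Lemma~\ref{balF} and already noted in the discussion preceding it — is the only mildly technical point; I expect it, rather than anything downstream, to be the main thing requiring care, since the genuinely hard work is encapsulated in Lemma~\ref{balder} itself (proven in the appendix).

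With the hypothesis verified, Lemma~\ref{balder} applies to any $0$-balanced gauge covariant quantity, so starting from the $0$-balanced $F_{\mu\nu}$ and iterating, $\DE_{\rho_1}\cdots\DE_{\rho_j}F_{\mu\nu}$ is $0$-balanced for every $j\ge 0$; by the induction stated right after Lemma~\ref{balder}, each such iterated derivative is then aligned with $\ell$ and of boost order $<0$. It then remains only to translate this back into algebraic language: a gauge covariant tensor that is aligned with $\ell$ and has boost order $<0$ is exactly one of type~III aligned with $\ell$. Since $F_{\mu\nu}$ is itself null with aligned null direction $\ell$, all its gauge covariant derivatives are therefore aligned with $F_{\mu\nu}$ and of type~III, as claimed.
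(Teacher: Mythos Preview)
Your proposal is correct and follows essentially the same route as the paper: use Lemma~\ref{balF} to see that an aligned null $F_{\mu\nu}$ is balanced, verify the type~II hypothesis on $F$ and $\DE F$ via the Bianchi identity (the paper notes $D_A F_{\mu\nu}=0$ just before Lemma~\ref{balF}, and in the appendix observes that $F_{0i}=0$ and $D_A F_{01}=0$ hold trivially for null $F$), and then iterate Lemma~\ref{balder} with $k=0$. The paper compresses all of this into the single phrase ``In view of lemma~\ref{balder}, one immediately arrives at'', so your write-up is simply a more explicit version of the same argument.
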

And since Hodge duality does not alter alignment, algebraic type nor behavior with respect to $D_A$, the conclusions hold also for the Hodge dual $\star F_{\mu \dots \nu}$ and its gauge covariant derivatives.

%----------------------------------------------------------------------------------------------------------------------------------
\subsection{Gauge covariant $VSI$ and $CSI$ fields}\label{VSICSIfields}
%----------------------------------------------------------------------------------------------------------------------------------
The notion of $VSI$ and $CSI$ spacetimes has been around for some time, see e.g. \cite{VSIinHD,CSIsoaces} and has proven useful in a number of applications. Lately, also $VSI$ and $CSI$ $p$-forms gained interest, mainly in applications in the context of universal Maxwell fields, cf. \cite{VSIelmag,universalMaxwell,quantumelmag}. 
Here, we consider a natural extension of the notion to gauge covariant fields.   
\begin{definition}\label{CSIVSIdef}
We will say that $T_{\mu \dots \nu}$ is $CSI_k$ if all scalar polynomial gauge invariants constructed\footnote{In the construction, the spacetime metric and volume element are allowed.} from $T_{\mu \dots \nu}$ and its gauge covariant derivatives up to order $k$ are constant. If $T_{\mu \dots \nu}$ is $CSI_k$ for all $k$, we will say it is $CSI$. Analogically, we will also define $VSI_k$ and $VSI$ fields as a subset of $CSI_k$ and $CSI$ fields, respectively, for which the corresponding invariants vanish. 
\end{definition}

Let us start with some simple lemmas on arbitrary gauge covariant fields, which will be particularly useful in obtaining characterization of $VSI$ field strengths. 

\begin{lemma}\label{VSIT}
If a gauge-covariant field $T_{\mu \dots \nu}$ is $VSI_k$, so is every tensor $U$ constructed as a trace of polynomial in $T_{\mu \dots \nu}$. In particular, $U$ and its covariant derivatives up to order $k$ are aligned and of type III. 
\end{lemma}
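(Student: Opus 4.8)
The plan is to reduce the assertion about $U$ to the hypothesis on $T$ by showing that every scalar polynomial invariant of $U$ and its ordinary covariant derivatives up to order $k$ coincides with one of the scalar polynomial gauge invariants of $T$ and its gauge covariant derivatives up to order $k$ that are required to vanish in Definition~\ref{CSIVSIdef}. Write $U=\Tr\,P(T)$, where $P(T)$ is a $\mathfrak{g}$-valued tensor polynomial (a sum of matrix products of copies of $T_{\mu\dots\nu}$, each monomial containing at least one factor of $T$, possibly with internal metric/volume contractions). Since products of gauge covariant quantities are again gauge covariant by \eqref{g-invariant}, $P(T)$ is gauge covariant, so $\DE_\rho P(T)$ is defined; the Leibniz rule for $\DE_\rho$ over matrix products (the $\nabla$ part is Leibniz and $[A_\rho,XY]=[A_\rho,X]Y+X[A_\rho,Y]$) shows $\DE_\rho P(T)$ is once more a $\mathfrak{g}$-valued tensor polynomial, now in $T$ and $\DE_\rho T$. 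The key identity is then
\[
\nabla_{\rho}\,\Tr\,P(T)=\Tr\bigl(\nabla_{\rho}P(T)\bigr)=\Tr\bigl(\DE_{\rho}P(T)\bigr),
\]
the last step because $-i\,\Tr[A_\rho,\,\cdot\,]=0$ by cyclicity of $\Tr$. Iterating (induction on the number of derivatives), $\nabla_{\rho_1}\!\cdots\nabla_{\rho_j}U$ is, for every $j\le k$, a sum of traces of $\mathfrak{g}$-valued monomials in $T$ and its gauge covariant derivatives $\DE_{\sigma_1}\!\cdots\DE_{\sigma_i}T$ with $i\le j\le k$.

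Consequently, any scalar polynomial invariant of $U,\nabla U,\dots,\nabla_{\rho_1}\!\cdots\nabla_{\rho_k}U$ — a full contraction of a product of these tensors with the metric and volume element — is, up to signs and a finite sum, a full metric/volume contraction of traces of $\mathfrak{g}$-valued monomials in $T$ and its gauge covariant derivatives up to order $k$. By \eqref{g-invariant} and \eqref{Dtransform} such an expression is gauge invariant, hence it is exactly a scalar polynomial gauge invariant of the type appearing in Definition~\ref{CSIVSIdef}; as $T$ is $VSI_k$, it vanishes. Therefore every scalar polynomial invariant of $U$ and its covariant derivatives up to order $k$ vanishes, i.e.\ $U$ is $VSI_k$ as an ordinary tensor field.

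For the final sentence it then suffices to invoke the known characterization of $VSI_k$ tensor fields (cf.\ \cite{VSIelmag} and the $VSI$ literature cited therein): a tensor all of whose scalar polynomial invariants involving at most $k$ covariant derivatives vanish must, together with those derivatives, have strictly negative boost order with respect to a common null direction $\ell$ — that is, be aligned and of type III. Applying this to $U$ yields the claim. I expect the genuine work to lie entirely in the first reduction, and within it the only nontrivial points are the commutation $\nabla_\rho\,\Tr=\Tr\,\DE_\rho$ (resting on cyclicity of the trace and the Leibniz property of $\DE_\rho$) and the bookkeeping verifying that metric/volume contractions of traces of polynomials remain inside the admissible class of Definition~\ref{CSIVSIdef}; the ``in particular'' clause is then immediate from the cited tensorial result.
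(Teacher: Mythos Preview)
Your proof is correct and follows essentially the same approach as the paper: both hinge on the identity $\nabla_\rho \Tr H = \Tr \DE_\rho H$ (via cyclicity of the trace killing the $[A_\rho,\cdot]$ term), use it to show that scalar invariants of $U$ and its covariant derivatives are gauge invariants of $T$ and hence vanish, and then invoke the algebraic $VSI$ theorem for the type III conclusion. The only minor difference is bibliographic: the paper cites \cite{hervikalignment} directly for the algebraic $VSI$ theorem rather than \cite{VSIelmag}.
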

\begin{proof}
Let $H_{\mu \dots \nu}$ be any polynomial in $T_{\mu \dots \nu}$. 
Obviously, $U \equiv \Tr H$ has to be $VSI_0$, since any scalar polynomial constructed from $U$ is gauge invariant and therefore vanishes by assumption.  
It remains to notice that $\Tr [A_\rho,H_{\mu \dots \nu}]$ vanishes due to the cyclic property of $\Tr$, and hence
\begin{equation}
\nabla_\rho U_{\mu \dots \nu} = \Tr \DE_\rho H_{\mu \dots \nu}
\end{equation}  
is also gauge invariant. It is thus clear that also any scalar polynomial invariant constructed from U and its covariant derivatives of arbitrary order
is a gauge invariant and, in particular, $U$ has to be $VSI_k$. The algebraic VSI theorem \cite{hervikalignment} then guarantees that 
$\nabla^{(j)} U$ for all $j=0,1, \dots, k$ are aligned and of type III.
\end{proof}

From lemma \ref{VSIT} and lemma B.5 of \cite{VSIelmag}, it immediately follows
\begin{lemma}\label{Kundtlemma}
Let $T_{\mu \dots \nu}$ be a gauge covariant $VSI_2$ field. If there exists a non-vanishing rank-2 tensor constructed as a trace of polynomial in $T_{\mu \dots \nu}$, then $T_{\mu \dots \nu}$ is aligned with a Kundt vector.  
\end{lemma}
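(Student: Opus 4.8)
The plan is to reduce the claim entirely to the tensorial machinery already available, using Lemma \ref{VSIT} as the bridge between the gauge-covariant setting and the purely tensorial one. First I would invoke the hypothesis: there exists a non-vanishing rank-$2$ tensor $U_{\mu\nu}$ obtained as a trace of a polynomial in $T_{\mu\dots\nu}$. Since $T$ is assumed $VSI_2$, Lemma \ref{VSIT} applies with $k=2$ and tells us that $U_{\mu\nu}$ together with its covariant derivatives $\nabla_\rho U_{\mu\nu}$ and $\nabla_\rho\nabla_\sigma U_{\mu\nu}$ are all aligned and of type III (in particular, all of them are built from ordinary Levi-Civita derivatives, not gauge-covariant ones, because the commutator terms trace out). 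Thus $U_{\mu\nu}$ is a genuine (non-gauge) tensor field that is $VSI_2$ in the ordinary tensorial sense.

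The second and essentially final step is to quote the tensorial result lemma B.5 of \cite{VSIelmag}, which states precisely that a non-vanishing rank-$2$ tensor which is $VSI_2$ (equivalently: aligned and of type III along with its first two covariant derivatives) forces its aligned null direction $\ell$ to be a Kundt vector, i.e. geodesic, shear-free, twist-free and expansion-free. Applying that lemma to $U_{\mu\nu}$ yields a Kundt vector $\ell$ with which $U$ is aligned. Because alignment of $U$ and of every $U^a$-component is by construction along the same null direction that serves as the common aligned null direction of all the Lie-algebra components $T^a_{\mu\dots\nu}$ (the trace operation cannot introduce a new preferred null direction — the boost-weight decomposition of $\Tr H$ is inherited from that of the $T^a$'s), the same $\ell$ is an aligned null direction for $T_{\mu\dots\nu}$ itself. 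Hence $T_{\mu\dots\nu}$ is aligned with a Kundt vector, which is the assertion.

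The only point requiring a little care — and the place I would expect the subtlety to sit — is the compatibility of the two alignments: a priori Lemma \ref{VSIT} gives \emph{some} null direction along which $U$ is type III, and lemma B.5 of \cite{VSIelmag} promotes \emph{that} direction to Kundt, but one must check this is the direction with respect to which $T$ is aligned. This is handled by the observation already made in the ``practical'' definition of algebraic type and in the preceding remarks of the excerpt: the boost-weight parts of $\Tr H_{\mu\dots\nu}$ are traces of sums of products of the boost-weight parts of the $T^a_{\mu\dots\nu}$, so if all $T^a$ are aligned with a common $\ell$ then so is any such trace $U$. Conversely, if $U$ is non-vanishing and of type III only for a specific $\ell$, that $\ell$ must be the common aligned direction of the $T^a$. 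So no genuine obstacle arises; the proof is the two-line composition ``Lemma \ref{VSIT} $\Rightarrow$ lemma B.5 of \cite{VSIelmag}'' once this bookkeeping is noted.
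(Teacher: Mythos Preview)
Your proof is correct and matches the paper's approach exactly: the paper's own argument is the single line that the result follows immediately from Lemma~\ref{VSIT} together with lemma~B.5 of \cite{VSIelmag}. Your extra care about alignment compatibility goes beyond what the paper spells out; in the paper's only application (Theorem~\ref{VSIchar}) the alignment of the gauge-covariant field with the trace tensor is established separately, so the converse direction you worry about is not actually required there.
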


%----------------------------------------------------------------------------------------------------------------------------------
\section{Gauge fields with $VSI$ curvature}\label{VSIcurvature}
%----------------------------------------------------------------------------------------------------------------------------------
At this point, we are ready to prove sufficient conditions for gauge fields to have $VSI$ curvature and to obtain full characterization of such gauge fields in the case of compact and semi-simple gauge group. 

It is an immediate consequence of corollary \ref{balFder} that, in a degenerate Kundt spacetime, arbitrary polynomial constructed from balanced $F_{\mu \nu}$ and ist gauge covariant derivatives is of a negative boost order. Therefore
\begin{theorem}[Sufficient conditions for $VSI$ gauge fields]\label{FVSI}
If a field strength $F_{\mu \nu}$ is null, then it is $VSI_0$. 
If moreover $F_{\mu \nu}$ is aligned in a degenerate Kundt spacetime, then $F_{\mu \nu}$ is $VSI$.
\end{theorem}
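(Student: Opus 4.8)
The plan is to prove the two claims separately, using the boost-order machinery and the balancedness results that have already been established.

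For the first claim, I would start by noting that if $F_{\mu\nu}$ is null (type N), then in a suitably chosen null frame all its nonvanishing Lie-algebra components $F^a_{\mu\nu}$ have boost weight $-1$, i.e. $\textnormal{bo}(F) = -1 < 0$. Any scalar polynomial gauge invariant built from $F_{\mu\nu}$ alone (no derivatives) is, after tracing, a scalar built by fully contracting copies of $F_{\mu\nu}$ with the metric and volume element; since the metric and volume element have boost weight $0$ and each factor of $F$ contributes strictly negative boost weight, the resulting scalar has strictly negative total boost weight. But a scalar has boost weight $0$ by definition (it is Lorentz-boost invariant), so it must vanish. Hence $F_{\mu\nu}$ is $VSI_0$. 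I would want to phrase this carefully: a boost-weight-$0$ quantity that is also a sum of terms each of strictly negative boost weight must be zero.

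For the second claim, suppose in addition that $F_{\mu\nu}$ is aligned in a degenerate Kundt spacetime. By lemma \ref{balF}, $F_{\mu\nu}$ is then balanced, and by corollary \ref{balFder} all its gauge covariant derivatives $\DE^{(j)}F_{\mu\nu}$ are aligned with $\ell$, of type III, and in fact (by lemma \ref{balder} and the induction noted after it) of negative boost order — indeed $\DE^{(j)}F_{\mu\nu}$ has boost order $< 0$ since $F$ is $0$-balanced. The same holds for the Hodge duals $\star F$ and their derivatives, as remarked after corollary \ref{balFder}. Now any scalar polynomial gauge invariant of order $\le k$ is a full contraction (with metric and volume element) of a trace of a polynomial in the quantities $\DE^{(j)}F_{\mu\nu}$ and $\DE^{(j)}\star F_{\mu\dots\nu}$ for various $j$; each such building block has strictly negative boost order, the metric and volume element have boost weight $0$, so the total is a sum of terms of strictly negative boost weight. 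As before, a scalar has boost weight $0$, so it vanishes. Since this works for arbitrary $k$, $F_{\mu\nu}$ is $VSI$.

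The main obstacle, and the point requiring the most care, is the bookkeeping of boost weights under contraction: one must check that contracting a collection of tensors each of negative boost order with copies of the metric $g_{\mu\nu}$, the inverse metric, and the Levi-Civita tensor can only produce a quantity whose boost order is at most the sum of the individual boost orders — in particular strictly negative here — so that the boost-weight-$0$ scalar it would have to equal is forced to vanish. This is exactly the statement alluded to in the paragraph preceding the theorem (``arbitrary polynomial constructed from balanced $F_{\mu\nu}$ and its gauge covariant derivatives is of a negative boost order''), combined with the elementary fact that a genuine scalar invariant has boost weight zero; I would spell out this additivity-of-boost-order-under-contraction step and the trace reduction (so that everything is expressed via the gauge-invariant traced quantities, to which the tensorial boost-order statements apply) as the heart of the argument.
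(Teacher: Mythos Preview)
Your proposal is correct and follows essentially the same approach as the paper: the paper's proof is the one-line remark preceding the theorem, namely that by corollary~\ref{balFder} any polynomial in a balanced $F_{\mu\nu}$ and its gauge covariant derivatives has negative boost order, so any scalar (boost weight~$0$) built from it vanishes. Your write-up simply unpacks this boost-order bookkeeping (additivity under tensor product, non-increase under contraction, boost weight~$0$ of the metric and volume element) more explicitly than the paper does.
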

\noindent 
Let us note that the first statement of theorem \ref{FVSI} follows in four dimensions also from \cite{gaugeclass}, where a basis for polynomial Lorentz invariants of $F_{\mu \nu}$ was given explicitly. 
It is also important to realize that, while the first part of the theorem holds actually for arbitrary gauge covariant $p$-form field, the second part of the theorem does not, since such $p$-form won't in general be balanced.

It turns out that, in the case of a compact semi-simple gauge group (for which the matrix $\kappa_{ab}\equiv \Tr t_a t_b$ is guaranteed to be non-degenerate and positive-definite), opposite direction of the statements of theorem \ref{FVSI} also holds. 
In particular, we have 
\begin{theorem}[Characterization of $VSI_0$ gauge fields]\label{VSI0char}
Let the gauge group $G$ be compact and semi-simple. Then, a field strength $F_{\mu \nu}$ is $VSI_0$ if and only if $F_{\mu \nu}$ is null. 
\end{theorem}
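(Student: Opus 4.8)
The plan is to prove the nontrivial direction: if $F_{\mu\nu}$ is $VSI_0$ and the gauge group is compact semi-simple, then $F_{\mu\nu}$ is null. The reverse implication is already furnished by theorem \ref{FVSI}. The central idea is to use compactness and semi-simplicity to turn the vanishing of \emph{gauge} invariants of $F_{\mu\nu}$ into the vanishing of \emph{Lorentz} invariants of the ordinary tensor $\kappa_{ab} F^a_{\mu\nu} F^b_{\rho\sigma}$, and then invoke the (known) classification of $VSI_0$ $2$-forms to conclude nullity.

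First I would form the rank-$4$ gauge-invariant tensor $S_{\mu\nu\rho\sigma} \equiv \kappa_{ab}\, F^a_{\mu\nu} F^b_{\rho\sigma}$, which by lemma \ref{VSIT} is $VSI_0$ (it is a trace of a quadratic polynomial in $F$, since $\kappa_{ab} = \Tr t_a t_b$). Hence $S$ is aligned and of type III as an ordinary tensor; in particular, in some null frame it has boost order $\textnormal{bo}(S) < 0$. The key step is to transfer this to the individual Lie-algebra components $F^a_{\mu\nu}$. Here is where positive-definiteness of $\kappa_{ab}$ enters decisively: if we write, in a fixed null frame, the boost order of $F_{\mu\nu}$ (as a gauge covariant field) as $\textnormal{bo}(F) = p$, then the top boost weight part $F_{(p)}$ is nonzero, meaning some component $F^a_{(p)}$ (a $2$-form of definite boost weight $p$) is nonzero. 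The corresponding component of $S$ at boost weight $2p$ is $\kappa_{ab}\, F^a_{(p)\,\mu\nu} F^b_{(p)\,\rho\sigma}$ contracted appropriately; contracting $S_{\mu\nu\rho\sigma}$ at boost weight $2p$ with itself (using the spacetime metric to build a scalar) produces a sum of the form $\kappa_{ab}\kappa_{cd}\,(\text{contraction of } F^a_{(p)} F^c_{(p)})(\text{...})$ — but more directly, the scalar $\kappa_{ab} F^a_{(p)\,\mu\nu} F^{b\,\mu\nu}_{(p)}$ is, by positive-definiteness of $\kappa$, a sum of squares of the real numbers $F^a_{(p)\,\mu\nu}$ and hence strictly positive unless all $F^a_{(p)} = 0$. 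On the other hand this scalar is a boost-weight-$2p$ piece of a genuine Lorentz scalar built from $S$ at appropriate order — wait, one must be careful that boost-weight $2p>0$ contributions to a scalar can be cancelled — so instead I would argue at the level of boost order: $\textnormal{bo}(S) = 2\,\textnormal{bo}(F)$ whenever the gauge group's Killing form is positive-definite, because the top boost-weight part of $S$ is $\kappa_{ab}F^a_{(p)}\otimes F^b_{(p)}$ (symmetrized), and this cannot vanish since contracting it with $m^{(i)\mu}m^{(j)\nu}\ell^\rho\ell^\sigma$-type frame vectors recovers $\kappa_{ab}F^a_{(p)ij}F^b_{(p)kl}$, whose full contraction over $ij,kl$ is $\sum_a (F^a_{(p)})$-squared $> 0$. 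Thus $\textnormal{bo}(S) = 2\,\textnormal{bo}(F)$.

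Combining, $S$ being $VSI_0$ forces $\textnormal{bo}(S) < 0$ in some frame (by the algebraic $VSI$ theorem of \cite{hervikalignment} applied to each $S^{\mathbb{R}}$ component, or directly from lemma \ref{VSIT} stating $S$ is of type III), hence $\textnormal{bo}(F) < 0$ in that frame, i.e. every $F^a_{\mu\nu}$ has only negative-boost-weight parts there. A $2$-form has $b_{\max} = 2$, and a $2$-form with $\textnormal{bo} < 0$, i.e. $\textnormal{bo} \le -1$: possible boost weights of a $2$-form are $2,1,0,-1,-2$, so $\textnormal{bo}(F^a) \le -1$ means $F^a_{(2)} = F^a_{(1)} = F^a_{(0)} = 0$, which is precisely type N, i.e. $F^a_{\mu\nu}$ is null and aligned with $\ell$. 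Since this holds for every $a$ in a common frame, $F_{\mu\nu}$ is null by the definition of algebraic type for gauge covariant fields.

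The main obstacle I anticipate is the ``transfer'' step: making rigorous that $\textnormal{bo}(S) = 2\,\textnormal{bo}(F)$ — equivalently, that the leading boost-weight part of the quadratic trace $S$ does not accidentally vanish. The resolution is exactly the positive-definiteness of $\kappa_{ab}$ for compact semi-simple $G$: the expression $\kappa_{ab}\, F^a_{(p)} F^b_{(p)}$ is (after full frame-index contraction) a sum of squares of real numbers, so it vanishes only if every $F^a_{(p)}$ does, contradicting $\textnormal{bo}(F)=p$. A secondary, purely bookkeeping point is to confirm that the alignment of the various $F^a$ is with a \emph{common} null direction — but this is automatic, since type III of the single tensor $S$ picks out one frame, and the boost order of $F$ in that frame being negative constrains all $F^a$ simultaneously. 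Everything else is an application of lemma \ref{VSIT}, the definition of algebraic type for gauge covariant fields, and theorem \ref{FVSI} for the converse.
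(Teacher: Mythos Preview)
Your argument is correct and follows essentially the same strategy as the paper: invoke lemma \ref{VSIT} to force a trace-quadratic in $F$ to have negative boost order, then use positive-definiteness of $\kappa_{ab}$ to deduce that the non-negative boost-weight components of $F$ itself vanish. The paper works with the rank-2 contraction $T_{\mu\nu}=\Tr F_{\mu\rho}F\indices{_\nu^\rho}$ and reads off $F_{0i}=F_{01}=F_{ij}=0$ component-by-component, whereas you use the uncontracted rank-4 tensor $S_{\mu\nu\rho\sigma}$ and the clean identity $\textnormal{bo}(S)=2\,\textnormal{bo}(F)$; both routes are equivalent and rest on the same positivity step.

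One small slip: for an antisymmetric $2$-form the components $F_{00}$ and $F_{11}$ vanish identically, so the possible boost weights are $+1,0,-1$ and $b_{\max}=1$, not $2$. This is harmless for your conclusion, since $\textnormal{bo}(F)\le -1$ already forces $F=F_{(-1)}$, i.e.\ type N, which is exactly what you need.
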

\begin{proof}
In view of theorem \ref{FVSI}, it remains to show that the $VSI_0$ property of $F_{\mu \nu}$ implies it is null. 
Let us thus consider $T_{\mu \nu} \equiv \Tr F_{\mu \rho}F\indices{_\nu^\rho}$. According to lemma \ref{VSIT}, $T_{\mu \nu}$ is of type III. 
Since all non-negative boost weight components $T_{00}, T_{0i}, T_{01}$ and $T_{ij}$ of $T_{\mu \nu}$ vanish and $\kappa_{ab}$ is non-degenerate and positive definite, the components $F_{0i}$, $F_{01}$ and $F_{ij}$ of $F$ have to vanish, and thus $F$ is null and aligned with $T_{\mu \nu}$. 
\end{proof}

Consequently, one can obtain also full characterization of gauge fields with $VSI$ curvature 
analogous to the one provided for $VSI$ spacetimes in \cite{VSIinHD} and $p$-forms  in \cite{VSIelmag}. 
\begin{theorem}[Characterization of $VSI$ gauge fields]\label{VSIchar}
Let the gauge group $G$ be compact and semi-simple. Then, a non-vanishing field strength $F_{\mu \nu}$ is $VSI$ if and only if 
$F_{\mu \nu}$ is null and aligned in a degenerate Kundt spacetime. 
\end{theorem}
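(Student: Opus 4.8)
The plan is to combine Theorem~\ref{FVSI} (which already gives one direction) with Theorem~\ref{VSI0char} and the Kundt lemma~\ref{Kundtlemma}, so that the only real work is the forward implication: a non-vanishing $VSI$ field strength must be null \emph{and} aligned in a degenerate Kundt spacetime. First I would invoke Theorem~\ref{VSI0char}: since a $VSI$ field strength is in particular $VSI_0$, compactness and semi-simplicity of $G$ force $F_{\mu\nu}$ to be null, and moreover the tensor $T_{\mu\nu}\equiv\Tr F_{\mu\rho}F\indices{_\nu{}^\rho}$ is aligned with the same null direction $\ell$ and of type III.

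Next I would show $T_{\mu\nu}$ is non-vanishing, so that lemma~\ref{Kundtlemma} applies. This is where positive-definiteness of $\kappa_{ab}$ is essential: writing $F_{\mu\nu}=F^a_{\mu\nu}t_a$, the relevant non-zero component of $T$ in a frame adapted to the null $F$ is $T_{11}=\kappa_{ab}F^a_{1i}F^b_{1j}\delta^{ij}$, which is a sum of squares in the positive-definite metric $\kappa_{ab}$ and hence vanishes only if every $F^a_{1i}=0$, i.e. only if $F_{\mu\nu}\equiv 0$, contrary to assumption. Since $F$ is $VSI$ it is certainly $VSI_2$, so lemma~\ref{Kundtlemma} yields that $\ell$ is a Kundt vector; that is, $\ell$ is null, geodesic, and has vanishing expansion, shear and twist.

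It then remains to upgrade ``Kundt'' to ``degenerate Kundt'', i.e. to show the background metric itself is aligned with $\ell$ in the sense that its Weyl and traceless Ricci tensors have boost order below their generic maxima with respect to $\ell$ (equivalently, are of type II or more special). Here I would again feed the machinery the scalar data: because $F$ is $VSI$, every trace of a polynomial in $F$ and its gauge covariant derivatives has vanishing scalar invariants, and by lemma~\ref{VSIT} all such tensors together with all their covariant derivatives are aligned with $\ell$ and of type III. In particular the Riemann tensor is constrained through the commutator relation \eqref{gaugecommutator} and the Bianchi identity \eqref{bianchi}: differentiating $F$ and contracting produces curvature terms $R\indices{^\beta_{\mu\rho\sigma}}F_{\dots\beta\dots}$ on one side of an equation whose other side is forced to be aligned and of type III, and since $F$ itself picks out the null direction $\ell$ with definite (negative) boost weight, the non-negative boost-weight components of the curvature acting on $F$ must vanish; combined with the Kundt conditions on $\ell$ this gives precisely the degeneracy conditions. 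This step is the main obstacle, and the cleanest route is simply to appeal to the fully analogous argument already carried out for $VSI$ $p$-forms: by lemma~\ref{VSIT} the non-Abelian problem has been reduced to a statement about the ordinary tensors $\Tr(\text{polynomials in }F)$, which are $VSI$ in the purely tensorial sense, so the characterization of $VSI$ $p$-form fields in \cite{VSIelmag} (its lemma~B.5 and the surrounding analysis) applies verbatim and delivers the degenerate Kundt conclusion.

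Finally, the converse is immediate from Theorem~\ref{FVSI}: if $F_{\mu\nu}$ is null and aligned in a degenerate Kundt spacetime it is $VSI$, with no restriction on $G$ needed. Assembling the two directions completes the proof. I expect essentially all the difficulty to be concentrated in extracting the degeneracy of the Kundt background from the tensorial $VSI$ data; the nullity and the bare Kundt property follow quickly from the lemmas already in hand.
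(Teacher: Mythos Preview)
Your overall architecture matches the paper's: the ``if'' direction from Theorem~\ref{FVSI}, nullity from Theorem~\ref{VSI0char}, non-vanishing of $T_{\mu\nu}=\Tr F_{\mu\rho}F\indices{_\nu^\rho}$ via positive-definiteness of $\kappa_{ab}$, and the Kundt property from lemma~\ref{Kundtlemma}. So far so good, and your observation that $T_{11}\neq 0$ is exactly what makes lemma~\ref{Kundtlemma} applicable (the paper leaves this implicit).

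The gap is in the step you yourself flag as the main obstacle: upgrading Kundt to \emph{degenerate} Kundt. You cannot simply say the $p$-form argument of \cite{VSIelmag} ``applies verbatim'': the trace tensors you produce are not $p$-forms, and the individual Lie-algebra components $F^a_{\mu\nu}$ are not $VSI$ in the tensorial sense (only gauge-invariant traces are). Nor does the commutator relation \eqref{gaugecommutator} by itself isolate the specific curvature components you need. What the paper actually does is construct two explicit gauge-invariant tensors built from second and third gauge covariant derivatives of $F$,
\[
\Tr\,\DE_\alpha\DE_\gamma F_{\mu\rho}\,\DE_\beta\DE^\gamma F\indices{_\nu^\rho}
\qquad\text{and}\qquad
\Tr\,\DE_\alpha\DE_\beta\DE_\gamma F_{\mu\rho}\,\DE_\delta\DE_\kappa\DE_\lambda F\indices{_\nu^\rho},
\]
and then, working in a parallelly propagated frame and the light-cone gauge $A_\mu\ell^\mu=0$, computes their boost-weight-zero components directly using the Ricci and commutator equations of \cite{VSIelmag}. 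These components turn out to be $(DL_{1k})^2\,\Tr F_{1j}F_{1j}$ and $(D^2L_{11})^2\,\Tr F_{1i}F_{1i}$ respectively; since the tensors are type~III by lemma~\ref{VSIT} and $\Tr F_{1i}F_{1i}\neq 0$, one gets $DL_{1k}=0$ and $D^2L_{11}=0$, which via the Ricci equations force $R_{010i}=0$ and $DR_{0101}=0$ --- precisely the degeneracy criterion. Your sketch gestures at this mechanism but does not supply the tensors or the computation, and that is the actual content of the proof.
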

\begin{proof}
At this moment, it is sufficient to prove the "only if" part of the statement and from theorem \ref{VSI0char}, we already have that $F_{\mu \nu}$ is null
and aligned with $T_{\mu \nu} \equiv \Tr F_{\mu \rho}F\indices{_\nu^\rho}$. 
 According to lemmas \ref{VSIT} and \ref{Kundtlemma}, $T_{\mu \nu}$ is $VSI$ and aligned with a Kundt vector, say $\ell$. 
It remains to show that the Kundt background spacetime $g$ is degenerate. It is thus sufficient to show that $R_{010i}$ and $D R_{0101}$ vanish \cite{hervikKundt}, cf. also Proposition A.2 of \cite{VSIelmag}.
For this purpose, let us consider the following tensors
\begin{align}
T_{\alpha \mu \beta \nu} &\equiv \Tr \DE_\alpha \DE_\gamma F_{\mu \rho} \DE_\beta \DE^\gamma F \indices{_\nu^\rho}, \\
T_{\alpha \beta \gamma \mu \delta \kappa \lambda \nu} &\equiv \Tr \DE_\alpha \DE_\beta \DE_\gamma F_{\mu \rho}
\DE_\delta \DE_\kappa \DE_\lambda F \indices{_\nu^\rho},
\end{align}
which have to be $VSI$ itself and, in particular, of type III.  
Employing a parallely propagated frame with affinely parameterized $\ell$, Ricci equations (A7)-(A12) and commutators (A13),(A14) of \cite{VSIelmag}, their boost weight zero components in the light cone gauge $A_\mu \ell^\mu = 0$ can be computed. 
For the first tensor, it is sufficient to consider 
\begin{align}
T_{0101} = DL_{1k} DL_{1k} \Tr F_{1j}F_{1j},
\end{align}
from which we conclude $D L_{1k} = 0$.  Boost weight zero component $T_{00110011}$ of the second tensor then reduces to
\begin{align}
T_{00110011} &= (D^2 L_{11})^2 \Tr F_{1i}F_{1i},
\end{align}
and thus also $D^2 L_{11} = 0$. Ricci equations (A7) and (A9) of \cite{VSIelmag} then imply that $R_{010i}$ and $DR_{0101}$ vanish, and hence $g$ is a degenerate Kundt spacetime \cite{hervikKundt}.   
\end{proof}

Comparing theorem \ref{VSIchar} with characterization of $VSI$ metrics, theorem 1 of \cite{VSIinHD}, we see that, in both cases, the curvature with vanishing scalar invariants is characterized by admitting a non-expanding and non-twisting geodetic aligned null vector $\ell$, along which it has to be of a negative boost order. Also note that, under the assumption of type III Riemann tensor, the Kundt spacetime is automatically degenerate. The situation in the case of gauge field curvature $F_{\mu \nu}$ is thus  analogous as in the case of the Riemann curvature $R_{\mu \nu \rho \sigma}$. 

When it comes to characterization of $VSI$ $p$-forms $F_{\mu \dots \nu}$, theorem 1.5 of \cite{VSIelmag} contains, apart from the common alignment assumptions, one additional condition: the Lie derivative of $F$ along $\ell$ has to satisfy $\pounds_\ell F = 0$. 
That's because, unlike the gauge field strength and the Riemann tensor, a general $p$-form $F$  does not have to satisfy the Bianchi identity and consequently $F$, although aligned and null in a degenerate Kundt spacetime, does not necessarily have to be balanced.  
In the case of $F$ satisfying $\de F=0$, the condition $\pounds_\ell F = 0$ is automatically satisfied from the null assumption (remark 1.8 of \cite{VSIelmag}) and the situation is analogous to the case of the gauge and the Riemann curvatures.

%----------------------------------------------------------------------------------------------------------------------------------
\subsection{Explicit form of $VSI$ gauge fields}\label{explicitVSI}
%----------------------------------------------------------------------------------------------------------------------------------
In this section, we discuss explicit form of gauge fields with null curvature in a degenerate Kundt spacetime. These are all necessarily $VSI$ (theorem \ref{FVSI}) and exhaust all $VSI$ gauge fields in the case of a compact and semi-simple gauge group $G$ (theorem \ref{VSIchar}).

Local form of a general $d$-dimensional degenerate Kundt metric in adapted Kundt coordinates $(r,u,x^\alpha)$ is given by 
\cite{hervikKundt}: 
\begin{align}\label{degKundt}
\begin{split}
\de s^2 = \: &2H(r,u,x) \de u^2 + 2W_\alpha(r,u,x)\de u \de x^\alpha \\
               &+2 \de u \de r   + g_{\alpha \beta}(u,x) \de x^\alpha \de x^\beta ,
\end{split}
\end{align}
 with $\alpha,\beta = 2, \dots , d-1$ and functions  $W_\alpha$ and $H$ at most linear and at most quadratic in $r$, respectively
\begin{equation}\label{Wform}
W_\alpha (r,u,x)= W^{(1)}_\alpha(u,x) r + W^{(0)}_\alpha(u,x),
\end{equation}
\begin{equation}\label{Hform}
H(r,u,x)=H^{(2)}(u,x) r^2 + H^{(1)}(u,x) r + H^{(0)}(u,x).
\end{equation}
Here, $r$ denotes affine parameter of the corresponding Kundt vector $\ell = \partial_r$. 

For any null field strength $F_{\mu \nu}$ aligned with $\ell$ in a degenerate Kundt spacetime \eqref{degKundt}, the only non-vanishing coordinate components are $F_{u \alpha}^a$:
\begin{equation}
F_{\mu \nu}^a \de x^\mu \de x^\nu = F_{u \alpha}^a (r,u,x) \de u \wedge \de x^\alpha.
\end{equation}
Exploiting gauge freedom, the form of the gauge potential $A_\mu$ can be then simplified considerably. 
Indeed, we can fix the light-cone gauge first to obtain $A_r^a=0$. Then, $F_{r \alpha}^a = 0$ reduces to $\partial_r A_\alpha^a = 0$ for all $\alpha$, which means that the remaining gauge freedom can be used to transform away also some of the $A_\alpha^a$. In fact, $F_{\alpha \beta} = 0$ guarantees that this can be done for all $A_\alpha^a$. Finally, $F_{ru}=0$ implies $\partial_r A_u = 0$ and we arrive at 
\begin{equation}\label{localgauge}
A_\mu^a \de x^\mu = A^a(u,x) \de u. 
\end{equation}
Any $VSI$ gauge field in the background spacetime \eqref{degKundt} can be then obtained by a gauge transformation of \eqref{localgauge}. 
Particular examples of $VSI$ gauge fields already given in the literature will be discussed in section \ref{examples} in the context of universal YM fields.

%----------------------------------------------------------------------------------------------------------------------------------
\section{Universal null Yang-Mills fields}\label{universalYMfields}
%----------------------------------------------------------------------------------------------------------------------------------

Let us first define \textit{universality} of Yang-Mills fields, which will be the subject of study in the rest of this section. 

\begin{definition}[Universal Yang-Mills fields]\label{universaldef}
Let $F_{\mu \nu}$ be a field strength of a gauge field $A_\mu$ and $k \in \mathbb{N}$. We say that $F_{\mu \nu}$ is $k$-universal if all  $\mathfrak{g}$-valued 1-forms constructed\footnote{In the construction of such polynomials, the spacetime metric and  volume element are allowed.} polynomially from $F$ (and its gauge covariant derivatives up to order $k$) that are at least quadratic in $F$ or contain $\DE F$, vanish identically. 
If $F_{\mu \nu}$ is $k$-universal for any $k \in \mathbb{N}$, we say it is universal. 
\end{definition}

A universal YM field $F$ thus solves the Yang-Mills equation as well as any of its generalizations involving higher-order corrections to YM equation constructed as polynomials of $F$, its Hodge dual and their gauge covariant derivatives of arbitrary order. Hence, the class of theories solved exactly by $F$ is rather broad and includes all Lagrangian theories considered in the following subsection.

%----------------------------------------------------------------------------------------------------------------------------------
\subsection{A class of generalized YM theories}
%----------------------------------------------------------------------------------------------------------------------------------
Let $G$ be a  semi-simple Lie group. Consider a theory with gauge group $G$ and Lagrangian $\mathcal{L}$ being a function of a finite set of gauge invariant scalars $\{ I_k \}$  
constructed as traces of polynomials in $F_{\mu \nu}$, its Hodge dual $\star F_{\mu \dots \nu}$ and their gauge covariant derivatives (one of the scalars being $I=\Tr F_{\mu \nu} F^{\mu \nu}$).  
Moreover, assume that $\mathcal{L}$ is analytic at zero and its power series expansion takes the form 
\begin{equation}\label{theory}
\mathcal{L} = \mathcal{L}_{YM} + \mathcal{L}_{HC},
\end{equation}
where $\mathcal{L}_{YM}$ is the Yang-Mills Lagrangian
\begin{equation}
\mathcal{L}_{YM} = -\frac{1}{2 \kappa} \Tr F_{\mu \nu} F^{\mu \nu},
\end{equation}
with $\kappa$ being a coupling constant 
and $\mathcal{L}_{HC}$ (\textit{higher-order corrections}) consists strictly of monomials 
at least cubic in $F_{\mu \nu}$ or containing its gauge covariant derivatives. 
Since $\mathcal{L}$ and $\mathcal{L}_{YM}$ are analytic function of scalar polynomials $I_k$, so is $\mathcal{L}_{HC}$.

%----------------------------------------------------------------------------------------------------------------------------------
\subsubsection{Reduction of EOM for $CSI$ fields}
%----------------------------------------------------------------------------------------------------------------------------------
It turns out that the form of equations of motion corresponding to \eqref{theory} reduce significantly for $CSI$ fields $F_{\mu \nu}$.  
This fact will be of great importance in the next section. 

Assume that the invariants $\{ I_m \}$ involved in the Lagrangian $\mathcal{L}$ are constructed from $F_{\mu \nu}$, $\star F_{\mu\dots  \nu}$ and their derivatives up to order $k$, i.e. schematically $I_m = \Tr \mathcal{P}_m (F, \DE F, \dots , \DE^k F)$ with $\mathcal{P}_m$ denoting the corresponding polynomial.  
Then, assuming the boundary terms vanish, variation of the action $S_{HC}$ corresponding to $\mathcal{L}_{HC}$ yields 
\begin{widetext}
\begin{align}
\delta S_{HC} &= \int \star 1  \sum_n \frac{\partial \mathcal{L}_{HC}}{\partial I_n} \delta I_n,  \\
&= \int \star 1 \sum_n \frac{\partial \mathcal{L}_{HC}}{\partial I_n} \Tr \bigg\{ \frac{\partial \mathcal{P}_n}{\partial F^{\mu \nu}} \delta F^{\mu \nu} +  \frac{\partial \mathcal{P}_n}{\partial \DE^\rho F^{\mu \nu}} \delta \DE^\rho F^{\mu \nu} + \dots  \bigg\},\\
&= \int \star 1  \Tr \bigg\{  \sum_n \frac{\partial \mathcal{L}_{HC}}{\partial I_n} \frac{\delta \mathcal{P}_n}{\delta A^{\nu}} \delta A^{\nu} \bigg\} + 
  \textnormal{terms involving } \nabla_\mu \frac{\partial \mathcal{L}_{HC}}{\partial I_n}. 
\end{align}
\end{widetext}
But terms involving covariant derivatives of $\partial \mathcal{L}_{HC} / \partial I_n$ will vanish for any $CSI_k$ field $F_{\mu \nu}$.  
Moreover, variation of any of the individual polynomials $\mathcal{P}_{n}$ takes the form $\DE^\mu {H}_{\mu \nu} + H_\nu$, where 
$H_{\mu \nu}$ and $H_\nu$ is some $2$-form and $1$-form, respectively, constructed polynomially from $F_{\mu \nu}$, $\star F_{\mu \dots \nu}$ and their gauge covariant derivatives (not necessarily only up to order $k$). Thus, the full field equations reduce for a $CSI_k$ field $F_{\mu \nu}$ (possessing invariants $\{ I_m \}$ within the domain of convergence of the Taylor series) to 
\begin{equation}\label{CSIEOM}
\DE^\mu  F_{\mu \nu} =  \DE^\mu  \tilde{F}_{\mu \nu} + F_\nu,
\end{equation} 
where $\tilde{F}_{\mu \nu}$ and $F_\nu$ are some polynomial gauge covariant $\mathfrak{g}$-valued forms with $F_\nu$ in general not expressible as a divergence of a $2$-form. 
Let us also note that the term $F_\nu$ has origin in variation of the non-Abelian contribution $[A,D^{m}F]$ to the gauge covariant derivative $D^{m+1} F$ and is thus present only in the non-Abelian case with $k>0$, i.e. when invariants $\{ I_n \}$ involve not only $F_{\mu \nu}$ (or $\star F_{\mu \dots \nu}$), but also its gauge covariant derivatives. 
In Abelian theories or non-Abelian theories involving only algebraic invariants $\{ I_n \}$, the field equations take the form \eqref{CSIEOM} with $F_\nu = 0$.

In the following sections, we will show that for certain types of $VSI_k$ solutions of the original Yang-Mill equations and certain background spacetimes, $F_{\mu \nu}$ is universal. In particular, above corrections necessarily vanish, and hence the field equations of any higher-order theory \eqref{theory} are automatically satisfied. Necessary conditions for universality of YM fields will be also studied.

%----------------------------------------------------------------------------------------------------------------------------------
\subsection{Universal solutions to Yang-Mills with algebraic corrections}\label{algebraiccorrections}
%----------------------------------------------------------------------------------------------------------------------------------

In the particular case of all invariants $\{ I_k \}$ being only algebraic (i.e. constructed solely from $F_{\mu \nu}$ and $\star F_{\mu \dots \nu}$ with no derivatives of these involved), $\mathcal{L}_{HC}$ represents \textit{algebraic corrections} to the YM Lagrangian and proposition 2.4 of \cite{universalMaxwell} can be extended to the non-Abelian case.

\begin{theorem}[Sufficient conditions for 0-universality]
Null Yang-Mills fields are $0$-universal and hence solve any theory \eqref{theory} with algebraic corrections $\mathcal{L}_{HC}$.
\end{theorem}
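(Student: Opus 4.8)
The plan is to obtain $0$-universality from a boost-weight count carried out pointwise, and then to feed it into the reduced form of the field equations discussed in the previous subsection.

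\emph{Step 1: vanishing of polynomial forms.} At an arbitrary point pick a null frame in which the null (type N) field strength $F$ is aligned, so that only its boost weight $-1$ part is present; since Hodge duality preserves alignment and algebraic type (as used after corollary \ref{balFder}), $\star F$ is likewise null and aligned, hence also purely of boost weight $-1$. The metric and the volume element have only boost weight zero frame components, and contraction (of a $0$-index with a $1$-index, or of two spacelike indices) leaves the total boost weight unchanged. Therefore any $\mathfrak{g}$-valued $p$-form built polynomially from $F$, $\star F$, the metric and the volume element, each of whose monomials is at least quadratic in $F$ (with $\star F$ counted as linear in $F$), is a sum of pieces of boost weight $\leq -2$. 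But a $p$-form can carry at most one $0$-index and at most one $1$-index, so its nonzero frame components have boost weight in $\{-1,0,1\}$; hence every such form vanishes identically. The case $p=1$ is exactly the statement that a null $F$ is $0$-universal in the sense of definition \ref{universaldef}.

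\emph{Step 2: the field equations.} Let $F$ now be a null Yang-Mills field in a theory \eqref{theory} whose corrections $\mathcal{L}_{HC}$ depend only on algebraic invariants $I_n=\Tr\mathcal{P}_n(F,\star F)$. By theorem \ref{FVSI} a null $F$ is $VSI_0$, so all the $I_n$ vanish on the solution; hence the coefficients $\partial\mathcal{L}_{HC}/\partial I_n$ are constants, the terms involving $\nabla_\mu(\partial\mathcal{L}_{HC}/\partial I_n)$ in the variation of $S_{HC}$ drop out, and $F_\nu=0$ in \eqref{CSIEOM} for purely algebraic corrections. The field equation of \eqref{theory} therefore reduces to \eqref{CSIEOM} with $F_\nu=0$, i.e. $\DE^\mu F_{\mu\nu}=\DE^\mu\tilde{F}_{\mu\nu}$, where $\tilde{F}_{\mu\nu}$ is a constant-coefficient combination of the $2$-forms $\partial\mathcal{P}_n/\partial F^{\mu\nu}$. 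The left-hand side vanishes because $F$ solves the Yang-Mills equation. On the right, every contribution to $\tilde{F}_{\mu\nu}$ coming from an invariant at least cubic in $F$ is a $\mathfrak{g}$-valued $2$-form at least quadratic in $F$, hence vanishes by Step 1; the only contributions linear in $F$ would come from a quadratic invariant such as $\Tr F_{\mu\nu}F^{\mu\nu}$, whose coefficient $\partial\mathcal{L}_{HC}/\partial I_n$ has no constant term since $\mathcal{L}_{HC}$ is assumed at least cubic in $F$. Thus $\tilde{F}_{\mu\nu}=0$ and the equation is satisfied.

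\emph{Expected difficulty.} The boost-weight step is immediate once one observes that a $p$-form can carry at most one $0$- and one $1$-index. The only point requiring care is the bookkeeping in Step 2: isolating the pieces of the variation that escape the boost-weight count — namely those linear in $F$ — and checking that they are removed by the ``at least cubic'' assumption on $\mathcal{L}_{HC}$ together with the absence of the non-Abelian term $F_\nu$ in the algebraic case. Everything else is routine.
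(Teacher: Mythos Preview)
Your proof is correct and follows essentially the same route as the paper: a boost-weight count shows that any polynomial at least quadratic in a null $F$ has boost order $\leq -2$, hence cannot survive as a $1$-form (or, in your slightly stronger version, as any $p$-form); then $VSI_0$ reduces the field equations to \eqref{CSIEOM} with $F_\nu=0$, and the correction $\tilde F_{\mu\nu}$ vanishes by the same count. Your Step~1 for general $p$-forms and your explicit handling of the would-be linear-in-$F$ contributions from quadratic invariants (killed by the ``at least cubic'' assumption on $\mathcal{L}_{HC}$) make precise two points the paper leaves implicit, but the argument is otherwise the same.
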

\begin{proof}
Since the YM field $F_{\mu \nu}$ is null, all polynomials at least quadratic in $F$ are of boost order at most $(-2)$ and hence any rank-1 contraction of such polynomial trivially vanishes, making $F$ $0$-universal. 

By theorem \ref{FVSI}, $F$ is $VSI_0$. The same holds also for $\star F_{\mu \dots \nu}$ and mixed invariants of both fields vanish as well. 
Consequently, as we saw in the previous section, the field equations of theory \eqref{theory} reduce to \eqref{CSIEOM} with $F_\nu = 0$ and 
with $\tilde{F}_{\mu \nu}$ constructed polynomially from $F_{\mu \nu}$ and its Hodge dual.   
But since  $\tilde{F}_{\mu \nu}$ has to be at least quadratic in $F_{\mu \nu}$ (recall that power series expansion of $\mathcal{L}_{HC}$ at zero consists of monomials at least cubic in $F_{\mu \nu}$), it vanishes due to $0$-universality of $F$.
\end{proof}

%----------------------------------------------------------------------------------------------------------------------------------
\subsection{Universal solutions to Yang-Mills with general corrections}\label{allcorrections}

In this section, we consider the general case of higher-order corrections constructed employing also gauge covariant derivatives of $F_{\mu \nu}$.   
Before proceeding further, let us first prove the following technical lemma on possible forms of skew-symmetric rank-2 contractions of gauge covariant derivatives $\DE_\mu \dots \DE_\nu F_{\alpha \beta}$. 
\begin{lemma}\label{contractions}
Let $F_{\mu \nu}$ be an aligned null YM field in a degenerate Kundt spacetime of Weyl and traceless Ricci type III. 
Then, any skew-symmetric rank-2 contraction of k-th gauge covariant derivative $\DE^k F_{\mu \nu}$ of $F_{\mu \nu}$ reduces to $F_{\mu \nu}$ 
multiplied by some polynomial in the Ricci scalar. 
\end{lemma}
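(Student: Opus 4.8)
The plan is to work throughout in a null frame parallelly propagated along the affinely parametrised Kundt vector $\ell$ with which $F$ is aligned, so that (Corollary \ref{balFder}) $F$ has only the boost weight $-1$ components $F_{1i}$ and every $\DE^k F$ is aligned with $\ell$ and balanced, i.e.\ of boost order $\le-1$. The first point is that a skew-symmetric rank-2 contraction $G_{\mu\nu}$ of $\DE^k F$ is again aligned with $\ell$ and of type N. Indeed, a metric contraction of a pair of frame indices always pairs boost weights summing to zero (the nonzero entries of $g^{ab}$ are $g^{01}$ and $g^{ij}$), so a nonvanishing component $G_{ab}$ carries the boost weight $b(a)+b(b)$ of the corresponding component of $\DE^k F$, which is $\le-1$; for a $2$-form this forces $(a,b)=(1,i)$ up to antisymmetry. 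Hence $G$ has only the boost weight $-1$ components $G_{1i}$, only the boost weight $-1$ part of $\DE^k F$ contributes to it, and throughout the computation every term of boost order $\le-2$ may be discarded. (In particular $k$ has to be even for such a metric contraction to exist.)

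The second ingredient is the reduction of the curvature. Since $g$ is of Weyl and traceless Ricci type III, $R_{\mu\nu\rho\sigma}=\tfrac{2\mathcal R}{d(d-1)}g_{\mu[\rho}g_{\sigma]\nu}+\hat R_{\mu\nu\rho\sigma}$, where $\hat R$, built from the Weyl tensor and the traceless Ricci tensor, is aligned with $\ell$ and of boost order $\le-1$. By the boost-weight additivity just used, any contraction of $\hat R$ against a balanced gauge-covariant field has boost order $\le-2$, as does any commutator contribution $[F_{\sigma\rho},\,\cdot\,]$ from \eqref{gaugecommutator} applied to a balanced field, and likewise $[F_{\mu\alpha},F\indices{^\alpha_\nu}]$; these are all at least bilinear in null objects. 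Since $\DE^\alpha\DE_\alpha F_{\mu\nu}$ is a contraction of the balanced tensor $\DE^2 F$ it has boost order $\le-1$, so \eqref{YMbox} collapses to $\DE^\alpha\DE_\alpha F_{\mu\nu}=c_d\,\mathcal R\,F_{\mu\nu}$ with $c_d=\tfrac{2(d-2)}{d(d-1)}$, while \eqref{gaugecommutator} applied to any contraction of $\DE^j F$ reduces, modulo boost order $\le-2$, to its constant-curvature part, which is $\mathcal R$ times a contraction with two fewer derivatives. Finally, in a degenerate Kundt spacetime of this algebraic type the Ricci scalar depends only on the Kundt null coordinate $u$, so that $\nabla_\mu\mathcal R=\mathcal R'(u)\,\ell_\mu$ (using $\nabla^\mu u=\ell^\mu$ in the coordinates \eqref{degKundt}) is a null covector aligned with $\ell$; combined with $D_A F_{\mu\nu}=0$ and $\nabla^\mu\nabla_\mu\mathcal R=0$ (the latter because $\ell$ is null, geodesic and non-expanding) this yields $\DE^\alpha\DE_\alpha(\mathcal R^m F_{\mu\nu})=c_d\,\mathcal R^{m+1}F_{\mu\nu}$ for every $m$.

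With this in place the statement follows by induction on (even) $k$. Given a skew-symmetric rank-2 contraction of $\DE^k F$, select a contracted pair of indices. If it is the pair $\mu,\nu$ of $F$-indices it contributes $F\indices{^\mu_\mu}=0$. If it consists of a derivative index and an $F$-index, commute the derivative inward through the remaining derivatives — the commutator corrections being either of boost order $\le-2$ or equal to $\mathcal R$ times a rank-2 contraction of $\DE^{k-2}F$ — and use the Bianchi identity \eqref{bianchi} together with the Yang--Mills equation $\DE^\mu F_{\mu\nu}=0$ to see that the resulting term vanishes. If it consists of two derivative indices, commute them adjacent (again at the cost of only discardable or $\mathcal R$-proportional terms), obtaining a factor $\DE^\alpha\DE_\alpha$; commuting this through the remaining derivatives and acting on $F$ contributes $c_d\,\mathcal R$ times a rank-2 contraction of $\DE^{k-2}F$. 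In every case one is reduced to rank-2 contractions of $\DE^{k-2}F$ multiplied by a power of $\mathcal R$; the induction hypothesis and the identity $\DE^\alpha\DE_\alpha(\mathcal R^m F)=c_d\mathcal R^{m+1}F$ then give $G_{\mu\nu}=P(\mathcal R)F_{\mu\nu}$ for some polynomial $P$.

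The main obstacle is the combinatorial bookkeeping of the last paragraph: one must check that, after systematically discarding boost order $\le-2$ contributions, every skew-symmetric rank-2 contraction really does collapse onto iterated d'Alembertians acting on $F$ (in particular that the reduction terminates), and that the contractions in which a derivative index meets an $F$-index are genuinely killed by the Bianchi identity and the Yang--Mills equation rather than producing new terms.
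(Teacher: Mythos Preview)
Your argument is correct and follows essentially the same route as the paper's proof: commute gauge covariant derivatives using \eqref{gaugecommutator}, discard all boost order $\le -2$ byproducts (the $[F,\cdot]$ commutators and the Weyl/traceless-Ricci pieces of Riemann), reduce every surviving term either to $\DE^{k-1}\DE^\mu F_{\mu\nu}$ (killed by the Yang--Mills equation) or to iterated $\DE^\alpha\DE_\alpha$ acting on $F$, then invoke \eqref{YMbox} and $\nabla_\mu R\propto\ell_\mu$ to collapse everything to $P(R)F_{\mu\nu}$. Your presentation is somewhat more explicit than the paper's --- you spell out the boost-weight argument forcing $G_{\mu\nu}$ to be type~N, compute the constant $c_d$, and isolate the identity $\DE^\alpha\DE_\alpha(R^m F)=c_d R^{m+1}F$ --- but the logic is the same; note only that the Bianchi identity is not actually needed in the derivative-$F$ case, since bringing the contracted derivative innermost already yields $\DE^\alpha F_{\alpha\nu}=0$ directly from the YM equation.
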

\begin{proof}
Let $H_{\mu \nu}$ denote an arbitrary 2-form constructed by contractions (and antisymmetrization, if needed) of $\DE^k F_{\mu \nu}$.  
Clearly, to construct $H$, $k/2$ contractions of tensorial indices will be required and hence $k$ has to be even. 
Consequently, there are two possible types of terms in $H$ - either there is at least one contraction of some derivative index with a field strength index or all derivative indices are contracted.  

It turns out that, by commuting derivatives, these two types of terms in $\DE^k F_{\mu \nu}$ can be,  cast in the form 
\begin{equation}\label{forms}
\DE^{k-1}\DE^\mu F_{\mu \nu}, \qquad \DE^{k-2} \DE^\alpha \DE_\alpha F_{\mu \nu},
\end{equation}
respectively, for the price of producing new (but lower order) terms linear in $\DE^{k-2} F_{\mu \nu}$. 
Indeed, the formula \eqref{gaugecommutator} tells us that, schematically, $\DE^l [\DE,\DE] \DE^{m}F$ is given by $\DE^l [F,\DE^{m} F]$ and terms of type 
$\DE^l (\textnormal{Riem} \ast \DE^m F)$. However, for a balanced $F$, we have that $[F,\DE^{m}F]$ is of boost order $(-2)$ and hence won't contribute to $H$ and only boost weight (0) part of the Riemann tensor (given by the Ricci scalar $R$ and the metric) can survive in contribution from $\textnormal{Riem} \ast \DE^m F$. 
Thus, this term is simply proportional to $R \DE^m F$ and, since in a degenerate Kundt spacetime of Weyl and traceless Ricci type III, $\nabla_\mu R \propto\ell_\mu$ is of boost order $(-1)$ (see the proof of proposition A.8 in \cite{universalMaxwell}), $\DE^l (R \DE^m F)$ reduces to $R \DE^{k-2} F$. 
 
Now, each of these new terms proportional to $\DE^{k-2} F$ can be again, by the commutation procedure above, cast in the form \eqref{forms} introducing new lower-order terms proportional to $\DE^{k-4}F$. Therefore, the procedure above can be used iteratively until one ends up with all terms in $H$ 
of the form \eqref{forms} for some $k$ even. 
The first one obviously vanishes for any YM field, it thus remains to examine the second one. Employing the formula \eqref{YMbox}, 
we obtain 
\begin{equation}
\DE^\alpha \DE_\alpha F_{\mu \nu} \propto R F_{\mu \nu},
\end{equation}
where $R$ denotes the Ricci scalar. Recalling that $\nabla_\mu R$ is of boost order (-1), the second term in \eqref{forms} reduces to $ R \DE^{k-2} F_{\mu \nu}$ with all derivative indices contracted. 
As before, we can repeat the procedure above until we end up with all terms in the form $R^n F_{\mu \nu}$ for some $n$ up to a multiplication constant. Consequently, we obtain that $\tilde{F}$ can be actually always put in the form of $F$ multiplied by some polynomial in $R$. 
\end{proof}

\begin{remark}\label{symH}
From the proof of lemma \ref{contractions}, it is easy to see that, under assumption of the lemma, any \textit{symmetric} rank-2 contraction $H_{\mu \nu}$ of $\DE^k F$ can only consist of boost order (-2) terms of type $\DE^m F \otimes \DE^l F$, $m,l \geq 0$. Such terms would appear while commuting the gauge covariant derivatives in the procedure described above.
\end{remark}

\begin{theorem}[Sufficient conditions for universality]\label{universalthm}
An aligned null Yang-Mills field in a degenerate Kundt spacetime of Weyl and traceless Ricci type III is universal and hence solves any theory \eqref{theory}. 
\end{theorem}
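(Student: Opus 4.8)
The plan is to verify directly the defining property of universality (definition \ref{universaldef}): one must show that every $\mathfrak{g}$-valued $1$-form $P_\nu$ built polynomially from $F_{\mu\nu}$, its Hodge dual, and their gauge covariant derivatives up to any order $k$, and which is at least quadratic in $F$ or contains $\DE F$, vanishes identically. First I would record the structural facts that make this tractable: by theorem \ref{FVSI} the field $F$ is $VSI$, hence in particular $CSI$, and by corollary \ref{balFder} (together with the observation that Hodge duality preserves alignment, algebraic type and behaviour under $D_A$) every $\DE^{m}F$ and $\DE^{m}\star F$ is aligned with $\ell$ and of type III, hence of boost order at most $-1$. Then write $P_\nu$ as a sum of monomials, each of which is a full contraction — with the metric and the volume element — of a tensor product of factors of the form $\DE^{m_i}F$ or $\DE^{m_i}\star F$, leaving a single free index $\nu$.

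The monomials containing two or more such factors are disposed of by a boost-weight count: each factor has boost order $\le -1$, so the tensor product has boost order $\le -2$, and contractions with the boost-weight-zero metric and volume element cannot raise the boost order; but a $1$-form has only components of boost weight $-1,0,+1$, so a rank-$1$ tensor of boost order $\le -2$ must vanish. This is precisely the mechanism already used for $0$-universality and for the algebraic-corrections case.

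The real work is with the monomials that are linear in $F$ (or $\star F$). A rank-$1$ contraction of a single $F_{\mu\nu}$ is proportional to $F^{\mu}{}_{\mu}=0$, so such a monomial must carry at least one derivative; it is therefore a rank-$1$ contraction of some $\DE^{k}F$ (or $\DE^{k}\star F$) with $k\ge 1$. Here I would reuse the reduction carried out in the proof of lemma \ref{contractions} (and remark \ref{symH}): commuting gauge covariant derivatives by means of \eqref{gaugecommutator} rewrites the contraction, modulo (a) terms at least quadratic in $F$, which vanish by the previous paragraph, and (b) curvature corrections, as a sum of terms of the form $R^{n}$ times a rank-$1$ contraction of a lower-order derivative $\DE^{k'}F$, $k'<k$. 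Point (b) is exactly where the Weyl and traceless Ricci type III hypothesis enters: the only boost-weight-zero part of the Riemann tensor that can survive contraction with the type III object $\DE^{m}F$ is the scalar-curvature piece, and $\nabla_\mu R\propto\ell_\mu$ is of boost order $-1$ (cf.\ the proof of Proposition A.8 of \cite{universalMaxwell}), so derivatives of $R$ never escape the control. Iterating the reduction, every such monomial is brought to a combination of $R^{n}\,\DE^{k-1}\DE^{\mu}F_{\mu\nu}$ and, via \eqref{YMbox} applied to the $\DE^{\alpha}\DE_{\alpha}$ pieces, of $R^{n}\,\DE^{\mu}F_{\mu\nu}$; both vanish because $\DE^{\mu}F_{\mu\nu}=0$ is the Yang-Mills equation. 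For the monomials involving $\star F$ the same argument applies, with the Bianchi identity \eqref{bianchi} playing the role of the Yang-Mills equation for divergences of $\star F$. This establishes $P_\nu\equiv 0$, hence universality.

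Finally, to conclude that $F$ solves any theory \eqref{theory}: since $F$ is $CSI$, its equations of motion reduce to \eqref{CSIEOM}, and the correction terms $\DE^{\mu}\tilde F_{\mu\nu}$ and $F_\nu$ there are, by the structure of $\mathcal{L}_{HC}$, $\mathfrak{g}$-valued $1$-forms that are at least quadratic in $F$ or contain $\DE F$; by the universality just proved they vanish, leaving only $\DE^{\mu}F_{\mu\nu}=0$, which holds by assumption. The step I expect to be the main obstacle is the bookkeeping in the third paragraph — checking that \emph{every} admissible rank-$1$ contraction of $\DE^{k}F$ really can be pushed through the commutation procedure down to multiples of $\DE^{\mu}F_{\mu\nu}$ without ever producing a term that escapes either the boost-order bound or the scalar-curvature control; this is precisely where the degenerate Kundt and type III assumptions do the heavy lifting.
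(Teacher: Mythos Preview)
Your proposal is correct and follows the same overall strategy as the paper: boost-weight counting kills the terms at least quadratic in $F$, and the commutation procedure of lemma~\ref{contractions} together with \eqref{YMbox} and the Yang--Mills equation kills the terms linear in $F$ carrying derivatives. The paper organises the linear case slightly differently, and the difference is worth seeing. Rather than running the commutation reduction directly on rank-$1$ contractions of $\DE^{k}F$, it first peels off the \emph{outermost} derivative: either that index is the free one, so $H_\mu=\DE_\mu I$ with $I$ a $\mathfrak{g}$-valued scalar of boost order $\le -1$, hence $I=0$; or it is contracted, so $H_\mu=\DE^{\nu}H_{\mu\nu}$ with $H_{\mu\nu}$ a rank-$2$ contraction of $\DE^{k-1}F$. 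Only then are lemma~\ref{contractions} (skew part: $H_{[\mu\nu]}=\mathcal{P}(R)F_{\mu\nu}$) and remark~\ref{symH} (symmetric part: boost order $\le -2$) invoked, and the divergence $\DE^{\nu}(\mathcal{P}(R)F_{\mu\nu})$ vanishes by $\nabla_\mu R\propto\ell_\mu$ and the YM equation. This two-step reduction makes the bookkeeping you flag as the main obstacle essentially trivial: the case analysis is confined to rank $2$, where lemma~\ref{contractions} already does all the work. Your direct approach reaches the same conclusion but has to thread the $\DE_\mu I$ case through the iteration implicitly; the paper's extra peeling step is what buys the cleanliness. Your separate mention of $\star F$ and the Bianchi identity is fine but unnecessary once one notes (as the paper does after corollary~\ref{balFder}) that Hodge duality preserves alignment, type and balancedness.
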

\begin{proof}
Recall that, according to corollary \ref{balFder}, $F_{\mu \nu}$ and its gauge covariant derivatives of arbitrary order are aligned and of boost order at most (-1). Therefore, any nonvanishing 1-form constructed polynomially from these fields can be at most linear in them. 
Consequently, it is sufficient to prove that all rank-1 contractions $H_\mu$ of $\DE^k F$ vanish for $k>0$. 

There are two possible types of $H_\mu$ - either the index $\mu$ in $\DE_\mu \dots \DE_\nu F_{\rho \sigma}$ is left uncontracted, in which case $H_\mu = \DE_\mu I$, where $I$ is some $\mathfrak{g}$-valued tensorial invariant of $F$, or $\mu$ gets contracted, which means $H_\mu = \DE^\nu H_{\mu \nu}$ for some $H_{\mu \nu}$ constructed from $D^{k-1}F$. Obviously, as a boost order (0) object, $I$ has to vanish and consequently also the corresponding $H_\mu$. 
It remains to discuss the second possibility. We can decompose $H_{\mu \nu}$ into symmetric and skew-symmetric part and discuss their contributions to $H_\mu$ separately. 

According to remark \ref{symH}, the symmetric part of $H_{\mu \nu}$ consists of boost order (-2) terms of type $\DE^m F \otimes \DE^l F$ and hence cannot contribute to $H_\mu$, since such contribution can again only be of boost order  $(-2)$. 
On the other hand, lemma \ref{contractions} tells us that the skew-symmetric part of $H_{\mu \nu}$ takes the form $ \mathcal{P}(R) F_{\mu \nu}$ for some polynomial $\mathcal{P}$ of the Ricci scalar $R$. Thus, 
$H_\mu = F_{\mu \nu}  \nabla^\nu \mathcal{P}(R) + \mathcal{P}(R) \DE^\nu F_{\mu \nu}$. But $\nabla_\mu R \propto \ell_\mu$ in our background spacetime, and hence both terms in $H_\mu$ necessarily vanish for a null YM field. 

\end{proof}

\begin{theorem}[Necessary conditions for universality]
Let $F_{\mu \nu}$  be a non-vanishing null Yang-Mills field solving any theory \eqref{theory}. Then, $F_{\mu \nu}$ is either $CSI$ or it is aligned with gradient $\nabla_\mu I $ of all of its invariants $I$, these are all aligned and generate a twist-free geodetic null congruence. 
\end{theorem}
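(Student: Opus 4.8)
The plan is to probe the equations of motion of \eqref{theory} with a family of judiciously chosen Lagrangians and then convert the resulting on--shell identities into constraints on the boost--weight structure of the invariants of $F_{\mu \nu}$. Since $F_{\mu \nu}$ is null it is of type $N$ with a single aligned null direction $\ell$, and by theorem \ref{FVSI} it is $VSI_0$; hence every purely algebraic gauge invariant of $F_{\mu \nu}$ and of its (also null and $\ell$--aligned) Hodge dual $\star F_{\mu \dots \nu}$ vanishes, so every non--constant invariant $I$ of the solution necessarily contains a gauge covariant derivative of $F$.

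First I would fix the structure of the field equations. Because $F_{\mu \nu}$ solves the ordinary Yang--Mills equation, the full equation of motion of any theory \eqref{theory} reduces to the requirement that the $\mathcal{L}_{HC}$--contribution to the Euler--Lagrange operator vanish on the solution, \emph{for every admissible} $\mathcal{L}_{HC}$. As in the reduction preceding \eqref{CSIEOM}, this contribution splits into a bulk part $\sum_n (\partial \mathcal{L}_{HC}/\partial I_n)\,E^{(n)}_\nu$, with $E^{(n)}_\nu = \delta \mathcal{P}_n / \delta A^\nu$ the Euler--Lagrange expression of $\mathcal{P}_n$, and a derivative part $\sum_n \sum_{l \ge 1}\big(\nabla_{\mu_1}\cdots \nabla_{\mu_l}\tfrac{\partial \mathcal{L}_{HC}}{\partial I_n}\big)\,T^{(n,l)\,\mu_1 \cdots \mu_l}{}_\nu$ collecting all terms in which a derivative has hit a coefficient, where $T^{(n,l)}$ is a gauge covariant polynomial in $F$, $\star F$, their gauge covariant derivatives and the curvature. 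Feeding in $\mathcal{L}_{HC} = \epsilon I$ for a single non--constant invariant $I$ (admissible, as such $I$ contains $\DE F$) kills the derivative part and yields $E^{(I)}_\nu = 0$ on the solution, for every invariant $I$; feeding in $\mathcal{L}_{HC} = \epsilon I^2$ (and more generally $\epsilon I_m I_n$) and using $E^{(I)}_\nu = 0$ then kills the bulk part and leaves $\sum_{l \ge 1}\big(\nabla^{(l)} I\big)\cdot T^{(I,l)}_\nu = 0$ for every invariant $I$, together with the mixed relations coming from the products.

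The core is a boost--weight analysis of these tensorial identities in a null frame adapted to $\ell$. Since $F_{\mu \nu}$ is null, the Bianchi identity gives $D_A F_{\mu \nu} = 0$ and both $F$ and $\star F$ are of boost order $-1$, so one checks that for a suitable choice of $I$ the leading tensor $T^{(I,1)}_{\mu \nu}$ equals $F_{\mu \nu}$ up to boost order $\le -2$, and hence the nonnegative boost--weight part of $(\nabla^\mu I)\,T^{(I,1)}_{\mu \nu}$ is governed by that of $\nabla_\mu I$. Comparing boost weights order by order in the identities above then forces, for every invariant $I$, that $\nabla_\mu I$ has no boost weight $0$ or positive part, i.e. $\nabla_\mu I \propto \ell_\mu$; since $\ell$ is the unique aligned null direction of $F$, all these gradients are mutually aligned. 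If they all vanish, $F_{\mu \nu}$ is $CSI$ by definition; otherwise pick a non--constant $I$ with $\nabla_\mu I = \phi\,\ell_\mu$, $\phi \neq 0$. Torsion--freeness of $\nabla$ applied to $\nabla_\mu \nabla_\nu I$ gives $\phi\,\nabla_{[\mu}\ell_{\nu]} = \ell_{[\mu}\nabla_{\nu]}\phi$; contracting with $\ell^\mu$ shows the acceleration of $\ell$ is proportional to $\ell_\mu$, so the congruence generated by $\ell$ is geodetic, and contracting with $m_{(i)}^\mu m_{(j)}^\nu$ shows its twist vanishes. (Consistently, in a degenerate Kundt background corollary \ref{balFder} would make all $\DE^{(k)} F$ of boost order $\le -1$, all invariants would vanish, and the first, $CSI$, alternative would hold.)

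The main obstacle is the boost--weight bookkeeping in a \emph{completely general} background: unlike in theorem \ref{universalthm} we cannot assume a degenerate Kundt spacetime, so corollary \ref{balFder} and lemma \ref{contractions} are unavailable and $\DE^{(k)} F$ may carry positive boost--weight components sourced by the acceleration, expansion, shear and twist of $\ell$ and by the positive boost--weight parts of the Weyl and Ricci tensors. One must keep track of all of these, of the non--Abelian ``$F_\nu$''--type contributions (not expressible as divergences of a $2$--form) produced whenever derivatives of $F$ enter $\mathcal{L}_{HC}$, and --- most delicately --- verify that the freedom to choose $\mathcal{L}_{HC}$ as an analytic function of arbitrarily many invariants genuinely isolates each $\nabla_\mu I$ separately, excluding conspiracies among different invariants that could mask a nonvanishing nonnegative boost--weight part.
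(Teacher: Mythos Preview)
Your proposal misses the key simplifying trick and, as you yourself acknowledge, does not complete the argument. The paper's proof is a two-line computation once one chooses the right probe Lagrangian, and requires none of the boost-weight bookkeeping you identify as the ``main obstacle''.

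The choice that unlocks everything is $\mathcal{L}_{HC} = I\,\mathcal{L}_{YM}$ for an arbitrary invariant $I$. Because $F_{\mu\nu}$ is null, the scalar $\mathcal{L}_{YM} \propto \Tr F_{\mu\nu}F^{\mu\nu}$ vanishes \emph{identically} on spacetime, so the $(\delta I)\,\mathcal{L}_{YM}$ contribution, and every term obtained from it by integration by parts (which produces derivatives of $\mathcal{L}_{YM}$, all zero), drops out. The remaining $I\,\delta\mathcal{L}_{YM}$ contribution gives, after a single integration by parts, $I\,\DE^\mu F_{\mu\nu}$ (zero by the YM equation) plus $(\nabla^\mu I)\,F_{\mu\nu}$. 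Requiring this to vanish for \emph{every} $I$ yields directly
\[
F_{\mu\nu}\,\nabla^\mu I = 0 .
\]
Decomposing $\nabla_\mu = \ell_\mu \Delta + n_\mu D + m_\mu^{(i)}\delta_i$ and using that the only nonzero frame components of a null $F$ are $F_{1i}$, this forces $DI = \delta_i I = 0$, i.e.\ $\nabla_\mu I = (\Delta I)\,\ell_\mu$. If all $\Delta I$ vanish, $F$ is $CSI$; otherwise $\ell$ is proportional to a gradient and is therefore geodetic and twist-free. That is the entire argument.

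By contrast, your probes $\mathcal{L}_{HC}=\epsilon I$ and $\epsilon I^2$ never isolate $\nabla_\mu I$ cleanly: the tensors $T^{(I,l)}$ depend on the detailed structure of each $I$, and your claim that ``for a suitable choice of $I$ the leading tensor $T^{(I,1)}_{\mu\nu}$ equals $F_{\mu\nu}$ up to boost order $\le -2$'' is unsubstantiated in a general background (and indeed you flag precisely this as unresolved). The multiplication by $\mathcal{L}_{YM}$ is what does the work: it annihilates the $\delta I$ contribution for free and leaves the variation of the YM part, whose integration by parts produces exactly one derivative of $I$ contracted into $F_{\mu\nu}$.
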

\begin{proof}
Consider a correction Lagrangian of the form $\mathcal{L}_{HC} \equiv I \mathcal{L}_{YM}$, where $I$ is arbitrary scalar qauge invariant of $F_{\mu \nu}$. Since the invariant $\mathcal{L}_{YM}$ vanishes for a null $F_{\mu \nu}$, varying $\mathcal{L}_{HC}$ and employing the YM equation, the requirement of universality reduces to
\begin{equation}\label{univpodm}
 F_{\mu \nu} \nabla^\mu I = 0, 
\end{equation}
for all invariants I. Since $F_{\mu \nu}$ is null and non-vanishing, decomposing the covariant derivative into null frame, $\nabla_\mu = \ell_\mu \Delta + n_\mu D + m_\mu^{(i)} \delta_i$, \eqref{univpodm} implies that both $DI$ and $\delta_i I$ vanish, hence $\nabla_\mu I = \ell_\mu \Delta I $. 
Therefore, gradients of all invariants I are aligned. Moreover, since the null vector $\ell$ is proportional to gradient of a function, it generates a twistfree geodetic congruence. 
\end{proof}

\begin{remark}
Note that, while in four dimensions, any null YM field is necessarily aligned with a shear-free geodetic null 
congruence \cite{tafel}, in higher dimensions, this is not the case even for Maxwell fields, cf. appendix B of \cite{ortaggioRT} and lemmas 3, 4 in \cite{ghpclanek}. 
In fact, the congruence has necessarily non-vanishing shear in higher-dimensions, unless it is non-expanding. 
Employing the YM equation and Bianchi identity in the generalized GHP formalism given by \eqref{max:1}-\eqref{max:4} in the appendix, 
we provide an appropriate extension of lemma 4 in \cite{ghpclanek} to non-vanishing null YM fields, see lemma \ref{MariotYM}.
 
However, under additional conditions on curvature of the background spacetime, the admissible geometries of the corresponding null congruence further reduce due to Ricci and Bianchi equations. In particular, in a spacetime of Weyl and traceless Ricci type N with constant Ricci scalar, any null YM field is necessarily aligned with shear-free, twist-free and non-expanding null geodetic (i.e. Kundt) congruence, as we prove in the appendix (theorem \ref{NNadmitsnullF}).  
\end{remark}

%----------------------------------------------------------------------------------------------------------------------------------
\subsection{Examples of universal Yang-Mills fields}\label{examples}
%----------------------------------------------------------------------------------------------------------------------------------

We first give explicit form of all universal solutions satisfying the hypothesis of theorem \ref{universalthm} in adapted coordinates. 
Then, discussion of special cases already known in the literature follows.  

\subsubsection*{The metric}
Recall that local form of a general degenerate Kundt metric in Kundt coordinates $(r,u,x^\alpha)$ is given by 
\eqref{degKundt}. 
Requirement that $g$ is of Weyl and traceless Ricci type III then puts further restrictions on functions in \eqref{degKundt}, cf. discussion in section 4 of \cite{EMtheory}.  
In particular, $g^T \equiv g_{\alpha \beta}(u,x) \de x^\alpha \de x^\beta$ has to be a Riemannian metric on a  $(d-2)$-dimensional  transverse space spanned by coordinates $\{ x^\alpha \}$, which is, for any fixed $u$, of constant sectional curvature $K(u)$ depending on the Ricci scalar $R=R(u)$ of \eqref{degKundt} and the dimension $d$ as
\begin{equation}\label{sectcurv}
K \equiv \frac{R}{d(d-1)},
\end{equation}
and functions $H$ and $W_{\alpha}$ are constrained by the following equations, where $\nabla^T$ denotes the covariant derivative associated with the transverse space metric $g^T$
\begin{equation}\label{constraint1}
2H^{(2)} = \frac{1}{4}W_\alpha^{(1)}W^{(1)\alpha} + K,
\end{equation}
\begin{equation}\label{constraint2}
\nabla_{(\beta}^T W_{\alpha) }^{(1)} - \frac{1}{2} W_\alpha^{(1)} W_\beta^{(1)} = 2K g_{\alpha \beta}, 
\quad \partial_{[\beta } W_{\alpha ] }^{(1)}=0.
\end{equation}

\subsubsection*{The gauge field}
Exploiting gauge freedom, we can without loss of generality start with gauge potential of the form \eqref{localgauge}. 
The function $A^a$ is then subject to the YM equation, which in our case reduces to the wave equation in the $(d-2)$-dimensional transverse space of constant sectional curvature \eqref{sectcurv}:
\begin{equation}\label{YMtransverse}
\Box^T A^a = 0,
\end{equation}
where $\Box^T \equiv g^{T \alpha \beta} \nabla_\alpha^T  \nabla_\beta^T$ is the Laplace-Beltrami operator. Any null YM field in the background spacetime \eqref{degKundt} can be then obtained by a gauge transformation of \eqref{localgauge} and is necessarily $VSI$ (recall theorem \ref{FVSI}). 

Note that the YM equation \eqref{YMtransverse} depends only on the transverse metric $g^T$, while the functions $H$ and $W_\alpha$ in \eqref{degKundt} can be arbitrary, as long as the constraints \eqref{constraint1}, \eqref{constraint2} hold.

\subsubsection{Examples in the literature}
Some of the solutions covered by  \eqref{degKundt} and \eqref{localgauge} are already well-known, most of them being plane waves in four-dimensional $pp$-wave background such as Coleman's non-Abelian plane waves in Minkowski spacetime \cite{coleman} or their generalization to curved $pp$-wave background \cite{guvenEYM}. Both of these examples correspond to the case $W_\alpha = 0$, $\partial_r H=0$ and $g^T$ being a flat metric (i.e. $K=0$).   

However, there are also some more general examples, namely Weyl type III Einstein-Yang-Mills solutions found in \cite{fusterEYM} and containing the mentioned plane wave solutions as a special case. These examples admit $W_\alpha  \neq 0$ and $H$ quadratic in $r$, while the transverse metric is still flat. In the Weyl type N limit, these solutions reduce to either $pp$-waves or Kundt waves, as discussed in the reference above. 

Some higher-dimensional solutions are also known, most notably the Weyl type N plane waves in ten-dimensional $\mathcal{N}=1$ supergravity considered by  Güven in \cite{guven}. These again correspond to the above fields with $W_\alpha = 0$, flat transverse metric $g^T$ and $r$-independent functions $H$ and $A^a$ in the plane wave ansatz. 
Some Non-Abelian waves propagating in higher-dimensional (A)dS wave backgrounds (with non-flat transverse metric $g^T$) are also known, cf. \cite{4Dsugra} and \cite{adswaveYM}.

As we see, \eqref{degKundt} and \eqref{localgauge} contain many already known solutions and their higher-dimensional generalizations (with all of the mentioned ones being $VSI$ gauge fields propagating in a $CSI$ background spacetime), but apparently also a new ones. 
Unlike the Maxwell case (see e.g. \cite{VSIelmag}), to our knowledge, non-Abelian YM fields in a degenerate Kundt spacetime with non-flat transverse metric $g^T$ were considered so far only in the special case of the (A)dS wave background such as the ones mentioned above.

%----------------------------------------------------------------------------------------------------------------------------------
\section*{Conclusions}
%----------------------------------------------------------------------------------------------------------------------------------
The purpose of the paper was twofold. Firstly, we have extended some of the established tensorial techniques to gauge covariant fields with potential application in non-Abelian gauge theories. 
Secondly, these techniques were applied to study of gauge fields with $VSI$ curvature and universal (test) Yang-Mills fields in arbitrary dimension.

Taking into account also backreaction of the YM field on spacetime geometry (and thus dealing also with $\mathcal{L}_{HC}$-corrections to EOM for the spacetime metric) would deserve further investigation. 
However, it is already clear at this point that, employing techniques presented in the paper, this can be done in the spirit of \cite{EMuniversal}, allowing also for a presence of appropriate scalar fields and $p$-forms (such as dilaton and KR or RR fields, respectively) along the lines of  \cite{guven} and thus being of particular interest in low-energy limits of string theories. 

Hopefully, the paper will serve as a clear explanation of the presented techniques as well as a guide for their use in applications  
and will perhaps lead to further new results on non-Abelian fields or even to development/extensions of other useful techniques in the future.

\begin{acknowledgments}
The author is thankful to M. Ortaggio for useful suggestions and comments on the draft. 
This work has been supported by Research Plan RVO: 67985840 and Research grant GAČR 19-09659S.
\end{acknowledgments}

%----------------------------------------------------------------------------------------------------------------------------------
\appendix
%----------------------------------------------------------------------------------------------------------------------------------
\section{On k-balanced gauge covariant fields}\label{balancedappendix}

Let $g$ be a degenerate Kundt metric. 
In the rest of the section, we will work in a paralelly propagated frame $\{ e_{(a)} \} \equiv \{ \ell,n,m_{(i)} \}$ (i.e. such that $D n_\nu = 0$ and $D m_\nu^{(i)} = 0$) with affinely parameterized Kundt vector $\ell$. 
 
In a parallely propagated frame with affinely parameterized $\ell$, $k$-balanced tensors are defined as boost order $b<-k$ tensors with null frame components behaving appropriately under the action of the Newman-Penrose directional derivative $D$. 
$k$-balanced gauge covariant fields are defined analogously, but with $D$ replaced by directional gauge covariant derivative $D_A \equiv \ell^\mu \DE_\mu$ associated with the gauge field $A_\mu$.    
However, since the notion of $k$-balancedness is gauge independent, we can easily overcome this obstacle by fixing the light-cone gauge 
\begin{equation}
\ell^\mu A_\mu =0,
\end{equation}
(recall that in the standard Kundt coordinates, $\ell = \partial_r$) in which case $D_A$ reduces to $D$. Consequently, balancedness for gauge covariant quantities can be restated in terms of balanced tensors \cite{universalMaxwell}. 

\begin{lemma}
A gauge covariant quantity $T_{\mu \dots \nu}$  in the light-cone gauge is $k$-balanced if and only if $\{ T_{\mu \dots \nu}^a \}$ are aligned $k$-balanced tensors. 
\end{lemma}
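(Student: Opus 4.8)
The plan is to reduce the claim to a component-wise statement by exploiting the gauge freedom already discussed in the paper. In the light-cone gauge $\ell^\mu A_\mu = 0$, the directional gauge covariant derivative $D_A = \ell^\mu \DE_\mu$ acting on a gauge covariant quantity $T_{\mu\dots\nu} = T_{\mu\dots\nu}^a t_a$ coincides, at the level of the Lie-algebra components, with the ordinary directional derivative $D = \ell^\mu\nabla_\mu$: indeed, since $\DE_\mu = \nabla_\mu - i[A_\mu,\cdot]$ and $\ell^\mu A_\mu = 0$, one has $D_A T_{\mu\dots\nu} = D T_{\mu\dots\nu} - i[\ell^\mu A_\mu, T_{\mu\dots\nu}] = D T_{\mu\dots\nu}$, so $(D_A T)_{\mu\dots\nu}^a = D(T^a)_{\mu\dots\nu}$ for every $a$. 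Iterating, $(D_A^{\,m} T)_{\mu\dots\nu}^a = D^m (T^a)_{\mu\dots\nu}$.

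\textbf{Key steps.} First I would recall that boost weight is assigned component-wise: the boost weight $b$ part $T_{(b)}$ of $T$ is by definition $T_{(b)}^a t_a$, where $T_{(b)}^a$ is the boost weight $b$ part of the tensor $T^a$. Hence $T_{(b)} = 0$ if and only if $T_{(b)}^a = 0$ for all $a$, because the generators $\{t_a\}$ are linearly independent. In particular, "$T$ has boost order $b < -k$", i.e. only the parts $T_{(b)}$ with $b < -k$ are possibly non-vanishing, is equivalent to "each tensor $T^a$ has boost order $b < -k$". Second, using the identification of the previous paragraph, the balancedness condition $D_A^{-b-k} T_{(b)} = 0$ translates into $\bigl(D^{-b-k} T_{(b)}^a\bigr) t_a = 0$, which by linear independence of the $t_a$ is equivalent to $D^{-b-k} T_{(b)}^a = 0$ for every $a$ and every relevant $b$. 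Combining the two equivalences gives exactly the statement: $T$ is $k$-balanced (in the light-cone gauge) iff each $T^a$ is an aligned $k$-balanced tensor in the sense of the tensorial definition recalled at the start of the appendix. Finally I would remark that the restriction to a parallely propagated frame with affinely parameterized $\ell$, already standing in this part of the appendix, is what makes the tensorial notion of $k$-balancedness well-posed, so no extra hypotheses are needed.

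\textbf{Main obstacle.} There is no real obstacle here; the content is essentially bookkeeping. The one point that must be stated carefully is the commutation of $D_A$ with the light-cone gauge condition: one has to note that even though $\nabla$ does not, in general, preserve the frame decomposition of a $\mathfrak g$-valued tensor into Lie-algebra components in a gauge-covariant way, the choice $\ell^\mu A_\mu = 0$ makes $D_A$ act purely as $D$ on each component, so the translation between the gauge covariant definition and the tensorial one is exact rather than merely up to lower-boost-weight corrections. After that, everything follows from linear independence of $\{t_a\}$ and the component-wise definition of boost weight parts.
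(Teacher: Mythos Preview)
Your proposal is correct and follows exactly the paper's approach: the paper does not give a separate proof of this lemma but presents it as an immediate consequence of the observation (stated just before the lemma) that in the light-cone gauge $\ell^\mu A_\mu = 0$ the operator $D_A$ reduces to $D$, so that the gauge covariant $k$-balancedness condition becomes, component by component in $\{t_a\}$, the ordinary tensorial one. Your write-up simply makes explicit the bookkeeping (linear independence of $\{t_a\}$, component-wise definition of $T_{(b)}$) that the paper leaves implicit.
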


This simple fact enables one to employ tensorial results in proving that $\DE_{\mu}$ preserves balancedness of gauge covariant fields in degenerate Kundt spacetimes. 

If $T_{\mu \dots \nu}$ is $k$-balanced, then frame components of $\DE_\rho T_{\mu \dots \nu}$ in the light-cone gauge contain, in addition to frame components of $\nabla_\rho T_{\mu \dots \nu}^c$, only two more scalars 
\begin{equation}\label{balancedgauge}
 A_{1}^a f_{ab}^c \eta^b, \qquad  A_{i}^a f_{ab}^c \eta^b,
\end{equation}
 where $\eta^b$ are the frame components of $T_{\mu \dots \nu}^c$ and $A_{a} \equiv e_{(a)}^\mu A_\mu$. However, if $D A_{i} $ and $D^2 A_1 $ vanish, \eqref{balancedgauge} are again $k$-balanced scalars, i.e. each of these two is of boost weight $b^\prime<-k$ and vanishes under the action of $D^{-b^\prime-k}$. 
In the degenerate Kundt spacetime, the two conditions on $A_i$ and $A_1$ are in the light-cone gauge equivalent to $F_{0i}=0$ and $D F_{01} = 0$ (or in a gauge covariant way, $F_{\mu \nu}$ is of type II and $D_A F_{0 1}=0$) in a parallely propagated frame with affinely parametrized $\ell$.

In fact, it can be easily computed that for type II curvature $F_{\mu \nu}$ aligned with an affinely parameterized Kundt vector, $\DE_{\rho} F_{\mu \nu}$ is again of aligned type II if and only if $D_A F_{01}$ and $D_A F_{ij}$ both vanish in a parallely propagated frame. However, the Bianchi identity \eqref{bianchi} implies that $D_A F_{ij} = 0$ follows automatically from $F_{0i} = 0$ and hence
\begin{lemma}\label{typeIIDF}
Let $F_{\mu \nu}$ be a type II field strength aligned with an affinely parameterized Kundt vector. Then, $\DE_{\rho}F_{\mu \nu}$ is also of aligned type II if and only if $D_A F_{01}=0$ in a parallely propagated frame. 
\end{lemma}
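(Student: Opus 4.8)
The plan is to reduce the statement to an explicit computation of the non-negative boost weight frame components of $\DE_\rho F_{\mu\nu}$ in a parallely propagated frame with affinely parameterized Kundt vector $\ell$, and to show that the only such component which is not already forced to vanish by the Kundt conditions and the type II assumption on $F_{\mu\nu}$ is governed precisely by $D_A F_{01}$. Since $k$-balancedness and algebraic type are gauge invariant, I would first fix the light-cone gauge $\ell^\mu A_\mu = 0$, so that $D_A$ reduces to the ordinary directional derivative $D$, and the frame components of $\DE_\rho F_{\mu\nu}$ differ from those of $\nabla_\rho F_{\mu\nu}$ only by the algebraic terms of the form \eqref{balancedgauge}, which are of boost weight strictly lower than that of $F_{\mu\nu}$ and hence cannot spoil alignment or contribute at non-negative boost weight. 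So it suffices to analyze $\nabla_\rho F_{\mu\nu}^a$ componentwise as a tensor.

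The key step is then the following bookkeeping. Saying $F_{\mu\nu}$ is of type II aligned with $\ell$ means that the boost weight $+2$ and $+1$ components $F_{0i}$ vanish (the $+2$ component is automatically absent for a $2$-form), while $F_{01}$, $F_{ij}$ (weight $0$) and $F_{1i}$ (weight $-1$) are unconstrained. For $\DE_\rho F_{\mu\nu}$ to be of aligned type II we need all of its boost weight $+1$ and higher components to vanish; since a covariant derivative of a $2$-form has $b_{\max}=3$, the components to kill are those of boost weight $3$, $2$, $1$. Using the Ricci identities / GHP-type transport equations for a parallely propagated frame along an affinely parameterized Kundt congruence (the Kundt conditions $\kappa_i = \rho_{ij} = 0$, i.e. $D\ell^\mu = 0$ geodesic, non-expanding and non-twisting), one finds that the would-be positive boost weight components of $\nabla F$ are built from $D F_{0i}$, $\delta_{[i}F_{j]k}$-type terms, $D F_{01}$, $D F_{ij}$, together with connection coefficients acting on already-vanishing components. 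The terms involving $D F_{0i}$ and $\delta$-derivatives of $F_{0i}$ drop out by the type II assumption on $F_{\mu\nu}$ itself; this leaves $D F_{01}$ and $D F_{ij}$ as the only surviving obstructions. Finally, the Bianchi identity \eqref{bianchi}, written in GHP form in this frame, relates $D F_{ij}$ to $\delta$-derivatives of $F_{0i}$ and to connection terms contracted with $F_{0i}$, all of which vanish; hence $D F_{ij} = 0$ is automatic, and the single remaining condition is $D F_{01} = 0$, i.e. $D_A F_{01} = 0$ in a gauge covariant formulation. Conversely, if $D_A F_{01}=0$ then all positive boost weight components of $\DE_\rho F_{\mu\nu}$ vanish and it is of aligned type II.

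I expect the main obstacle to be purely organizational rather than conceptual: carefully enumerating which connection coefficients are permitted to be non-zero in a parallely propagated frame along a Kundt congruence (as opposed to a general null congruence), and checking that every term in the positive boost weight part of $\nabla_\rho F_{\mu\nu}$ either contains a factor $F_{0i}$ (hence vanishes by type II of $F$), contains a factor of an expansion/twist coefficient (hence vanishes by the Kundt assumption), or is one of $D F_{01}$, $D F_{ij}$. The Bianchi-identity step that trades $D F_{ij}$ for $\delta F_{0i}$-type terms is the one place where an identity beyond raw frame algebra is used, so I would state it explicitly by referring to the GHP form of \eqref{bianchi} (as recorded in the appendix), exactly as the surrounding text already anticipates. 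No delicate estimate or induction is needed here — the lemma is a one-step consequence once the frame decomposition is laid out.
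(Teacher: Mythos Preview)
Your approach is essentially identical to the paper's: in the text preceding the lemma, the paper fixes the light-cone gauge, reduces to the tensorial case, observes by direct computation that the only positive boost weight obstructions in $\DE_\rho F_{\mu\nu}$ for a Kundt-aligned type II field strength are $D_A F_{01}$ and $D_A F_{ij}$, and then invokes the Bianchi identity \eqref{bianchi} to eliminate the latter from $F_{0i}=0$. One small slip in your write-up: the $A_i^a f^c_{ab}\eta^b$ term in \eqref{balancedgauge} has the \emph{same} formal boost weight as $\eta$, not strictly lower, so the commutator can contribute at boost weight $0$; what matters (and what you actually need) is that in light-cone gauge it cannot contribute at \emph{positive} boost weight, since $A_0=0$ and the nonzero frame components of $F$ have boost weight $\leq 0$.
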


In turn, conditions for balancedness of \eqref{balancedgauge} can be restated covariantly via alignment and algebraic type of $F_{\mu \nu}$ and $\DE_\rho F_{\mu \nu}$, which implies lemma \ref{balder}. 
Obviously, both conditions $F_{0i}=0$ and $D_A F_{01}=0$ are trivially satisfied if $F_{\mu \nu}$ is aligned and null.

\begin{corol}
Let $g$ be a degenerate Kundt spacetime and the field strength $F_{\mu \nu}$ be aligned and null. 
Then, gauge covariant derivative $\DE_{\rho} T_{\mu \dots \nu}$ of a $k$-balanced gauge covariant quantity $T_{\mu \dots \nu}$ is again $k$-balanced. 
\end{corol}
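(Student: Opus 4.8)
The plan is to reduce the claim to the tensorial statement already essentially assembled in the preceding paragraphs of this appendix. The key observation is that, as recorded in the lemma immediately above, a gauge covariant quantity $T_{\mu \dots \nu}$ in the light-cone gauge is $k$-balanced precisely when each Lie algebra component $T_{\mu \dots \nu}^a$ is an aligned $k$-balanced tensor, so the whole problem is gauge invariant and we may work in the light-cone gauge $\ell^\mu A_\mu = 0$ throughout. First I would note that, by the remark following the definition of $k$-balancedness, both hypotheses ($g$ degenerate Kundt, $F_{\mu \nu}$ aligned and null) are gauge invariant, so this reduction is legitimate.

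Next I would observe that $F_{\mu \nu}$ aligned and null is in particular aligned and of type II, and that the Bianchi identity \eqref{bianchi} gives $D_A F_{\mu \nu} = 0$ for a null aligned $F_{\mu \nu}$ (this is exactly the remark made after lemma \ref{balder} in the main text), so in particular $D_A F_{01} = 0$ trivially. Hence by lemma \ref{typeIIDF} the derivative $\DE_\rho F_{\mu \nu}$ is again aligned and of type II. This places us exactly in the situation already analysed above: in the light-cone gauge the frame components of $\DE_\rho T_{\mu \dots \nu}$ are those of $\nabla_\rho T_{\mu \dots \nu}^c$ together with the two extra scalars in \eqref{balancedgauge}, and the conditions $F_{0i} = 0$, $D_A F_{01} = 0$ — both of which hold trivially for an aligned null $F_{\mu \nu}$ — guarantee that these extra scalars are again $k$-balanced.

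The remaining content is then purely tensorial: for each $a$, $T_{\mu \dots \nu}^a$ is an aligned $k$-balanced tensor on the degenerate Kundt spacetime $g$, and the standard tensorial result (the one invoked to prove lemma \ref{balder}, cf. \cite{universalMaxwell}) says that $\nabla_\rho T_{\mu \dots \nu}^a$ is again aligned and $k$-balanced in a parallely propagated frame with affinely parameterized $\ell$. Combining the $\nabla_\rho T_{\mu \dots \nu}^c$ contribution with the two $k$-balanced scalars in \eqref{balancedgauge}, every frame component of $\DE_\rho T_{\mu \dots \nu}$ of boost weight $b' \ge -k$ vanishes and those of boost weight $b' < -k$ are annihilated by $D^{-b'-k} = D_A^{-b'-k}$. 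By the lemma above, this is exactly the statement that $\DE_\rho T_{\mu \dots \nu}$ is $k$-balanced, and since $k$-balancedness is gauge invariant, the conclusion holds in any gauge.

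I expect the only genuinely delicate point to be bookkeeping the boost weights of the two extra scalars $A_1^a f_{ab}^c \eta^b$ and $A_i^a f_{ab}^c \eta^b$: one has to check that $F_{0i} = 0$ forces $D A_i$ to have sufficiently negative boost weight and that $D F_{01} = 0$ does the same for $D^2 A_1$, so that after multiplying by $\eta^b$ (of boost weight $< -k$) the products still lie in boost weight $< -k$ and are killed by the appropriate power of $D$. But all of this has already been carried out in the discussion preceding lemma \ref{typeIIDF}, so here it suffices to cite it; no new calculation is needed, and the corollary is essentially a restatement of lemma \ref{balder} under the stronger (and self-evidently sufficient) hypothesis that $F_{\mu \nu}$ is null rather than merely of type II with $D_A F_{01} = 0$.
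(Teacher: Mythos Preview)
Your proposal is correct and follows essentially the same approach as the paper: the appendix has already established that the extra scalars \eqref{balancedgauge} are $k$-balanced provided $F_{0i}=0$ and $D_A F_{01}=0$, and the paper's proof of the corollary is simply the one-line observation that both conditions are trivially satisfied when $F_{\mu\nu}$ is aligned and null. Your detour through lemma \ref{typeIIDF} to verify that $\DE_\rho F_{\mu\nu}$ is of type II is not strictly needed here (the corollary follows directly from the two conditions rather than via lemma \ref{balder}), but it is not wrong and the overall logic matches.
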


Let us stress again that the results, although proven in the light-cone gauge, are really gauge independent. 
It is also worth noticing that Bianchi identity \eqref{bianchi} implies $D_A F_{\mu \nu} = 0$ for any null field strength $F$ aligned with a Kundt vector $\ell$. Therefore,  we arrive at lemma \ref{balF} and corollary \ref{balFder}.

%----------------------------------------------------------------------------------------------------------------------------------
\section{Yang-Mills fields in the GHP formalism}
%----------------------------------------------------------------------------------------------------------------------------------
In \cite{newmanYM}, a formulation of Yang-Mills theory in four-dimensional Newman-Penrose formalism was given. 
Here, we provide a formulation of the theory in a higher-dimensional generalization of the GHP formalism developed in \cite{ghpclanek}. 
In the paper, the GHP form of the Maxwell equations for $p$-form fields is already presented, see equations (3.3) - (3.7) of \cite{ghpclanek}.   
Adopting analogous notation for the null frame projections of the gauge field $A_\mu = A_\mu^a t_a$ and its field strength $F_{\mu \nu} = F_{\mu \nu}^a t_a$, we define for all $a$ the following GHP scalars
\begin{equation}
A^a \equiv A_{0}^a, \qquad A_i^a \equiv A_{i}^a, \qquad A^{\prime a} \equiv A_{1}^a,
\end{equation}
\begin{equation}
\vphi_i^a \equiv F_{0i}^a, \quad f^a \equiv F_{01}^a, \quad F_{ij}^a \equiv F_{ij}^a, \quad \vphi_i^{\prime a} \equiv F_{1i}^a.
\end{equation}

The field strength GHP scalars can be expressed in terms of the ones associated with the gauge potential as
\begin{align}
\vphi_i^a &= \tho A_i^a - \eth_i A^a + \tau_i^\prime A^a + \kappa_i A^{\prime a} + \rho_{ji}A_j^a + A^b A_i^c f_{bc}^a,\label{ghpF1} \\
F_{ij}^a &=  \eth_i A_j^a - \eth_j A_i^a - 2A^a \rho_{[ij]}^\prime - 2A^{\prime a}\rho_{[ij]} + A_i^b A_j^c f_{bc}^a,\label{ghpF3}\\
f^a &= \tho A^{\prime a} - \tho^\prime A^a + (\tau_i - \tau_i^\prime)A_i^a  + A^{\prime b} A^c f_{bc}^a \label{ghpF2},
\end{align}
and  $\vphi_i^{\prime a}$ can be obtained from $\vphi_i^a$ via the priming operation.  
Various non-trivial null frame projections of the YM equation $\DE^\mu F_{\mu \nu} = 0$ and the Bianchi identity $\DE_{[\mu} F_{\nu \rho]} = 0$ then yield the following set of GHP equations

\begin{widetext}
\subsubsection*{Boost weight +1}
\begin{eqnarray}
   \eth_i \vphi_{i}^a + \tho f^a
        &=&  A_i^b \vphi_i^c f_{cb}^a +  A^b f^c f_{cb}^a +  \tau'_i \vphi_{i}^a - \rho f^a
            + \rho_{[ij]} F_{ij}^a - \kap_i \vphi_{i }^{\prime a},  \label{max:1}\\
  2 \eth_{[i} \vphi_{j]}^a - \tho F_{ij}^a
        &=&  A_{[i}^b \vphi_{j]}^c f_{cb}^a +  A^b F_{ij}^c f_{bc}^a + 2 ( \tau'_{[i} \vphi_{j]}^a + \rho_{k[i} F_{|k |j ]}^a
       +  \rho_{[ij]} f^a  + \kap_{[i} \vphi_{j]}^{\prime a} ),
        \label{max:2}
\end{eqnarray}
\end{widetext}
\begin{widetext}
\subsubsection*{Boost weight 0}
\begin{eqnarray}
  2\tho' \vphi_{i}^a + \eth_j F_{j i}^a -
        \eth_{i}f^a   &=& 2 A^{\prime b} \vphi_{i}^c f_{cb}^a +  A_j^b F_{j i}^c f_{cb}^a + A_i^b f^c f_{bc}^a 
+ 2\tau_j F_{j i}^a  - 2 \tau_{i} f^a   \nn \\ 
          & & +  ( \rho'_{i j} - \rho'_{j i} - \rho' \del_{ij})\vphi_{j}^a
            + ( \rho_{i j} + \rho_{j i} -\rho\del_{i j})\vphi_{j}^{\prime a},\label{max:3}\\
  \eth_{[i}F_{j k ]}^a &=& A_{[i}^b F_{jk]}^c f_{cb}^a +  2 (\vphi_{[i} \rho_{j k ]}^{\prime a}
                                       + \vphi_{[i}^{\prime a} \rho_{j k]} ), \label{max:4}
\end{eqnarray}
\end{widetext}
\noindent
together with the primed equations \eqref{max:1}$'$, \eqref{max:2}$'$ and \eqref{max:3}$'$. Beware of the fact that, under priming operation, the scalars $f^a$ transform as $f^{\prime a} = -f^a$.

Obviously, the GHP equations \eqref{max:1}-\eqref{max:4} differ from the GHP Maxwell equations in \cite{ghpclanek} only by presence of the additional terms emanating from the commutator $[A_\mu,F_{\nu \rho}]$ in $\DE_{\mu} F_{\nu \rho}$. Therefore, it  can be expected 
that some of the results obtained for Maxwell fields from the GHP Maxwell equations can be appropriately extended also to YM fields. 

Indeed, since for a null field strength $F_{\mu \nu}$, null frame $\{ \ell, n, m_{(i)} \}$ can be chosen such that only scalars $\vphi_i^{\prime a}$ are non-vanishing, equations \eqref{max:1} - \eqref{max:4} reduce to 
\begin{align}
 \kap_i \vphi_{i }^{\prime a} &= 0, &\kap_{[i} \vphi_{j]}^{\prime a} = 0,\\
( \rho_{i j} + \rho_{j i} -\rho\del_{i j})\vphi_{j}^{\prime a} &=0,   &\vphi_{[i}^{\prime a} \rho_{j k]} = 0.
\end{align}
Therefore, lemma 4 of \cite{ghpclanek} can be extended to YM fields in the following way 
\begin{lemma}\label{MariotYM}
If a non-vanishing null Yang-Mills field $F_{\mu \nu}$ is aligned with a null vector $\ell$, then $\ell$ is geodetic. Moreover, for its optical matrix $\rho_{ij}$ and the corresponding GHP scalars $\vphi_i^{\prime a}$ associated with $F_{\mu \nu}$, there exists $\omega_i^a$ such that
\begin{equation}\label{optical}
\rho_{(ij)}\vphi_j^{\prime a} = (\rho/2)\vphi_i^{\prime a}, \qquad 
\rho_{[ij]} = \omega_{[i}^a \vphi_{j]}^{\prime a}.
\end{equation}
%Consequently, if the gauge group G has $\dim G > d-2$, then some of the vectors $\vphi^{\prime a}$ 
%are necessarily aligned or vanish, unless the spacetime is shear-free and non-expanding.
\end{lemma}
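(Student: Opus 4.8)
The plan is to establish the claim by examining the null-frame (GHP) projections of the Yang-Mills equation and the Bianchi identity for a null field strength, much as in the proof of lemma 4 of \cite{ghpclanek}, but keeping track of the extra commutator terms that arise in the non-Abelian case. First I would choose a null frame adapted to the null field strength so that only the GHP scalars $\vphi_i^{\prime a} = F_{1i}^a$ are non-vanishing; this is possible precisely because $F_{\mu \nu}$ is of type N (null), as noted before equation \eqref{max:1} in the appendix. With this choice, the full set of GHP equations \eqref{max:1}--\eqref{max:4} collapses dramatically, since every term containing $\vphi_i^a$, $f^a$ or $F_{ij}^a$ drops out, and crucially every term involving the structure constants $f_{bc}^a$ also drops out, because each such term necessarily carries a factor of one of the vanishing scalars $\vphi_i^a$, $f^a$, or $F_{ij}^a$ (the commutator $[A_\mu, F_{\nu \rho}]$ couples the gauge potential to a \emph{component} of $F$, and only the $F_{1i}$ components survive, which appear in the \emph{primed} equations). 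Thus one is left with exactly the four algebraic relations displayed just before lemma \ref{MariotYM}: $\kappa_i \vphi_i^{\prime a} = 0$, $\kappa_{[i}\vphi_{j]}^{\prime a} = 0$, $(\rho_{ij}+\rho_{ji}-\rho\delta_{ij})\vphi_j^{\prime a} = 0$, and $\vphi_{[i}^{\prime a}\rho_{jk]} = 0$.

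Next I would extract the geometric conclusions from these relations. From $\kappa_i \vphi_i^{\prime a} = 0$ together with $\kappa_{[i}\vphi_{j]}^{\prime a} = 0$ one concludes, since $F_{\mu \nu}$ is non-vanishing (so $\vphi_i^{\prime a} \neq 0$ for some $i, a$), that $\kappa_i = 0$, i.e. $\ell$ is geodetic; this is the standard argument and I would carry it out as in \cite{ghpclanek} (contract the second relation with $\vphi_j^{\prime a}$, use the first, and invoke non-degeneracy of $\kappa_{ab}$ to rule out a non-trivial solution). The relation $(\rho_{(ij)} - (\rho/2)\delta_{ij})\vphi_j^{\prime a} = 0$ is then immediately the first equation of \eqref{optical}. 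For the second, I would use $\vphi_{[i}^{\prime a}\rho_{jk]} = 0$: expanding the total antisymmetrization and contracting appropriately with $\vphi^{\prime a}$-type quantities (again using $\kappa_{ab}$ positive-definite so that $\sum_a \vphi_i^{\prime a}\vphi_j^{\prime a}$ behaves like a genuine inner-product structure on the $i$-indices), one shows $\rho_{[ij]}$ lies in the span of $\vphi_{[i}^{\prime a}$, which gives the existence of the $\omega_i^a$ with $\rho_{[ij]} = \omega_{[i}^a\vphi_{j]}^{\prime a}$.

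The step I expect to be the main obstacle is the careful verification that the structure-constant terms genuinely drop out under the type-N frame choice --- i.e. that there is no subtlety whereby a commutator term survives because the gauge potential itself has components that conspire with $\vphi_i^{\prime a}$. One must check this directly from the explicit right-hand sides of \eqref{max:1}--\eqref{max:4}: in \eqref{max:1} the $f_{bc}^a$ terms are $A_i^b\vphi_i^c f_{cb}^a$ and $A^b f^c f_{cb}^a$, both containing a vanishing scalar; in \eqref{max:2} they are $A_{[i}^b\vphi_{j]}^c f_{cb}^a$ and $A^b F_{ij}^c f_{bc}^a$, again containing $\vphi$ or $F_{ij}$; similarly for \eqref{max:3}, \eqref{max:4}. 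The surviving non-trivial information sits in the primed equations \eqref{max:1}$'$, \eqref{max:2}$'$, whose structure-constant terms likewise vanish for the same reason. A secondary technical point is that I am free to fix the light-cone gauge $\ell^\mu A_\mu = 0$ (so $A^a = 0$) to simplify matters further, but this is not even needed since the relevant terms vanish regardless. Once these cancellations are confirmed, the remainder is the same linear-algebra manipulation as in the Maxwell case of \cite{ghpclanek}, and the lemma follows.
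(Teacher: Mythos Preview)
Your proposal is correct and follows essentially the same route as the paper: choose a frame adapted to the null $F$ so only $\vphi_i^{\prime a}$ survives, observe that the structure-constant terms in \eqref{max:1}--\eqref{max:4} all carry a factor of a vanishing scalar and drop out, and then read off the four algebraic constraints displayed just before the lemma, from which the conclusions follow exactly as in lemma~4 of \cite{ghpclanek}. One small remark: your appeals to non-degeneracy or positive-definiteness of $\kappa_{ab}$ are not actually needed (and the lemma does not assume the gauge group is compact semi-simple) --- the linear-algebra step uses only that the transverse spacetime metric $\delta_{ij}$ is positive-definite, so that e.g.\ $\sum_{j}(\vphi_j^{\prime a_0})^2>0$ for some fixed $a_0$ with $\vphi^{\prime a_0}\neq 0$.
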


\noindent
Note that, in four dimensions, the lemma reduces to the well-known result that any non-vanishing null YM field is aligned with 
shear-free geodetic congruence \cite{newmanYM,tafel}. 

In a similar way, more results on YM fields (both the test fields in a fixed background and the ones coupled to gravity) can be obtained. Here, we provide an elementary proof of the following result stated originally for Einstein-Maxwell solutions in arbitrary dimension \cite{EMtheory}.

\begin{theorem}\label{NNadmitsnullF}
If a spacetime of Weyl and traceless Ricci type N with constant Ricci scalar admits a non-vanishing null Yang-Mills field, then the spacetime is Kundt.
\end{theorem}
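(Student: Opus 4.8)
The goal is to show that the optical matrix $\rho_{ij}$ of the null direction $\ell$ aligned with $F_{\mu \nu}$ vanishes identically, since then $\ell$ is a non-expanding, non-twisting, shear-free null geodesic, i.e. a Kundt vector, and the metric is Kundt.

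I would start from Lemma \ref{MariotYM}: a non-vanishing null YM field aligned with $\ell$ forces $\ell$ to be geodetic and, in a null frame adapted to $F$ (only the GHP scalars $\varphi_i^{\prime a}$ non-vanishing), gives $\rho_{(ij)}\varphi_j^{\prime a} = (\rho/2)\varphi_i^{\prime a}$ and $\rho_{[ij]} = \omega_{[i}^a\varphi_{j]}^{\prime a}$, with $\rho \equiv \rho_{kk}$. Passing to a parallelly propagated frame with affinely parametrized $\ell$ and fixing the light-cone gauge $\ell^\mu A_\mu = 0$ (legitimate since all quantities involved are gauge invariant, and then $D_A = D$), the primed version of the GHP YM equation \eqref{max:3} collapses for a null $F$ to the propagation law $D\varphi_i^{\prime a} = \rho_{[ij]}\varphi_j^{\prime a} - (\rho/2)\varphi_i^{\prime a}$, while \eqref{max:4} together with the geodesic property supply the algebraic constraints $\varphi_{[i}^{\prime a}\rho_{jk]} = 0$ and $\kappa_i = 0$.

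Next I would bring in the curvature. Because $\ell$ is the aligned null direction of the type N Weyl and of the type N traceless Ricci tensor (which in the Einstein--Yang--Mills reading underlying \cite{EMtheory} is automatic, the null YM stress tensor fixing the traceless Ricci to be proportional to $\ell_\mu \ell_\nu$; in the test-field reading it is part of the hypothesis), all boost weight $\ge 0$ parts of the Riemann tensor vanish apart from the constant-curvature piece, so $C_{0i0j} = 0$ and $R_{00} = 0$, whence $R_{0i0j} = 0$. The Sachs (Ricci) identity for the optical matrix in the parallel affine frame therefore reduces to the source-free matrix Riccati equation $D\rho_{ij} = -\rho_{ik}\rho_{kj}$, and the Weyl type N Bianchi identities in GHP form yield the remaining constraints tying $\rho_{ij}$ to the single surviving Weyl scalar $\Omega_{ij}^{\prime}$.

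It then remains to combine everything. From Lemma \ref{MariotYM} the symmetric part $\rho_{(ij)}$ acts as $(\rho/2)\,\mathrm{id}$ on the subspace spanned by the $\varphi_i^{\prime a}$; feeding this and the trace condition into the Riccati equation produces a closed first-order system for $\rho$ and the twist of the schematic form $D\rho = -\tfrac12 \rho^2 + 2\tau^2$, and the Weyl type N Bianchi constraints, together with \eqref{max:4} and the $D\varphi^{\prime}$ law, are then used to force $\rho = 0$ and the twist to vanish, hence (via the Riccati equation restricted to the complement of $\mathrm{span}\{\varphi_i^{\prime a}\}$) the whole of $\rho_{ij}$, precisely along the lines of the Einstein--Maxwell analysis in \cite{EMtheory} and of lemmas 3 and 4 of \cite{ghpclanek}. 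The main obstacle is exactly this last step: a careful case analysis of the admissible canonical forms of the higher-dimensional optical matrix, verifying that it is the type N \emph{curvature} constraints (and not merely the YM optical constraints, which in higher dimensions permit expanding shearing congruences) that eliminate the non-Kundt possibilities, while also checking that every non-Abelian structure-constant term in \eqref{max:1}--\eqref{max:4} drops out of the boost weight non-positive, rank $\le 2$ expressions that enter the argument.
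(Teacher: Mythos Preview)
Your outline heads in a workable direction but stops at the hard step you yourself flag, and the paper's proof bypasses that step entirely by citing a structural result you do not use. Rather than re-deriving the admissible optical geometries from the Sachs/Riccati equation and the type~N Bianchi identities, the paper invokes proposition~5.4 of \cite{NNspacetimes}, which already packages all of those constraints (for Weyl and traceless Ricci type~N with constant $R$) into a canonical form for $\rho_{ij}$: only a single $2\times 2$ block survives, with $\rho_{22}=s$, $\rho_{33}=sb$, $\rho_{23}=-\rho_{32}=sa$ and $b\neq 1$. From there the proof is a few lines of algebra using only the \emph{algebraic} content of Lemma~\ref{MariotYM}: the eigenvalue condition $\rho_{(ij)}\varphi_j^{\prime a}=(\rho/2)\varphi_i^{\prime a}$ forces $\rho=s(1+b)=0$, hence $s=0$ (Kundt) or $b=-1$; in the latter case, if $s,a\neq 0$, the same eigenvalue condition kills $\varphi_2^{\prime a},\varphi_3^{\prime a}$ and the twist relation $\rho_{[ij]}=\omega_{[i}^a\varphi_{j]}^{\prime a}$ kills the remaining $\varphi_i^{\prime a}$, contradicting $F\neq 0$. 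No $D\varphi^{\prime}$ propagation law, no Riccati analysis, no case-by-case elimination beyond this is needed.

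Two smaller points. First, alignment of $F_{\mu\nu}$ with the common Weyl/Ricci AND is \emph{not} part of the hypothesis in the test-field reading; the paper gets it from proposition~3.1 of \cite{NNspacetimes}, whereas you leave it as an assumption. Second, your schematic trace equation $D\rho=-\tfrac12\rho^2+2\tau^2$ is four-dimensional in shape; in higher dimensions the traced Sachs equation reads $D\rho=-\rho_{ij}\rho_{ji}$, and the twist does not enter as a single scalar, so if you were to pursue your route you would need the full matrix equation and the genuine higher-dimensional canonical-form analysis---which is precisely what \cite{NNspacetimes} supplies and the paper cites.
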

\begin{proof}
Due to proposition 3.1 of \cite{NNspacetimes}, the Weyl and the Ricci tensors (and consequently also $F_{\mu \nu}$) are aligned with the same null vector, say $\ell$. 
Moreover, according to proposition 5.4 of \cite{NNspacetimes}, null frame be chosen such that the optical matrix $\rho_{ij}$ takes the form such that only the following components are non-vanishing 
\begin{equation}
\begin{gathered}
\rho_{22}=s, \qquad \rho_{33}= s b,\\ 
\rho_{23}= s a, \qquad \rho_{32}= -s a,
\end{gathered}
\end{equation}
with $b\neq 1$, otherwise $F_{\mu \nu}$ would have to be zero. 
Thus, the expansion of $\ell$ reads $\rho = s (1+b)$ and the first equation of \eqref{optical} tells us that of the eigenvectors is $\rho/2$. This 
means that either $\rho = 0$ or $b=1$ and, since $b = 1$ is forbidden, we have $\rho =0$. Consequently, either $s=0$ (in which case the spacetime is Kundt and we are done) or $b=-1$.  Also, if $a = 0$, the spacetime is again Kundt due to the Sachs equation \cite{RicciHD}. 
Assume thus that $s,a \neq 0$ and $b=-1$. Then, the first equation of \eqref{optical} implies $\vphi_2^{\prime a},\vphi_3^{\prime a} = 0$. 
The second of \eqref{optical} then gives us $\vphi_i^{\prime a} = 0$ for all $i>3$. Hence $F_{\mu \nu}$ vanishes completely, which is a contradiction and the spacetime is indeed Kundt.
\end{proof}

Consider now a solution  $(g_{\mu \nu},F_{\mu \nu})$  of the Einstein-YM equations with a non-vanishing null field strength $F_{\mu \nu}$. 
Then, $g_{\mu \nu}$ is of traceless Ricci type N. Therefore, in view of theorem \ref{NNadmitsnullF}, we immediately obtain
\begin{corol}
All Weyl type N solutions of the Einstein-Yang-Mills equations with a non-vanishing null field strength are aligned and Kundt.
\end{corol}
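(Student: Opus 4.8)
The plan is to deduce the corollary directly from theorem \ref{NNadmitsnullF}: the only thing to verify is that an Einstein-Yang-Mills solution with non-vanishing null field strength is automatically a spacetime of traceless Ricci type N with constant Ricci scalar, so that all hypotheses of that theorem are in force. First I would write the Einstein-Yang-Mills equations (with a cosmological constant $\Lambda$) in the form $R_{\mu\nu} - \tfrac{1}{2} R g_{\mu\nu} + \Lambda g_{\mu\nu} \propto T_{\mu\nu}$, where $T_{\mu\nu} = \Tr\big(F_{\mu\rho} F\indices{_\nu^\rho} - \tfrac{1}{4} g_{\mu\nu} F_{\rho\sigma} F^{\rho\sigma}\big)$ is the Yang-Mills stress-energy tensor. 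Since $F_{\mu\nu}$ is null it is $VSI_0$ (theorem \ref{FVSI}), so in particular $\Tr F_{\rho\sigma}F^{\rho\sigma}=0$ and $T_{\mu\nu}$ reduces to $\Tr F_{\mu\rho} F\indices{_\nu^\rho}$, precisely the tensor analysed in the proof of theorem \ref{VSI0char}: for $F_{\mu\nu}$ aligned and null with $\ell$ only its $\ell_\mu\ell_\nu$ component survives. Thus $T_{\mu\nu}$ is traceless and of type N, aligned with $F_{\mu\nu}$.

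From here the curvature consequences are immediate. Taking the trace of the field equation and using $g^{\mu\nu}T_{\mu\nu}=0$ fixes the Ricci scalar purely in terms of $\Lambda$ and the dimension, so $R$ is constant; subtracting its trace part then identifies the traceless Ricci tensor with a constant multiple of $T_{\mu\nu}$, hence it is of type N and aligned with $F_{\mu\nu}$. Adjoining the hypothesis that the Weyl tensor is of type N, we are exactly in the situation of theorem \ref{NNadmitsnullF}, which gives that the spacetime is Kundt. Finally, the ``aligned'' part of the statement, namely that the multiple WAND of the Weyl tensor, the aligned direction of the traceless Ricci tensor and the principal null direction of $F_{\mu\nu}$ all coincide with the Kundt null direction, follows from the same input used in the proof of theorem \ref{NNadmitsnullF}, i.e.\ proposition 3.1 of \cite{NNspacetimes}, which forces type N curvature objects on the spacetime to be mutually aligned.

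Because the argument is just a specialisation of theorem \ref{NNadmitsnullF}, I do not anticipate a genuine obstacle; the only step requiring any care is confirming the algebraic type (type N) and the tracelessness of the Yang-Mills stress-energy tensor, and both are already contained in the discussion surrounding theorems \ref{FVSI} and \ref{VSI0char}.
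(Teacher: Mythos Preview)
Your proposal is correct and follows the same route as the paper: the paper simply observes that an Einstein--Yang--Mills solution with a non-vanishing null field strength has traceless Ricci tensor of type N, and then invokes theorem \ref{NNadmitsnullF}. You carry out exactly this reduction, merely filling in the details the paper leaves implicit---namely, that $T_{\mu\nu}$ is traceless and of type N (hence the Ricci scalar is constant and the traceless Ricci tensor is type N aligned with $F_{\mu\nu}$)---before appealing to theorem \ref{NNadmitsnullF} and proposition 3.1 of \cite{NNspacetimes} for the alignment.
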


Thus, although null YM test fields are not necessarily aligned with a shear-free null congruence in higher dimensions, it is still geodetic and, once 
additional assumptions on structure of curvature tensors of the background spacetime are taken into account, further restrictions on geometry of the corresponding null congruence follow.

%----------------------------------------------------------------------------------------------------------------------------------

% The \nocite command causes all entries in a bibliography to be printed out
% whether or not they are actually referenced in the text. This is appropriate
% for the sample file to show the different styles of references, but authors
% most likely will not want to use it.
%\nocite{*}

%\bibliography{refs}% Produces the bibliography via BibTeX.

\begin{thebibliography}{46}%
\makeatletter
\providecommand \@ifxundefined [1]{%
 \@ifx{#1\undefined}
}%
\providecommand \@ifnum [1]{%
 \ifnum #1\expandafter \@firstoftwo
 \else \expandafter \@secondoftwo
 \fi
}%
\providecommand \@ifx [1]{%
 \ifx #1\expandafter \@firstoftwo
 \else \expandafter \@secondoftwo
 \fi
}%
\providecommand \natexlab [1]{#1}%
\providecommand \enquote  [1]{``#1''}%
\providecommand \bibnamefont  [1]{#1}%
\providecommand \bibfnamefont [1]{#1}%
\providecommand \citenamefont [1]{#1}%
\providecommand \href@noop [0]{\@secondoftwo}%
\providecommand \href [0]{\begingroup \@sanitize@url \@href}%
\providecommand \@href[1]{\@@startlink{#1}\@@href}%
\providecommand \@@href[1]{\endgroup#1\@@endlink}%
\providecommand \@sanitize@url [0]{\catcode `\\12\catcode `\$12\catcode
  `\&12\catcode `\#12\catcode `\^12\catcode `\_12\catcode `\%12\relax}%
\providecommand \@@startlink[1]{}%
\providecommand \@@endlink[0]{}%
\providecommand \url  [0]{\begingroup\@sanitize@url \@url }%
\providecommand \@url [1]{\endgroup\@href {#1}{\urlprefix }}%
\providecommand \urlprefix  [0]{URL }%
\providecommand \Eprint [0]{\href }%
\providecommand \doibase [0]{https://doi.org/}%
\providecommand \selectlanguage [0]{\@gobble}%
\providecommand \bibinfo  [0]{\@secondoftwo}%
\providecommand \bibfield  [0]{\@secondoftwo}%
\providecommand \translation [1]{[#1]}%
\providecommand \BibitemOpen [0]{}%
\providecommand \bibitemStop [0]{}%
\providecommand \bibitemNoStop [0]{.\EOS\space}%
\providecommand \EOS [0]{\spacefactor3000\relax}%
\providecommand \BibitemShut  [1]{\csname bibitem#1\endcsname}%
\let\auto@bib@innerbib\@empty
%</preamble>
\bibitem [{\citenamefont {Milson}\ \emph {et~al.}(2005)\citenamefont {Milson},
  \citenamefont {Coley}, \citenamefont {Pravda},\ and\ \citenamefont
  {Pravdov{\'a}}}]{alignment}%
  \BibitemOpen
  \bibfield  {author} {\bibinfo {author} {\bibfnamefont {R.}~\bibnamefont
  {Milson}}, \bibinfo {author} {\bibfnamefont {A.}~\bibnamefont {Coley}},
  \bibinfo {author} {\bibfnamefont {V.}~\bibnamefont {Pravda}},\ and\ \bibinfo
  {author} {\bibfnamefont {A.}~\bibnamefont {Pravdov{\'a}}},\ }\bibfield
  {title} {\bibinfo {title} {Alignment and algebraically special tensors in
  {Lorentzian} geometry},\ }\href@noop {} {\bibfield  {journal} {\bibinfo
  {journal} {International Journal of Geometric Methods in Modern Physics}\
  }\textbf {\bibinfo {volume} {2}},\ \bibinfo {pages} {41} (\bibinfo {year}
  {2005})}\BibitemShut {NoStop}%
\bibitem [{\citenamefont {Ortaggio}\ \emph {et~al.}(2013)\citenamefont
  {Ortaggio}, \citenamefont {Pravda},\ and\ \citenamefont
  {Pravdov{\'a}}}]{review}%
  \BibitemOpen
  \bibfield  {author} {\bibinfo {author} {\bibfnamefont {M.}~\bibnamefont
  {Ortaggio}}, \bibinfo {author} {\bibfnamefont {V.}~\bibnamefont {Pravda}},\
  and\ \bibinfo {author} {\bibfnamefont {A.}~\bibnamefont {Pravdov{\'a}}},\
  }\bibfield  {title} {\bibinfo {title} {Algebraic classification of higher
  dimensional spacetimes based on null alignment},\ }\href@noop {} {\bibfield
  {journal} {\bibinfo  {journal} {Classical and Quantum Gravity}\ }\textbf
  {\bibinfo {volume} {30}},\ \bibinfo {pages} {013001} (\bibinfo {year}
  {2013})}\BibitemShut {NoStop}%
\bibitem [{\citenamefont {Pravda}\ \emph {et~al.}(2002)\citenamefont {Pravda},
  \citenamefont {Pravdov{\'a}}, \citenamefont {Coley},\ and\ \citenamefont
  {Milson}}]{bal1}%
  \BibitemOpen
  \bibfield  {author} {\bibinfo {author} {\bibfnamefont {V.}~\bibnamefont
  {Pravda}}, \bibinfo {author} {\bibfnamefont {A.}~\bibnamefont
  {Pravdov{\'a}}}, \bibinfo {author} {\bibfnamefont {A.}~\bibnamefont
  {Coley}},\ and\ \bibinfo {author} {\bibfnamefont {R.}~\bibnamefont
  {Milson}},\ }\bibfield  {title} {\bibinfo {title} {All spacetimes with
  vanishing curvature invariants},\ }\href@noop {} {\bibfield  {journal}
  {\bibinfo  {journal} {Classical and Quantum Gravity}\ }\textbf {\bibinfo
  {volume} {19}},\ \bibinfo {pages} {6213} (\bibinfo {year}
  {2002})}\BibitemShut {NoStop}%
\bibitem [{\citenamefont {Coley}\ \emph {et~al.}(2004)\citenamefont {Coley},
  \citenamefont {Milson}, \citenamefont {Pravda},\ and\ \citenamefont
  {Pravdov{\'a}}}]{VSIinHD}%
  \BibitemOpen
  \bibfield  {author} {\bibinfo {author} {\bibfnamefont {A.}~\bibnamefont
  {Coley}}, \bibinfo {author} {\bibfnamefont {R.}~\bibnamefont {Milson}},
  \bibinfo {author} {\bibfnamefont {V.}~\bibnamefont {Pravda}},\ and\ \bibinfo
  {author} {\bibfnamefont {A.}~\bibnamefont {Pravdov{\'a}}},\ }\bibfield
  {title} {\bibinfo {title} {Vanishing scalar invariant spacetimes in higher
  dimensions},\ }\href@noop {} {\bibfield  {journal} {\bibinfo  {journal}
  {Classical and Quantum Gravity}\ }\textbf {\bibinfo {volume} {21}},\ \bibinfo
  {pages} {5519} (\bibinfo {year} {2004})}\BibitemShut {NoStop}%
\bibitem [{\citenamefont {Hervik}\ \emph {et~al.}(2014)\citenamefont {Hervik},
  \citenamefont {Pravda},\ and\ \citenamefont
  {Pravdov{\'a}}}]{typeIIINuniversal}%
  \BibitemOpen
  \bibfield  {author} {\bibinfo {author} {\bibfnamefont {S.}~\bibnamefont
  {Hervik}}, \bibinfo {author} {\bibfnamefont {V.}~\bibnamefont {Pravda}},\
  and\ \bibinfo {author} {\bibfnamefont {A.}~\bibnamefont {Pravdov{\'a}}},\
  }\bibfield  {title} {\bibinfo {title} {Type {III} and {N} universal
  spacetimes},\ }\href@noop {} {\bibfield  {journal} {\bibinfo  {journal}
  {Classical and Quantum Gravity}\ }\textbf {\bibinfo {volume} {31}},\ \bibinfo
  {pages} {215005} (\bibinfo {year} {2014})}\BibitemShut {NoStop}%
\bibitem [{\citenamefont {Hervik}\ \emph {et~al.}(2018)\citenamefont {Hervik},
  \citenamefont {Ortaggio},\ and\ \citenamefont {Pravda}}]{universalMaxwell}%
  \BibitemOpen
  \bibfield  {author} {\bibinfo {author} {\bibfnamefont {S.}~\bibnamefont
  {Hervik}}, \bibinfo {author} {\bibfnamefont {M.}~\bibnamefont {Ortaggio}},\
  and\ \bibinfo {author} {\bibfnamefont {V.}~\bibnamefont {Pravda}},\
  }\bibfield  {title} {\bibinfo {title} {Universal electromagnetic fields},\
  }\href@noop {} {\bibfield  {journal} {\bibinfo  {journal} {Classical and
  Quantum Gravity}\ }\textbf {\bibinfo {volume} {35}},\ \bibinfo {pages}
  {175017} (\bibinfo {year} {2018})}\BibitemShut {NoStop}%
\bibitem [{\citenamefont {Coley}\ \emph
  {et~al.}(2006{\natexlab{a}})\citenamefont {Coley}, \citenamefont {Fuster},
  \citenamefont {Hervik},\ and\ \citenamefont {Pelavas}}]{HDVSI}%
  \BibitemOpen
  \bibfield  {author} {\bibinfo {author} {\bibfnamefont {A.}~\bibnamefont
  {Coley}}, \bibinfo {author} {\bibfnamefont {A.}~\bibnamefont {Fuster}},
  \bibinfo {author} {\bibfnamefont {S.}~\bibnamefont {Hervik}},\ and\ \bibinfo
  {author} {\bibfnamefont {N.}~\bibnamefont {Pelavas}},\ }\bibfield  {title}
  {\bibinfo {title} {Higher dimensional {VSI} spacetimes},\ }\href@noop {}
  {\bibfield  {journal} {\bibinfo  {journal} {Classical and Quantum Gravity}\
  }\textbf {\bibinfo {volume} {23}},\ \bibinfo {pages} {7431} (\bibinfo {year}
  {2006}{\natexlab{a}})}\BibitemShut {NoStop}%
\bibitem [{\citenamefont {Coley}\ \emph
  {et~al.}(2006{\natexlab{b}})\citenamefont {Coley}, \citenamefont {Hervik},\
  and\ \citenamefont {Pelavas}}]{CSIsoaces}%
  \BibitemOpen
  \bibfield  {author} {\bibinfo {author} {\bibfnamefont {A.}~\bibnamefont
  {Coley}}, \bibinfo {author} {\bibfnamefont {S.}~\bibnamefont {Hervik}},\ and\
  \bibinfo {author} {\bibfnamefont {N.}~\bibnamefont {Pelavas}},\ }\bibfield
  {title} {\bibinfo {title} {On spacetimes with constant scalar invariants},\
  }\href@noop {} {\bibfield  {journal} {\bibinfo  {journal} {Classical and
  Quantum Gravity}\ }\textbf {\bibinfo {volume} {23}},\ \bibinfo {pages} {3053}
  (\bibinfo {year} {2006}{\natexlab{b}})}\BibitemShut {NoStop}%
\bibitem [{\citenamefont {Hervik}(2011)}]{hervikalignment}%
  \BibitemOpen
  \bibfield  {author} {\bibinfo {author} {\bibfnamefont {S.}~\bibnamefont
  {Hervik}},\ }\bibfield  {title} {\bibinfo {title} {A spacetime not
  characterized by its invariants is of aligned type {II}},\ }\href@noop {}
  {\bibfield  {journal} {\bibinfo  {journal} {Classical and Quantum Gravity}\
  }\textbf {\bibinfo {volume} {28}},\ \bibinfo {pages} {215009} (\bibinfo
  {year} {2011})}\BibitemShut {NoStop}%
\bibitem [{\citenamefont {Ortaggio}\ and\ \citenamefont
  {Pravda}(2016)}]{VSIelmag}%
  \BibitemOpen
  \bibfield  {author} {\bibinfo {author} {\bibfnamefont {M.}~\bibnamefont
  {Ortaggio}}\ and\ \bibinfo {author} {\bibfnamefont {V.}~\bibnamefont
  {Pravda}},\ }\bibfield  {title} {\bibinfo {title} {Electromagnetic fields
  with vanishing scalar invariants},\ }\href@noop {} {\bibfield  {journal}
  {\bibinfo  {journal} {Classical and Quantum Gravity}\ }\textbf {\bibinfo
  {volume} {33}},\ \bibinfo {pages} {115010} (\bibinfo {year}
  {2016})}\BibitemShut {NoStop}%
\bibitem [{\citenamefont {Coley}\ \emph {et~al.}(2007)\citenamefont {Coley},
  \citenamefont {Fuster}, \citenamefont {Hervik},\ and\ \citenamefont
  {Pelavas}}]{VSIsugra}%
  \BibitemOpen
  \bibfield  {author} {\bibinfo {author} {\bibfnamefont {A.}~\bibnamefont
  {Coley}}, \bibinfo {author} {\bibfnamefont {A.}~\bibnamefont {Fuster}},
  \bibinfo {author} {\bibfnamefont {S.}~\bibnamefont {Hervik}},\ and\ \bibinfo
  {author} {\bibfnamefont {N.}~\bibnamefont {Pelavas}},\ }\bibfield  {title}
  {\bibinfo {title} {Vanishing scalar invariant spacetimes in supergravity},\
  }\href@noop {} {\bibfield  {journal} {\bibinfo  {journal} {Journal of High
  Energy Physics}\ }\textbf {\bibinfo {volume} {2007}},\ \bibinfo {pages} {032}
  (\bibinfo {year} {2007})}\BibitemShut {NoStop}%
\bibitem [{\citenamefont {Coley}\ \emph
  {et~al.}(2009{\natexlab{a}})\citenamefont {Coley}, \citenamefont {Fuster},\
  and\ \citenamefont {Hervik}}]{CSIsugra}%
  \BibitemOpen
  \bibfield  {author} {\bibinfo {author} {\bibfnamefont {A.}~\bibnamefont
  {Coley}}, \bibinfo {author} {\bibfnamefont {A.}~\bibnamefont {Fuster}},\ and\
  \bibinfo {author} {\bibfnamefont {S.}~\bibnamefont {Hervik}},\ }\bibfield
  {title} {\bibinfo {title} {Supergravity solutions with constant scalar
  invariants},\ }\href@noop {} {\bibfield  {journal} {\bibinfo  {journal}
  {International Journal of Modern Physics A}\ }\textbf {\bibinfo {volume}
  {24}},\ \bibinfo {pages} {1119} (\bibinfo {year}
  {2009}{\natexlab{a}})}\BibitemShut {NoStop}%
\bibitem [{\citenamefont {Hervik}\ \emph {et~al.}(2015)\citenamefont {Hervik},
  \citenamefont {M{\'a}lek}, \citenamefont {Pravda},\ and\ \citenamefont
  {Pravdov{\'a}}}]{typeIIuniversal}%
  \BibitemOpen
  \bibfield  {author} {\bibinfo {author} {\bibfnamefont {S.}~\bibnamefont
  {Hervik}}, \bibinfo {author} {\bibfnamefont {T.}~\bibnamefont {M{\'a}lek}},
  \bibinfo {author} {\bibfnamefont {V.}~\bibnamefont {Pravda}},\ and\ \bibinfo
  {author} {\bibfnamefont {A.}~\bibnamefont {Pravdov{\'a}}},\ }\bibfield
  {title} {\bibinfo {title} {Type {II} universal spacetimes},\ }\href@noop {}
  {\bibfield  {journal} {\bibinfo  {journal} {Classical and Quantum Gravity}\
  }\textbf {\bibinfo {volume} {32}},\ \bibinfo {pages} {245012} (\bibinfo
  {year} {2015})}\BibitemShut {NoStop}%
\bibitem [{\citenamefont {Coley}\ and\ \citenamefont
  {Hervik}(2011)}]{CSIuniversal}%
  \BibitemOpen
  \bibfield  {author} {\bibinfo {author} {\bibfnamefont {A.}~\bibnamefont
  {Coley}}\ and\ \bibinfo {author} {\bibfnamefont {S.}~\bibnamefont {Hervik}},\
  }\bibfield  {title} {\bibinfo {title} {Universality and constant scalar
  curvature invariants},\ }\href@noop {} {\bibfield  {journal} {\bibinfo
  {journal} {ISRN Geometry}\ }\textbf {\bibinfo {volume} {248615}} (\bibinfo
  {year} {2011})}\BibitemShut {NoStop}%
\bibitem [{\citenamefont {Kuchynka}\ \emph {et~al.}(2019)\citenamefont
  {Kuchynka}, \citenamefont {M{\'a}lek}, \citenamefont {Pravda},\ and\
  \citenamefont {Pravdov{\'a}}}]{Gorilla}%
  \BibitemOpen
  \bibfield  {author} {\bibinfo {author} {\bibfnamefont {M.}~\bibnamefont
  {Kuchynka}}, \bibinfo {author} {\bibfnamefont {T.}~\bibnamefont {M{\'a}lek}},
  \bibinfo {author} {\bibfnamefont {V.}~\bibnamefont {Pravda}},\ and\ \bibinfo
  {author} {\bibfnamefont {A.}~\bibnamefont {Pravdov{\'a}}},\ }\bibfield
  {title} {\bibinfo {title} {Almost universal spacetimes in higher-order
  gravity theories},\ }\href@noop {} {\bibfield  {journal} {\bibinfo  {journal}
  {Physical Review D}\ }\textbf {\bibinfo {volume} {99}},\ \bibinfo {pages}
  {024043} (\bibinfo {year} {2019})}\BibitemShut {NoStop}%
\bibitem [{\citenamefont {Ortaggio}\ and\ \citenamefont
  {Pravda}(2018)}]{quantumelmag}%
  \BibitemOpen
  \bibfield  {author} {\bibinfo {author} {\bibfnamefont {M.}~\bibnamefont
  {Ortaggio}}\ and\ \bibinfo {author} {\bibfnamefont {V.}~\bibnamefont
  {Pravda}},\ }\bibfield  {title} {\bibinfo {title} {Electromagnetic fields
  with vanishing quantum corrections},\ }\href@noop {} {\bibfield  {journal}
  {\bibinfo  {journal} {Physics Letters B}\ }\textbf {\bibinfo {volume}
  {779}},\ \bibinfo {pages} {393} (\bibinfo {year} {2018})}\BibitemShut
  {NoStop}%
\bibitem [{\citenamefont {Kuchynka}\ and\ \citenamefont
  {Ortaggio}(2019)}]{EMuniversal}%
  \BibitemOpen
  \bibfield  {author} {\bibinfo {author} {\bibfnamefont {M.}~\bibnamefont
  {Kuchynka}}\ and\ \bibinfo {author} {\bibfnamefont {M.}~\bibnamefont
  {Ortaggio}},\ }\bibfield  {title} {\bibinfo {title} {Einstein-{Maxwell}
  solutions with vanishing higher-order corrections},\ }\href@noop {}
  {\bibfield  {journal} {\bibinfo  {journal} {Physical Review D}\ }\textbf
  {\bibinfo {volume} {99}},\ \bibinfo {pages} {044048} (\bibinfo {year}
  {2019})}\BibitemShut {NoStop}%
\bibitem [{\citenamefont {Gross}\ and\ \citenamefont {Witten}(1986)}]{witten}%
  \BibitemOpen
  \bibfield  {author} {\bibinfo {author} {\bibfnamefont {D.~J.}\ \bibnamefont
  {Gross}}\ and\ \bibinfo {author} {\bibfnamefont {E.}~\bibnamefont {Witten}},\
  }\bibfield  {title} {\bibinfo {title} {Superstring modifications of
  {Einstein's} equations},\ }\href@noop {} {\bibfield  {journal} {\bibinfo
  {journal} {Nucl. Phys. B}\ }\textbf {\bibinfo {volume} {277}},\ \bibinfo
  {pages} {1} (\bibinfo {year} {1986})}\BibitemShut {NoStop}%
\bibitem [{\citenamefont {Metsaev}\ and\ \citenamefont
  {Tseytlin}(1988)}]{tseytlinF1}%
  \BibitemOpen
  \bibfield  {author} {\bibinfo {author} {\bibfnamefont {R.}~\bibnamefont
  {Metsaev}}\ and\ \bibinfo {author} {\bibfnamefont {A.~A.}\ \bibnamefont
  {Tseytlin}},\ }\bibfield  {title} {\bibinfo {title} {On loop corrections to
  string theory effective actions},\ }\href@noop {} {\bibfield  {journal}
  {\bibinfo  {journal} {Nucl. Phys. B}\ }\textbf {\bibinfo {volume} {298}},\
  \bibinfo {pages} {109} (\bibinfo {year} {1988})}\BibitemShut {NoStop}%
\bibitem [{\citenamefont {Tseytlin}(1986)}]{tseytlinF2}%
  \BibitemOpen
  \bibfield  {author} {\bibinfo {author} {\bibfnamefont {A.~A.}\ \bibnamefont
  {Tseytlin}},\ }\bibfield  {title} {\bibinfo {title} {Vector field effective
  action in the open superstring theory},\ }\href@noop {} {\bibfield  {journal}
  {\bibinfo  {journal} {Nucl. Phys. B}\ }\textbf {\bibinfo {volume} {276}},\
  \bibinfo {pages} {391} (\bibinfo {year} {1986})}\BibitemShut {NoStop}%
\bibitem [{\citenamefont {Coletti}\ \emph {et~al.}(2003)\citenamefont
  {Coletti}, \citenamefont {Sigalov},\ and\ \citenamefont {Taylor}}]{coletti}%
  \BibitemOpen
  \bibfield  {author} {\bibinfo {author} {\bibfnamefont {E.}~\bibnamefont
  {Coletti}}, \bibinfo {author} {\bibfnamefont {I.}~\bibnamefont {Sigalov}},\
  and\ \bibinfo {author} {\bibfnamefont {W.}~\bibnamefont {Taylor}},\
  }\bibfield  {title} {\bibinfo {title} {Abelian and nonabelian vector field
  effective actions from string field theory},\ }\href@noop {} {\bibfield
  {journal} {\bibinfo  {journal} {Journal of High Energy Physics}\ }\textbf
  {\bibinfo {volume} {2003}},\ \bibinfo {pages} {050} (\bibinfo {year}
  {2003})}\BibitemShut {NoStop}%
\bibitem [{\citenamefont {Cederwall}\ \emph {et~al.}(2001)\citenamefont
  {Cederwall}, \citenamefont {Nilsson},\ and\ \citenamefont
  {Tsimpis}}]{YMstringcorrections}%
  \BibitemOpen
  \bibfield  {author} {\bibinfo {author} {\bibfnamefont {M.}~\bibnamefont
  {Cederwall}}, \bibinfo {author} {\bibfnamefont {B.~E.}\ \bibnamefont
  {Nilsson}},\ and\ \bibinfo {author} {\bibfnamefont {D.}~\bibnamefont
  {Tsimpis}},\ }\bibfield  {title} {\bibinfo {title} {The structure of
  maximally supersymmetric {Yang}-{Mills} theory: constraining higher-order
  corrections},\ }\href@noop {} {\bibfield  {journal} {\bibinfo  {journal}
  {Journal of High Energy Physics}\ }\textbf {\bibinfo {volume} {2001}},\
  \bibinfo {pages} {034} (\bibinfo {year} {2001})}\BibitemShut {NoStop}%
\bibitem [{\citenamefont {Bergshoeff}\ \emph {et~al.}(1993)\citenamefont
  {Bergshoeff}, \citenamefont {Kallosh},\ and\ \citenamefont {Ortin}}]{SSW}%
  \BibitemOpen
  \bibfield  {author} {\bibinfo {author} {\bibfnamefont {E.~A.}\ \bibnamefont
  {Bergshoeff}}, \bibinfo {author} {\bibfnamefont {R.}~\bibnamefont
  {Kallosh}},\ and\ \bibinfo {author} {\bibfnamefont {T.}~\bibnamefont
  {Ortin}},\ }\bibfield  {title} {\bibinfo {title} {Supersymmetric string
  waves},\ }\href@noop {} {\bibfield  {journal} {\bibinfo  {journal} {Physical
  Review D}\ }\textbf {\bibinfo {volume} {47}},\ \bibinfo {pages} {5444}
  (\bibinfo {year} {1993})}\BibitemShut {NoStop}%
\bibitem [{\citenamefont {Tseytlin}(1997)}]{BITseytlin}%
  \BibitemOpen
  \bibfield  {author} {\bibinfo {author} {\bibfnamefont {A.~A.}\ \bibnamefont
  {Tseytlin}},\ }\bibfield  {title} {\bibinfo {title} {On non-abelian
  generalisation of {Born}-{Infeld} action in string theory},\ }\href@noop {}
  {\bibfield  {journal} {\bibinfo  {journal} {Nucl. Phys. B}\ }\textbf
  {\bibinfo {volume} {501}},\ \bibinfo {pages} {41} (\bibinfo {year}
  {1997})}\BibitemShut {NoStop}%
\bibitem [{\citenamefont {Tchrakian}(1993)}]{YMlovelock2}%
  \BibitemOpen
  \bibfield  {author} {\bibinfo {author} {\bibfnamefont {D.}~\bibnamefont
  {Tchrakian}},\ }\bibfield  {title} {\bibinfo {title} {Yang-{M}ills
  hierarchy},\ }\href@noop {} {\bibfield  {journal} {\bibinfo  {journal}
  {Differential Geometric Methods in Theoretical Physics (Singapore: World
  Scientific), (https://doi.org/10.1142/9789814536448)}\ } (\bibinfo {year}
  {1993})}\BibitemShut {NoStop}%
\bibitem [{\citenamefont {Schrodinger}(1935)}]{schrodinger1}%
  \BibitemOpen
  \bibfield  {author} {\bibinfo {author} {\bibfnamefont {E.}~\bibnamefont
  {Schrodinger}},\ }\bibfield  {title} {\bibinfo {title} {Contributions to
  {Born’s} new theory of the electromagnetic field},\ }\href@noop {}
  {\bibfield  {journal} {\bibinfo  {journal} {Proc. R. Soc. Lond. A}\ }\textbf
  {\bibinfo {volume} {150}},\ \bibinfo {pages} {465} (\bibinfo {year}
  {1935})}\BibitemShut {NoStop}%
\bibitem [{\citenamefont {Schr{\"o}dinger}(1943)}]{schrodinger2}%
  \BibitemOpen
  \bibfield  {author} {\bibinfo {author} {\bibfnamefont {E.}~\bibnamefont
  {Schr{\"o}dinger}},\ }\bibfield  {title} {\bibinfo {title} {A new exact
  solution in non-linear optics (two-wave-system)},\ }\href@noop {} {\bibfield
  {journal} {\bibinfo  {journal} {Proc. R. Ir. Acad.}\ }\textbf {\bibinfo
  {volume} {49}},\ \bibinfo {pages} {59} (\bibinfo {year} {1943})}\BibitemShut
  {NoStop}%
\bibitem [{\citenamefont {Trautman}(1983)}]{trautman1983}%
  \BibitemOpen
  \bibfield  {author} {\bibinfo {author} {\bibfnamefont {A.}~\bibnamefont
  {Trautman}},\ }\bibfield  {title} {\bibinfo {title} {Un th{\'e}or{\`e}me sur
  les champs de {Yang}-{Mills} isotropes},\ }\href@noop {} {\bibfield
  {journal} {\bibinfo  {journal} {C. R. Seances Acad. Sci.}\ }\textbf {\bibinfo
  {volume} {297}},\ \bibinfo {pages} {209} (\bibinfo {year}
  {1983})}\BibitemShut {NoStop}%
\bibitem [{\citenamefont {Tafel}(1986)}]{tafel}%
  \BibitemOpen
  \bibfield  {author} {\bibinfo {author} {\bibfnamefont {J.}~\bibnamefont
  {Tafel}},\ }\bibfield  {title} {\bibinfo {title} {Null solutions of the
  {Yang}-{Mills} equations},\ }\href@noop {} {\bibfield  {journal} {\bibinfo
  {journal} {letters in mathematical physics}\ }\textbf {\bibinfo {volume}
  {12}},\ \bibinfo {pages} {163} (\bibinfo {year} {1986})}\BibitemShut
  {NoStop}%
\bibitem [{\citenamefont {Deser}(1975)}]{Deser}%
  \BibitemOpen
  \bibfield  {author} {\bibinfo {author} {\bibfnamefont {S.}~\bibnamefont
  {Deser}},\ }\bibfield  {title} {\bibinfo {title} {Plane waves do not polarize
  the vacuum},\ }\href@noop {} {\bibfield  {journal} {\bibinfo  {journal}
  {Journal of Physics A: Mathematical and General}\ }\textbf {\bibinfo {volume}
  {8}},\ \bibinfo {pages} {1972} (\bibinfo {year} {1975})}\BibitemShut
  {NoStop}%
\bibitem [{\citenamefont {G{\"u}ven}(1987)}]{guven}%
  \BibitemOpen
  \bibfield  {author} {\bibinfo {author} {\bibfnamefont {R.}~\bibnamefont
  {G{\"u}ven}},\ }\bibfield  {title} {\bibinfo {title} {Plane waves in
  effective field theories of superstrings},\ }\href@noop {} {\bibfield
  {journal} {\bibinfo  {journal} {Physics Letters B}\ }\textbf {\bibinfo
  {volume} {191}},\ \bibinfo {pages} {275} (\bibinfo {year}
  {1987})}\BibitemShut {NoStop}%
\bibitem [{\citenamefont {Anandan}(1979)}]{gaugeclass}%
  \BibitemOpen
  \bibfield  {author} {\bibinfo {author} {\bibfnamefont {J.}~\bibnamefont
  {Anandan}},\ }\bibfield  {title} {\bibinfo {title} {Classification of gauge
  fields},\ }\href@noop {} {\bibfield  {journal} {\bibinfo  {journal} {Journal
  of Mathematical Physics}\ }\textbf {\bibinfo {volume} {20}},\ \bibinfo
  {pages} {260} (\bibinfo {year} {1979})}\BibitemShut {NoStop}%
\bibitem [{\citenamefont {Trautman}(1980)}]{null1}%
  \BibitemOpen
  \bibfield  {author} {\bibinfo {author} {\bibfnamefont {A.}~\bibnamefont
  {Trautman}},\ }\bibfield  {title} {\bibinfo {title} {A class of null
  solutions to {Yang}-{Mills} equations},\ }\href@noop {} {\bibfield  {journal}
  {\bibinfo  {journal} {Journal of Physics A: Mathematical and General}\
  }\textbf {\bibinfo {volume} {13}},\ \bibinfo {pages} {L1} (\bibinfo {year}
  {1980})}\BibitemShut {NoStop}%
\bibitem [{\citenamefont {Nurowski}(1992)}]{null2}%
  \BibitemOpen
  \bibfield  {author} {\bibinfo {author} {\bibfnamefont {P.}~\bibnamefont
  {Nurowski}},\ }\bibfield  {title} {\bibinfo {title} {Some null solutions of
  the {Yang}--{Mills} equations and {Cauchy}--{Riemann} structures},\
  }\href@noop {} {\bibfield  {journal} {\bibinfo  {journal} {Journal of
  mathematical physics}\ }\textbf {\bibinfo {volume} {33}},\ \bibinfo {pages}
  {802} (\bibinfo {year} {1992})}\BibitemShut {NoStop}%
\bibitem [{\citenamefont {Coley}\ \emph
  {et~al.}(2009{\natexlab{b}})\citenamefont {Coley}, \citenamefont {Hervik},
  \citenamefont {Papadopoulos},\ and\ \citenamefont {Pelavas}}]{hervikKundt}%
  \BibitemOpen
  \bibfield  {author} {\bibinfo {author} {\bibfnamefont {A.}~\bibnamefont
  {Coley}}, \bibinfo {author} {\bibfnamefont {S.}~\bibnamefont {Hervik}},
  \bibinfo {author} {\bibfnamefont {G.}~\bibnamefont {Papadopoulos}},\ and\
  \bibinfo {author} {\bibfnamefont {N.}~\bibnamefont {Pelavas}},\ }\bibfield
  {title} {\bibinfo {title} {Kundt spacetimes},\ }\href@noop {} {\bibfield
  {journal} {\bibinfo  {journal} {Classical and Quantum Gravity}\ }\textbf
  {\bibinfo {volume} {26}},\ \bibinfo {pages} {105016} (\bibinfo {year}
  {2009}{\natexlab{b}})}\BibitemShut {NoStop}%
\bibitem [{\citenamefont {Ortaggio}(2006)}]{ortaggioRT}%
  \BibitemOpen
  \bibfield  {author} {\bibinfo {author} {\bibfnamefont {M.}~\bibnamefont
  {Ortaggio}},\ }\bibfield  {title} {\bibinfo {title} {Higher dimensional
  spacetimes with a geodesic, shearfree, twistfree and expanding null
  congruence},\ }\href@noop {} {\bibfield  {journal} {\bibinfo  {journal}
  {Proc. XVII SIGRAV Conf. (Turin, 4–7 September 2006)
  (arXiv:gr-qc/0701036)}\ } (\bibinfo {year} {2006})}\BibitemShut {NoStop}%
\bibitem [{\citenamefont {Durkee}\ \emph {et~al.}(2010)\citenamefont {Durkee},
  \citenamefont {Pravda}, \citenamefont {Pravdov{\'a}},\ and\ \citenamefont
  {Reall}}]{ghpclanek}%
  \BibitemOpen
  \bibfield  {author} {\bibinfo {author} {\bibfnamefont {M.}~\bibnamefont
  {Durkee}}, \bibinfo {author} {\bibfnamefont {V.}~\bibnamefont {Pravda}},
  \bibinfo {author} {\bibfnamefont {A.}~\bibnamefont {Pravdov{\'a}}},\ and\
  \bibinfo {author} {\bibfnamefont {H.~S.}\ \bibnamefont {Reall}},\ }\bibfield
  {title} {\bibinfo {title} {Generalization of the {Geroch}--{Held}--{Penrose}
  formalism to higher dimensions},\ }\href@noop {} {\bibfield  {journal}
  {\bibinfo  {journal} {Classical and Quantum Gravity}\ }\textbf {\bibinfo
  {volume} {27}},\ \bibinfo {pages} {215010} (\bibinfo {year}
  {2010})}\BibitemShut {NoStop}%
\bibitem [{\citenamefont {Kuchynka}\ and\ \citenamefont
  {Pravdov{\'a}}(2017)}]{EMtheory}%
  \BibitemOpen
  \bibfield  {author} {\bibinfo {author} {\bibfnamefont {M.}~\bibnamefont
  {Kuchynka}}\ and\ \bibinfo {author} {\bibfnamefont {A.}~\bibnamefont
  {Pravdov{\'a}}},\ }\bibfield  {title} {\bibinfo {title} {Weyl type {N}
  solutions with null electromagnetic fields in the {Einstein}--{Maxwell}
  p-form theory},\ }\href@noop {} {\bibfield  {journal} {\bibinfo  {journal}
  {General Relativity and Gravitation}\ }\textbf {\bibinfo {volume} {49}},\
  \bibinfo {pages} {71} (\bibinfo {year} {2017})}\BibitemShut {NoStop}%
\bibitem [{\citenamefont {Coleman}(1977)}]{coleman}%
  \BibitemOpen
  \bibfield  {author} {\bibinfo {author} {\bibfnamefont {S.}~\bibnamefont
  {Coleman}},\ }\bibfield  {title} {\bibinfo {title} {Non-{Abelian} plane
  waves},\ }\href@noop {} {\bibfield  {journal} {\bibinfo  {journal} {Phys.
  Lett., B}\ }\textbf {\bibinfo {volume} {70}},\ \bibinfo {pages} {59}
  (\bibinfo {year} {1977})}\BibitemShut {NoStop}%
\bibitem [{\citenamefont {G{\"u}ven}(1979)}]{guvenEYM}%
  \BibitemOpen
  \bibfield  {author} {\bibinfo {author} {\bibfnamefont {R.}~\bibnamefont
  {G{\"u}ven}},\ }\bibfield  {title} {\bibinfo {title} {Solution for gravity
  coupled to non-{Abelian} plane waves},\ }\href@noop {} {\bibfield  {journal}
  {\bibinfo  {journal} {Physical Review D}\ }\textbf {\bibinfo {volume} {19}},\
  \bibinfo {pages} {471} (\bibinfo {year} {1979})}\BibitemShut {NoStop}%
\bibitem [{\citenamefont {Fuster}\ and\ \citenamefont {van
  Holten}(2005)}]{fusterEYM}%
  \BibitemOpen
  \bibfield  {author} {\bibinfo {author} {\bibfnamefont {A.}~\bibnamefont
  {Fuster}}\ and\ \bibinfo {author} {\bibfnamefont {J.-W.}\ \bibnamefont {van
  Holten}},\ }\bibfield  {title} {\bibinfo {title} {Type {III
  Einstein}-{Yang}-{Mills} solutions},\ }\href@noop {} {\bibfield  {journal}
  {\bibinfo  {journal} {Physical Review D}\ }\textbf {\bibinfo {volume} {72}},\
  \bibinfo {pages} {024011} (\bibinfo {year} {2005})}\BibitemShut {NoStop}%
\bibitem [{\citenamefont {Cariglia}\ \emph {et~al.}(2004)\citenamefont
  {Cariglia}, \citenamefont {Gibbons}, \citenamefont {G{\"u}ven},\ and\
  \citenamefont {Pope}}]{4Dsugra}%
  \BibitemOpen
  \bibfield  {author} {\bibinfo {author} {\bibfnamefont {M.}~\bibnamefont
  {Cariglia}}, \bibinfo {author} {\bibfnamefont {G.}~\bibnamefont {Gibbons}},
  \bibinfo {author} {\bibfnamefont {R.}~\bibnamefont {G{\"u}ven}},\ and\
  \bibinfo {author} {\bibfnamefont {C.}~\bibnamefont {Pope}},\ }\bibfield
  {title} {\bibinfo {title} {Non-{Abelian} pp-waves in {D= 4} supergravity
  theories},\ }\href@noop {} {\bibfield  {journal} {\bibinfo  {journal}
  {Classical and Quantum Gravity}\ }\textbf {\bibinfo {volume} {21}},\ \bibinfo
  {pages} {2849} (\bibinfo {year} {2004})}\BibitemShut {NoStop}%
\bibitem [{\citenamefont {Gleiser}\ and\ \citenamefont
  {Dotti}(2005)}]{adswaveYM}%
  \BibitemOpen
  \bibfield  {author} {\bibinfo {author} {\bibfnamefont {R.~J.}\ \bibnamefont
  {Gleiser}}\ and\ \bibinfo {author} {\bibfnamefont {G.}~\bibnamefont
  {Dotti}},\ }\bibfield  {title} {\bibinfo {title} {Plane fronted gravitational
  waves in {Lovelock}-{Yang}-{Mills} theory},\ }\href@noop {} {\bibfield
  {journal} {\bibinfo  {journal} {Physical Review D}\ }\textbf {\bibinfo
  {volume} {71}},\ \bibinfo {pages} {124029} (\bibinfo {year}
  {2005})}\BibitemShut {NoStop}%
\bibitem [{\citenamefont {Newman}(1978)}]{newmanYM}%
  \BibitemOpen
  \bibfield  {author} {\bibinfo {author} {\bibfnamefont {E.~T.}\ \bibnamefont
  {Newman}},\ }\bibfield  {title} {\bibinfo {title} {Source-free {Yang}-{Mills}
  theories},\ }\href@noop {} {\bibfield  {journal} {\bibinfo  {journal}
  {Physical Review D}\ }\textbf {\bibinfo {volume} {18}},\ \bibinfo {pages}
  {2901} (\bibinfo {year} {1978})}\BibitemShut {NoStop}%
\bibitem [{\citenamefont {Kuchynka}\ and\ \citenamefont
  {Pravdov{\'a}}(2016)}]{NNspacetimes}%
  \BibitemOpen
  \bibfield  {author} {\bibinfo {author} {\bibfnamefont {M.}~\bibnamefont
  {Kuchynka}}\ and\ \bibinfo {author} {\bibfnamefont {A.}~\bibnamefont
  {Pravdov{\'a}}},\ }\bibfield  {title} {\bibinfo {title} {Spacetimes of {Weyl}
  and {Ricci} type {N} in higher dimensions},\ }\href@noop {} {\bibfield
  {journal} {\bibinfo  {journal} {Classical and Quantum Gravity}\ }\textbf
  {\bibinfo {volume} {33}},\ \bibinfo {pages} {115006} (\bibinfo {year}
  {2016})}\BibitemShut {NoStop}%
\bibitem [{\citenamefont {Ortaggio}\ \emph {et~al.}(2007)\citenamefont
  {Ortaggio}, \citenamefont {Pravda},\ and\ \citenamefont
  {Pravdov{\'a}}}]{RicciHD}%
  \BibitemOpen
  \bibfield  {author} {\bibinfo {author} {\bibfnamefont {M.}~\bibnamefont
  {Ortaggio}}, \bibinfo {author} {\bibfnamefont {V.}~\bibnamefont {Pravda}},\
  and\ \bibinfo {author} {\bibfnamefont {A.}~\bibnamefont {Pravdov{\'a}}},\
  }\bibfield  {title} {\bibinfo {title} {Ricci identities in higher
  dimensions},\ }\href@noop {} {\bibfield  {journal} {\bibinfo  {journal}
  {Classical and Quantum Gravity}\ }\textbf {\bibinfo {volume} {24}},\ \bibinfo
  {pages} {1657} (\bibinfo {year} {2007})}\BibitemShut {NoStop}%
\end{thebibliography}

%apsrev4-2.bst 2019-01-14 (MD) hand-edited version of apsrev4-1.bst
%Control: key (0)
%Control: author (8) initials jnrlst
%Control: editor formatted (1) identically to author
%Control: production of article title (0) allowed
%Control: page (0) single
%Control: year (1) truncated
%Control: production of eprint (0) enabled
%

\end{document}